\newtheorem{theorem}{Theorem}[section]
\newtheorem{proposition}[theorem]{Proposition}
\newtheorem{lemma}[theorem]{Lemma}
\newtheorem{corollary}[theorem]{Corollary}
\newtheorem{remark}[theorem]{Remark}
\newtheorem{definition}[theorem]{Definition}
\newtheorem{example}[theorem]{Example}
\newtheorem{conjecture}[theorem]{Conjecture}
\newtheorem{assumption}{Assumption}
\newcommand{\longthmtitle}[1]{\mbox{}{\bf \textit{(#1).}}}
\newcommand{\real}{\ensuremath{\mathbb{R}}}
\newcommand{\realpos}{\ensuremath{\mathbb{R}_{>0}}}
\newcommand{\realnonneg}{\ensuremath{\mathbb{R}_{\ge 0}}}
\newcommand{\realnonpos}{\ensuremath{\mathbb{R}_{\le 0}}}
\newcommand{\setdef}[2]{\{#1 \; | \; #2\}}
\newcommand{\setdefb}[2]{\big\{#1 \; | \; #2\big\}}
\newcommand{\Cc}{\mathcal{C}}
\newcommand{\Hc}{\mathcal{H}}
\newcommand{\Lc}{\mathcal{L}}
\newcommand{\Nc}{\mathcal{N}}
\newcommand{\Oc}{\mathcal{O}}
\newcommand{\Pc}{\mathcal{P}}
\newcommand\bbf{\mathbf{b}}
\newcommand\dbf{\mathbf{d}}
\newcommand\mbf{\mathbf{m}}
\newcommand\qbf{\mathbf{q}}
\newcommand\ubf{\mathbf{u}}
\newcommand\wbf{\mathbf{w}}
\newcommand\xbf{\mathbf{x}}
\newcommand\ybf{\mathbf{y}}
\newcommand\zbf{\mathbf{z}}
\newcommand\Abf{\mathbf{A}}
\newcommand\Bbf{\mathbf{B}}
\newcommand\Dbf{\mathbf{D}}
\newcommand\Ibf{\mathbf{I}}
\newcommand\Kbf{\mathbf{K}}
\newcommand\Mbf{\mathbf{M}}
\newcommand\Pbf{\mathbf{P}}
\newcommand\Qbf{\mathbf{Q}}
\newcommand\Rbf{\mathbf{R}}
\newcommand\Wbf{\mathbf{W}}
\newcommand\deltab{\boldsymbol{\delta}}
\newcommand\sigmab{{\boldsymbol{\sigma}}}
\newcommand\Sigmab{\boldsymbol{\Sigma}}
\newcommand\Gammab{\boldsymbol{\Gamma}}
\newcommand\xib{\boldsymbol{\xi}}
\newcommand\nub{\boldsymbol{\nu}}
\newcommand\Pib{\boldsymbol{\Pi}}
\newcommand\ellb{{\boldsymbol{\ell}}}
\newcommand{\ones}{\mathbf{1}}
\newcommand{\zeros}{\mathbf{0}}
\newcommand{\diag}{\text{diag}}
\newcommand{\until}[1]{\{1,\dots,#1\}}
\renewcommand\det{\text{det}}
\newcommand\izon{\in \{0, 1\}^n}
\newcommand\range{\text{range}}
\newcommand\ssm{{\raisebox{2pt}{\scriptsize
      $\mathfrak{0}$}}}%
\newcommand\ssp{{\raisebox{2pt}{\scriptsize
      $\mathfrak{1}$}}}%
\newcommand\s{{\rm s}}
\newcommand\zls{\{0, \ell, \s\}}
\renewcommand{\footnoterule}{%
  \hspace{3pt} \hrule width 0.4\textwidth height 0.5pt
  \kern 2pt
}
\newcommand{\oprocendsymbol}{\hbox{$\square$}}
\newcommand{\oprocend}{\relax\ifmmode\else\unskip\hfill\fi\oprocendsymbol}
\tikzset{->-/.style={decoration={
  markings,
  mark=at position #1 with {\arrow{>}}},postaction={decorate}}}
\title{\Large \bf Hierarchical Selective Recruitment in
  Linear-Threshold Brain Networks
  \\
  Part I: Single-Layer Dynamics and Selective Inhibition\thanks{A
    preliminary version appeared as~\cite{EN-JC:18-acc} at the
    American Control Conference.}}
\author{Erfan Nozari \quad Jorge Cort\'es\thanks{Erfan Nozari is with
    the Department of Electrical and Systems Engineering, University
    of Pennsylvania, Philadelphia, PA 19104,
    enozari@seas.upenn.edu. Jorge Cort\'es is with the Department of
    Mechanical and Aerospace Engineering, UC San Diego, La Jolla, CA
    92093, cortes@ucsd.edu.}}
\begin{document}

\maketitle
\thispagestyle{empty}
\pagestyle{empty}

\begin{abstract}
  Goal-driven selective attention (GDSA) refers to the brain's
  function of prioritizing the activity of a task-relevant subset of
  its overall network to efficiently process relevant information
  while inhibiting the effects of distractions. Despite decades of
  research in neuroscience, a comprehensive understanding of GDSA is
  still lacking.  We propose a novel framework using concepts and
  tools from control theory as well as insights and structures from
  neuroscience. Central to this framework is an information-processing
  hierarchy with two main components: selective inhibition of
  task-irrelevant activity and top-down recruitment of task-relevant
  activity.  We analyze the internal dynamics of each layer of the
  hierarchy described as a network with linear-threshold dynamics and
  derive conditions on its structure to guarantee existence and
  uniqueness of equilibria, asymptotic stability, and boundedness of
  trajectories.  We also provide mechanisms that enforce selective
  inhibition using the biologically-inspired schemes of feedforward
  and feedback inhibition. Despite their differences, both lead to the
  same conclusion: the intrinsic dynamical properties of the
  (not-inhibited) task-relevant subnetworks are the sole determiner of
  the dynamical properties that are achievable under selective
  inhibition.
\end{abstract}

\section{Introduction}\label{sec:intro}

The human brain is constantly under the influx of sensory inputs and
is responsible for integrating and interpreting them to generate
appropriate decisions and actions.  This influx contains not only the
pieces of information relevant to the present task(s), but also a
myriad of distractions.  Goal-driven selective attention (GDSA) refers
to the active selective processing of a subset of information influx
while suppressing the effects of others, and is vital for the proper
function of the brain.%
\footnote{Note the distinction with stimulus-driven selective
  attention (the reactive shift of focus based on saliency of stimuli)
  which is not the focus here.} 
Examples range from selective audition in a crowded place to selective
vision in cluttered environments to selective taste/smell in food.  As
a result, a long standing question in neuroscience involves
understanding the brain's complex mechanisms underlying selective
attention~\cite{JS:90,ECC:53,RD-JD:95,LI-CK:01,NL-AH-JWD-EV:04,AG-ACN:12}.

A central element in addressing this question is the role played by
the hierarchical organization of the brain~\cite{JTS-SK:14}. Broadly,
this organization places primary sensory and motor areas at the bottom
and integrative association areas (prefrontal cortex in particular) at
the top. Accordingly, sensory information is processed while flowing
up the hierarchy, where decisions are eventually made and transmitted
back down the hierarchy to generate motor actions.%
\footnote{Note that the role of memory (being distributed across the
  brain) is implicit in this simplified stimulus-response
  description. Indeed, many sensory inputs only form memories (without
  motor response) or many motor actions result chiefly from memory
  (without sensory stimulation). The hierarchical aspect is
  nevertheless present.}  The top-down direction is also responsible
for GDSA, where the higher-order areas differentially ``modulate'' the
activity of the lower-level areas such that only relevant information
is further processed. This phenomenon constitutes the basis for GDSA
and has been the subject of extensive experimental research in
neuroscience, see
e.g.,~\cite{DEB:58,AMT:69,JM-RD:85,BCM:93,RD-JD:95,SK-PD-RD-LGU:98,MAP-GMD-SK:04,NL:05,JJF-ACS:11,HP:16,MG-KH-EN:16}. However,
a complete understanding of how, when (how quick), or where (within
the hierarchy) it occurs is still lacking. In particular, the
relationship between GDSA and the dynamics of the involved neuronal
networks is poorly understood. Our goal is to address this gap from a
model-based perspective, resorting to control-theoretic tools to
explain various aspects of GDSA in terms of the synaptic network
structure and the dynamics that emerge from~it.

In this work, we propose a theoretical framework, termed
\emph{hierarchical selective recruitment (HSR)}, to explain the
network dynamics underlying GDSA.  This framework consists of a novel
hierarchical model of brain organization (though composed of
well-established sub-models at each layer), a set of analytical
results regarding the multi-timescale dynamics of this model, and a
careful translation between the properties of these dynamics and well
known experimental observations about GDSA.  The starting point in the
development of HSR is the observation that different stimuli, in
particular the task-relevant and task-irrelevant ones, are processed
by different populations of neurons (see,
e.g.,~\cite{JM-RD:85,BCM:93,RD-JD:95,SK-PD-RD-LGU:98,LI-CK:01,MAP-GMD-SK:04,AG-ACN:12,MG-KH-EN:16}). With
each neuronal population represented by a node in the overall neuronal
network of networks and based on extensive experimental research (see
below), HSR primarily relies on the \emph{selective inhibition} of the
task-irrelevant nodes and the \emph{top-down recruitment} of the
task-relevant nodes of each layer by the layer immediately above.
This paper analyzes the dynamics of individual layers as well as the
mechanisms for selective inhibition in a bilayer network. These
results set the basis for the study of the mechanisms for top-down
recruitment in multilayer networks in our accompanying
work~\cite{EN-JC:21-tacII}.

\subsubsection*{Literature Review}

In this work we use dynamical networks with linear-threshold
nonlinearities (the unbounded version also called rectified linear
units, ReLU, in machine learning) to model the activity of neuronal
populations.  Linear-threshold models allow for a unique combination
between the tractability of linear systems and the dynamical
versatility of nonlinear systems, and thus have been widely used in
computational neuroscience.  They were first proposed as a model for
the lateral eye of the horseshoe crab in~\cite{FR-HKH:74} and their
dynamical behavior has been studied at least as early
as~\cite{KPH:73}. A detailed stability analysis of symmetric
(undirected) linear-threshold networks has been carried out in
continuous~\cite{RHRH-HSS-JJS:03} and discrete~\cite{ZY-LZ-JY-KKT:09}
time: however, this has limited relevance for biological neuronal
networks, which are fundamentally asymmetric (due to the presence of
excitatory and inhibitory neurons). Regarding asymmetric networks, it
was claimed (without proper proof) in~\cite{KPH-DK:87} that the
identity minus the matrix of synaptic connectivities being a P-matrix
is necessary and sufficient for the existence and uniqueness of
equilibria (EUE), the negative of this matrix being totally Hurwitz is
necessary and sufficient for local asymptotic stability, and the
matrix of synaptic connectivities being absolutely Schur stable is
sufficient for global asymptotic stability. In addition to lacking
proper proof, these results were limited to fully-inhibitory
networks. The latter assertion was later proved rigorously
in~\cite{JF-KPH:96} for arbitrary networks, while we prove the first
two (except on certain sets of measure zero) here.  Around the same
time, \cite{MF-AT:95} considered the more general class of
monotonically non-decreasing activation functions and proved the
sufficiency of identity minus the matrix of synaptic connectivities
being a P-matrix for the uniqueness of equilibria (being only one of
the four implications we prove here) and the sufficiency of the same
matrix being Lyapunov diagonally stable for global asymptotic
stability (which we relax here by allowing for arbitrary quadratic
Lyapunov functions). This work was later generalized to discontinuous
neural networks (though not applicable to our model here)
in~\cite{MF-PN:03}. Also related is the work~\cite{IWS-ANW:69} showing
the necessity and sufficiency of identity minus the matrix of synaptic
connectivities being a P$_0$-matrix for EUE of similar systems but
with \emph{strictly} monotonically increasing activation
functions. The work~\cite{HZ-ZW-DL:14} provides a comprehensive review
of stability analysis of a range of continuous-time recurrent neural
networks, including the linear-threshold model.

Lyapunov-based methods have also been used in a number of later
studies for discrete-time linear-threshold
networks~\cite{ZY-KKT:04,WZ-JMZ:14,TS-IRP:16}, but the extension of
these results to continuous-time dynamics is unclear. In fact, the use
of Lyapunov-based techniques in continuous-time networks has remained
limited to planar dynamics~\cite{KCT-HT-WZ:05} and restrictive
conditions for boundedness of
trajectories~\cite{HW-WB-HR:01,KCT-HT-WZ:05}. Recently,~\cite{KM-AD-VI-CC:16}
presents interesting properties of competitive (i.e., fully
inhibitory) linear-threshold networks, particularly regarding the
emergence of oscillations.  However, the majority of neurons in
biological neuronal networks are excitatory, making the implications
of these results limited. Moreover, all the preceding works are
limited to networks with constant exogenous inputs whereas
time-varying inputs are essential for modeling inter-layer connections
in HSR.

A critical property of linear-threshold networks is that their
nonlinearity, while enriching their behavior beyond that of linear
systems, is piecewise linear. Accordingly, almost all the theoretical
analysis of these networks builds upon the formulation of them as
switched affine systems. There exists a vast literature on the
analysis of general switched linear/affine systems, see,
e.g.,~\cite{HL-PJA:09,DL:03,MKJJ:03}. Nevertheless, we have found that
the conditions obtained by applying these results to linear-threshold
dynamics are more conservative than the ones we obtain using direct
analysis of the system dynamics. This is mainly due to the fact that
such results, by the essence of their generality, are oblivious to the
particular structure of linear-threshold dynamics that can be
leveraged in direct analysis.

Selective inhibition has been the subject of extensive research in
neuroscience. A number of early
studies~\cite{RD-JD:95,JM-RD:85,BCM:93} provided evidence for a
mechanism of selective visual attention based on a biased competition
between the subnetwork of task-relevant nodes and the subnetwork of
task-irrelevant ones. In this model, nodes belonging to these
subnetworks compete at each layer by mutually suppressing the activity
of each other, and this competition is biased towards task-relevant
nodes by the layer immediately above. Later
studies~\cite{SK-PD-RD-LGU:98,MAP-GMD-SK:04} further supported this
theory using functional magnetic resonance imaging (fMRI) and
showed~\cite{KNS-MVP-SK:12}, in particular, the suppression of
activity of task-irrelevant nodes as a result of GDSA. This
suppression of activity is further shown to occur in multiple layers
along the hierarchy~\cite{SK-PD-MAP-MIE-RD-LGU:01}, grow with
increasing attention~\cite{GR-CDF-NL:97,DHO-MMF-MAP-SK:02}, and be
inversely related to the power of the task-irrelevant nodes' state
trajectories in the alpha frequency band ($\sim
8$-$14^\text{Hz}$)~\cite{JJF-ACS:11}. 

\subsubsection*{Statement of Contributions}
The contributions are twofold.  First, we analyze the internal
dynamics of a single-layer linear-threshold network as a basis for our
study of hierarchical structures.  Our results here provide a
comprehensive characterization of the dynamical properties of
linear-threshold networks. Specifically, we show that existence and
uniqueness of equilibria, asymptotic stability, and boundedness of
trajectories can be characterized using simple algebraic conditions on
the network structure in terms of the class of P-matrices (matrices
with positive principal minors), totally-Hurwitz matrices (those with
Hurwitz principal submatrices, shown to be a sub-class of P-matrices),
and Schur-stable matrices, respectively. In addition to forming the
basis of HSR, these results solve some long-standing open problems in
the characterization of linear-threshold
networks~\cite{KPH:73,KPH-DK:87,JF-KPH:96,HW-WB-HR:01,KCT-HT-WZ:05,KM-AD-VI-CC:16}
and are of independent interest.  Our analysis covers both the class
of unbounded (a.k.a. ReLU) as well as bounded linear-threshold
networks, where the latter is a piecewise-affine approximation of
sigmoidal neural networks, for which limited analytical results are
available. Our second contribution pertains the problem of selective
inhibition in a bilayer network.  Motivated by the mechanisms of
inhibition in the brain, we study feedforward and feedback mechanisms.
We provide necessary and sufficient conditions on the network
structure that guarantee selective inhibition of task-irrelevant nodes
at the lower-level while simultaneously guaranteeing various dynamical
properties of the resulting (partly inhibited, partly active)
subnetwork, including existence and uniqueness of equilibria and
asymptotic stability. Interestingly, under both mechanisms, these
conditions require that the (not-inhibited) task-relevant part of the
lower-level subnetwork intrinsically satisfies the same desired
dynamical properties. This is particularly important for selective
inhibition as asymptotic stability underlies it.  The results unveil
the important role of task-relevant nodes in constraining the
dynamical properties achievable under selective inhibition and have
implications for the number and centrality of nodes that need to be
inhibited for an unstable-in-isolation subnetwork to gain stability
through selective inhibition.  For subnetworks that are not stable as
a whole, these results provide conditions on the
task-relevant/irrelevant partitioning of the nodes that allow for
stabilization using inhibitory control.

\section{Preliminaries}\label{sec:prelims}

We introduce notational conventions and basic concepts on matrix
analysis and modeling of biological neuronal networks.

\subsection*{Notation}
Throughout the paper, we employ the following notation. We use
$\real$, $\realnonneg$, and $\realnonpos$ to denote the set of reals,
nonnegative reals, and nonpositive reals, respectively.  We use
bold-faced letters for vectors and matrices. $\ones_n$, $\zeros_n$,
$\ellb_n$, $\zeros_{p \times n}$, and $\Ibf_n$ stand for the
$n$-vector of all ones, the $n$-vector of all zeros, the $n$-vector of
all $\ell$'s, the $p$-by-$n$ zero matrix, and the identity $n$-by-$n$
matrix (we omit the subscripts when clear from the context).  Given a
vector $\xbf \in \real^n$, $x_i$ and $(\xbf)_i$ refer to its $i$th
component.  Given $\Abf \in \real^{p \times n}$, $a_{i j}$ refers to
the $(i, j)$th entry. For block-partitioned $\xbf$ and $\Abf$,
$\xbf_i$ and $\Abf_{ij}$ refer to the $i$th block of $\xbf$ and $(i,
j)$th block of $\Abf$, respectively.  In block representation of
matrices, $\star$ denotes arbitrary blocks whose value is immaterial
to the discussion. For $\Abf \in \real^{p \times n}$, $\range(\Abf)$
denotes the subspace of $\real^p$ spanned by the columns of $\Abf$.
If $\xbf$ and $\ybf$ are vectors, $\xbf \le \ybf$ denotes $x_i \le
y_i$ for all~$i$.  For symmetric $\Pbf \in \real^{n \times n}$, $\Pbf
> \zeros$ ($\Pbf < \zeros$) denotes that $\Pbf$ is positive (negative)
definite.  Given $\Abf \in \real^{n \times n}$, its element-wise
absolute value, determinant, spectral radius, and induced $2$-norm are
denoted by $|\Abf|$, $\det(\Abf)$, $\rho(\Abf)$, and $\|\Abf\|$,
respectively. Similarly, for $\xbf \in \real^n$, $\|\xbf\|$ is its
$2$-norm.  Likewise, for two matrices $\Abf$ and $\Bbf$, $\diag(\Abf,
\Bbf)$ denotes the block-diagonal matrix with $\Abf$ and $\Bbf$ on its
diagonal. Given a subspace $W$ of $\real^n$, $W^\perp$ denotes the
orthogonal complement of~$W$ in~$\real^n$. For $x \in \real$ and $m
\in \realpos \cup \{\infty\}$, %
$[x]_0^m = \min\{\max\{x, 0\}, m\}$, which is the projection of $x$
onto $[0, m]$. When $\xbf \in \real^n$ and $\mbf \in \realpos^n \cup
\{\infty\}^n$, we similarly define $[\xbf]_\zeros^\mbf =
[[x_1]_0^{m_1} \quad \cdots \quad [x_n]_0^{m_n}]^T$.  All
measure-theoretic statements are meant in the Lebesgue sense.

\subsection*{Matrix Analysis}

We here define and characterize several matrix classes of
interest that play a key role in the forthcoming discussion.

\begin{definition}\longthmtitle{Matrix
    classes}\label{def:matrix-classes}
  A matrix $\Abf \in \real^{n \times n}$ is
  \begin{enumerate}
  \item \emph{absolutely Schur stable} if $\rho(|\Abf|) < 1$;
  \item \emph{totally $\Lc$-stable}, denoted $\Abf \in \Lc$, if there
    exists $\Pbf = \Pbf^T > \zeros$ such that $(-\Ibf + \Abf^T
    \Sigmab) \Pbf + \Pbf (-\Ibf + \Sigmab \Abf) < \zeros$ for $\Sigmab
    = \diag(\sigmab)$ and all $ \sigmab \in \{0, 1\}^n$;
  \item \emph{totally Hurwitz}, denoted $\Abf \in \Hc$, if all the
    principal submatrices of $\Abf$ are Hurwitz;
  \item \emph{a P-matrix}, denoted $\Abf \in \Pc$, if all the
    principal minors of $\Abf$ are positive.
  \end{enumerate}
\end{definition}

In working with P-matrices, the principal pivot transform of a matrix
plays an important role. 
Given
\begin{align*}
  \Abf = 
  \begin{bmatrix}
    \Abf_{11} & \Abf_{12} \\ \Abf_{21} &
    \Abf_{22}     
  \end{bmatrix},
\end{align*}
with nonsingular $\Abf_{22}$, its principal pivot transform is the matrix
\begin{align*}
  \pi(\Abf) \triangleq 
  \begin{bmatrix}
    \Abf_{11} - \Abf_{12} \Abf_{22}^{-1} \Abf_{21} & \Abf_{12}
    \Abf_{22}^{-1} \\ -\Abf_{22}^{-1} \Abf_{21} & \Abf_{22}^{-1}
  \end{bmatrix}.
\end{align*}
Note that $\pi(\pi(\Abf)) = \Abf$.
The next result formalizes
several equivalent characterizations of P-matrices.

\begin{lemma}\longthmtitle{Properties of P-matrices~\cite{MF-VP:62,
      OS-MT:08}}\label{lem:p-mat}
  $\Abf \in \real^{n \times n}$ is a P-matrix if and only if any of the following holds:
  \begin{enumerate}
  \item $\Abf^{-1}$ is a P-matrix;
  \item all real eigenvalues of all the principal submatrices of $\Abf$ are
    positive;
  \item for any $\xbf \in \real^n \!\setminus\! \{\zeros\}$ there is $k$ such
    that $x_k (\Abf \xbf)_k \!>\! 0$;
  \item the principal pivot transform of $\Abf$ is a P-matrix. 
  \end{enumerate}
\end{lemma}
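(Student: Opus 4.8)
The plan is to show that each of conditions (i)--(iv) is equivalent to the defining property that \emph{all} principal minors of $\Abf$ are positive, relying on a handful of determinantal identities. Throughout I would use that $\Abf \in \Pc$ forces $\det(\Abf) > 0$ (so $\Abf$ is invertible), that every principal submatrix of a P-matrix is a P-matrix, and that conjugation by a signature matrix $\diag(\pm 1,\dots,\pm 1)$ leaves every principal minor unchanged (each is rescaled by a square $(\pm 1)^2 = 1$), hence preserves membership in $\Pc$. For (i), I would invoke Jacobi's identity: for $\alpha \subseteq \{1,\dots,n\}$ with complement $\bar\alpha$, $\det\big((\Abf^{-1})_{\alpha\alpha}\big) = \det(\Abf_{\bar\alpha\bar\alpha})/\det(\Abf)$ (empty minor $= 1$); since $\det(\Abf) > 0$, all principal minors of $\Abf^{-1}$ are positive exactly when all of $\Abf$ are, and $(\Abf^{-1})^{-1} = \Abf$ gives the converse. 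For (iv), the analogous statement for the principal pivot transform $\pi_T$ pivoting on an index set $T$ is $\det\big(\pi_T(\Abf)_{\alpha\alpha}\big) = \det(\Abf_{\beta\beta})/\det(\Abf_{TT})$ with $\beta = \alpha \triangle T$ (see~\cite{OS-MT:08}); since $\alpha \mapsto \alpha \triangle T$ is a bijection of the power set, positivity of all principal minors of $\pi_T(\Abf)$ amounts to all principal minors of $\Abf$ sharing the sign of $\det(\Abf_{TT})$, and taking $\alpha = T$ (so $\beta = \emptyset$) forces that common sign to be positive, which is exactly $\Abf \in \Pc$; the reverse follows from $\pi(\pi(\Abf)) = \Abf$.

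For (ii): if $\Abf \in \Pc$, it suffices to show a P-matrix $\Bbf \in \real^{m \times m}$ has no real eigenvalue in $\realnonpos$ (apply this to every principal submatrix). This follows from the elementary expansion $\det(\Bbf + t\Ibf) = \sum_{\alpha \subseteq \{1,\dots,m\}} t^{m - |\alpha|}\det(\Bbf_{\alpha\alpha})$, which is strictly positive for all $t \ge 0$ when $\Bbf \in \Pc$, so $-t$ is never an eigenvalue. Conversely, if all real eigenvalues of all principal submatrices of $\Abf$ are positive, then for each $\alpha$ the minor $\det(\Abf_{\alpha\alpha})$ is the product of the eigenvalues of $\Abf_{\alpha\alpha}$: nonreal eigenvalues occur in conjugate pairs contributing $|\lambda|^2 > 0$, and the real ones are positive, so $\det(\Abf_{\alpha\alpha}) > 0$.

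For (iii), which is the main obstacle: the implication (iii) $\Rightarrow$ (ii) is direct --- given a real eigenpair $(\lambda, \vbf)$ of a principal submatrix $\Abf_{SS}$, extend $\vbf$ by zeros to $\xbf \in \real^n$, so that $x_k(\Abf\xbf)_k = \lambda v_k^2$ for $k \in S$ and $0$ for $k \notin S$, whence (iii) forces $\lambda v_k^2 > 0$ for some $k$, i.e.\ $\lambda > 0$; combined with (ii) $\Leftrightarrow \Abf\in\Pc$ this gives (iii) $\Rightarrow \Abf\in\Pc$. The reverse implication $\Abf \in \Pc \Rightarrow$ (iii) is the classical Fiedler--Pt\'ak property and is the crux; I would prove it by contraposition through the key claim: \emph{if $\Bbf$ is a P-matrix and $\vbf > \zeros$, then $\Bbf\vbf$ has a strictly positive component.} Granting the claim, suppose $\xbf \neq \zeros$ with $x_k(\Abf\xbf)_k \le 0$ for all $k$; restrict to the support $S = \{k : x_k \neq 0\}$, so $\Abf_{SS}\xbf_S = (\Abf\xbf)_S$, and conjugate by $\Sigmab = \diag(\operatorname{sgn}(x_k))_{k \in S}$: the matrix $\Sigmab\Abf_{SS}\Sigmab$ maps the positive vector $\Sigmab\xbf_S$ to a nonpositive vector, so by the claim it is not a P-matrix; since conjugation preserves $\Pc$, neither is $\Abf_{SS}$, hence some principal minor of $\Abf$ is nonpositive. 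It then remains to prove the claim, by induction on $m$: the case $m = 1$ is immediate; for $m > 1$, assume $\Bbf\vbf \le \zeros$. If $(\Bbf\vbf)_j = 0$ for some $j$, eliminating index $j$ gives the Schur complement of $b_{jj} > 0$, a P-matrix of size $m - 1$ (by the identity used for (iv) with $T = \{j\}$) that maps the corresponding positive subvector to a nonpositive vector, contradicting the inductive hypothesis. Otherwise $\Bbf\vbf < \zeros$; picking any index $k$, the component $(\Bbf(\vbf + t\,\mathbf{e}_k))_k = (\Bbf\vbf)_k + t\,b_{kk} \to +\infty$ since $b_{kk} > 0$, so there is a smallest $t^* > 0$ at which some component of $\Bbf(\vbf + t\,\mathbf{e}_k)$ vanishes while the rest stay $\le 0$ and $\vbf + t^*\mathbf{e}_k > \zeros$, reducing to the previous case. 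I expect the delicate part to be exactly this induction --- organizing the Schur-complement and perturbation reductions so they close while keeping track of both the P-matrix property and the sign pattern of the image vector; the remaining pieces are routine.
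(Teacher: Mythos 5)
Your proposal is correct, but note that the paper itself offers no proof of this lemma at all: it is stated as a quotation of known results from Fiedler--Pt\'ak~\cite{MF-VP:62} and Slyusareva--Tsatsomeros~\cite{OS-MT:08}, so there is no ``paper proof'' to match against. What you have written is a self-contained reconstruction of the classical arguments, and it checks out. Parts (i), (ii), and (iv) reduce cleanly to the determinantal identities you cite (Jacobi's identity, the principal-minor expansion of $\det(\Bbf + t\Ibf)$, and the Schur-complement/symmetric-difference identity for the principal pivot transform), and your sign bookkeeping in (iv) --- using $\alpha = T$ to pin down $\det(\Abf_{TT}) > 0$ --- is exactly right. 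The substantive content is (iii): your reduction of the sign-nonreversal property to the claim that a P-matrix cannot map a positive vector to a nonpositive one (via restriction to the support and conjugation by a signature matrix, which preserves $\Pc$) is the standard Fiedler--Pt\'ak route, and your induction closes: in Case 1 the identity $(\Bbf/b_{jj})\vbf_{\bar j} = (\Bbf\vbf)_{\bar j}$ holds precisely because $(\Bbf\vbf)_j = 0$, and in Case 2 the affine dependence of each component of $\Bbf(\vbf + t\mathbf{e}_k)$ on $t$, together with $b_{kk} > 0$, guarantees a first positive time $t^*$ at which some component vanishes while all remain nonpositive. The only caveats are presentational: you invoke the Jacobi and pivot-transform identities without proof, which is consistent with the paper's own citation of these facts, and in the backward direction of (iv) the existence of $\pi(\Abf)$ already presupposes the invertibility of the pivot block, which you may wish to state explicitly.
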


The matrix classes in Definition~\ref{def:matrix-classes} have
important inclusion relationships, as shown next.

\begin{lemma}\longthmtitle{Inclusions among matrix
    classes}\label{lem:inc-mat-class}
  For $\Abf, \Wbf \in \real^{n \times n}$, we have
  \begin{enumerate}
  \item $\rho(|\Wbf|) < 1 \Rightarrow -\Ibf + \Wbf \in \Hc$;
  \item $\|\Wbf\| < 1 \Rightarrow \Wbf \in \Lc$;
  \item $\Wbf \in \Lc \Rightarrow -\Ibf + \Wbf \in \Hc$;
  \item $\Abf \in \Hc \Rightarrow -\Abf \in \Pc$.
  \end{enumerate} 
\end{lemma}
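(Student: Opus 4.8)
The plan is to treat the four inclusions separately, in each case reducing a statement about principal submatrices to an elementary spectral fact. For (i), note that an arbitrary principal submatrix of $-\Ibf + \Wbf$ has the form $-\Ibf + \Wbf_{\sigma\sigma}$ for an index set $\sigma$, and that $|\Wbf_{\sigma\sigma}|$ is the corresponding principal submatrix of the nonnegative matrix $|\Wbf|$; by monotonicity of the spectral radius under passing to principal submatrices of a nonnegative matrix (a consequence of Perron--Frobenius theory), $\rho(|\Wbf_{\sigma\sigma}|) \le \rho(|\Wbf|) < 1$. It then suffices to show $\rho(|\Mbf|) < 1 \Rightarrow -\Ibf + \Mbf$ Hurwitz, which is immediate: any eigenvalue $\lambda$ of $\Mbf$ satisfies $|\lambda| \le \rho(\Mbf) \le \rho(|\Mbf|) < 1$, hence $\mathrm{Re}(\lambda) - 1 \le |\lambda| - 1 < 0$. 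For (ii), I would take $\Pbf = \Ibf_n$ in the definition of $\Lc$, reducing the required inequality to $\Sigmab\Wbf + (\Sigmab\Wbf)^T < 2\Ibf$; since any $0$--$1$ diagonal matrix has $\|\Sigmab\| \le 1$, the largest eigenvalue of the symmetric matrix $\Sigmab\Wbf + (\Sigmab\Wbf)^T$ is at most $2\|\Sigmab\Wbf\| \le 2\|\Wbf\| < 2$, uniformly over the finitely many $\sigmab \in \{0,1\}^n$.

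For (iii), the key structural observation is that, given $\sigmab \in \{0,1\}^n$ with support $\sigma$, after permuting coordinates so that $\sigma = \{1,\dots,k\}$, the matrix $-\Ibf + \Sigmab\Wbf$ is block upper triangular with diagonal blocks $-\Ibf_k + \Wbf_{\sigma\sigma}$ and $-\Ibf_{n-k}$. Since $(-\Ibf + \Wbf^T\Sigmab) = (-\Ibf + \Sigmab\Wbf)^T$, the $\Lc$-inequality for this $\sigmab$ is exactly $\Abf^T\Pbf + \Pbf\Abf < \zeros$ with $\Abf = -\Ibf + \Sigmab\Wbf$ and $\Pbf = \Pbf^T > \zeros$, so by the Lyapunov theorem $\Abf$ is Hurwitz; a diagonal block of a Hurwitz block-triangular matrix is Hurwitz, hence $-\Ibf_k + \Wbf_{\sigma\sigma}$ is Hurwitz. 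As every principal submatrix of $-\Ibf + \Wbf$ arises in this way by an appropriate choice of $\sigmab$, we conclude $-\Ibf + \Wbf \in \Hc$.

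For (iv), an arbitrary principal submatrix of $-\Abf$ is $-\Abf_{\sigma\sigma}$, and $\Abf_{\sigma\sigma}$ is Hurwitz since $\Abf \in \Hc$; hence $-\Abf_{\sigma\sigma}$ has all its eigenvalues in the open right half-plane, so the corresponding principal minor $\det(-\Abf_{\sigma\sigma})$, being the product of those eigenvalues, is positive (real eigenvalues are positive and non-real ones occur in conjugate pairs contributing $|\lambda|^2 > 0$). Thus all principal minors of $-\Abf$ are positive, i.e. $-\Abf \in \Pc$.

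I do not expect a serious obstacle here; the two points that deserve a word of justification are the monotonicity of the Perron root used in (i) (citing Perron--Frobenius) and, if a self-contained treatment of (iii) is preferred over invoking the Lyapunov theorem, the standard fact that $\Abf^T\Pbf + \Pbf\Abf < \zeros$ with $\Pbf > \zeros$ forces $\mathrm{Re}(\lambda) < 0$ for every eigenvalue $\lambda$ of $\Abf$, obtained by testing the inequality on a (possibly complex) eigenvector of $\Abf$.
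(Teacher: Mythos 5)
Your proof is correct in all four parts, and for (i) and (ii) it coincides with the paper's argument: (i) is exactly the chain $\rho(|\Wbf_{\sigma\sigma}|)\le\rho(|\Wbf|)<1$ followed by $\rho(\Wbf_{\sigma\sigma})\le\rho(|\Wbf_{\sigma\sigma}|)$ (the paper cites these as facts from a matrix-analysis reference rather than Perron--Frobenius directly), and (ii) is the same choice $\Pbf=\Ibf_n$, for which your explicit bound $\lambda_{\max}(\Sigmab\Wbf+(\Sigmab\Wbf)^T)\le 2\|\Wbf\|<2$ supplies the detail the paper leaves as ``straightforward to check.'' For (iii) both proofs hinge on the same structural observation --- after a permutation, $-\Ibf+\Sigmab\Wbf$ is block triangular with the shifted principal submatrix as one diagonal block --- but they finish differently: you pass through the Lyapunov theorem to conclude that $-\Ibf+\Sigmab\Wbf$ is Hurwitz and then read off the spectrum of the diagonal block, whereas the paper conjugates the Lyapunov inequality itself and extracts its $(2,2)$ block, thereby producing an explicit Lyapunov certificate $\hat\Pbf_{22}$ for the principal submatrix. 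Your route is shorter and purely spectral; the paper's yields slightly more information (a quadratic certificate for each submatrix), though that extra information is not used. For (iv) the paper simply invokes the characterization of P-matrices via positivity of the real eigenvalues of all principal submatrices (Lemma~\ref{lem:p-mat}(ii)), while you prove the needed positivity of each $\det(-\Abf_{\sigma\sigma})$ directly by pairing conjugate eigenvalues; this makes your version self-contained at the cost of a few extra lines. No gaps.
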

\begin{proof}
  \emph{(i)}. From~\cite[Fact 4.11.19]{DSB:09}, we have that
  $\rho(|\Wbf_\sigmab|) < 1$ for any principal submatrix
  $\Wbf_\sigmab$ of $\Wbf$, which implies $\rho(\Wbf_\sigmab) < 1$
  by~\cite[Fact 4.11.17]{DSB:09}, implying the result.

  \emph{(ii)} It is straightforward to check that $\Pbf = \Ibf_n$
  satisfies $(-\Ibf + \Wbf^T \Sigmab) \Pbf + \Pbf (-\Ibf + \Sigmab
  \Wbf) < \zeros$ for all $\sigmab \in \{0, 1\}^n$.

  \emph{(iii)} Pick an arbitrary $\sigmab \in \{0, 1\}^n$ and let the
  permutation $\Pib \in \real^{n \times n}$ be such that
  $\Pib \Sigmab \Wbf \Pib^T = \begin{bmatrix} \zeros & \zeros \\
    \hat \Wbf_{21} & \hat \Wbf_{22} \end{bmatrix}$, where $\hat
  \Wbf_{22}$ is the principal submatrix of $\Wbf$ corresponding to
  $\sigmab$. Then
    \begin{align*}
      \Pbf(-\Ibf + \Sigmab \Wbf) &= \Pbf \Pib^T \begin{bmatrix} -\Ibf & \zeros
          \\ \hat \Wbf_{21} & -\Ibf + \hat \Wbf_{22} \end{bmatrix} \Pib
      \\
      &= \Pib^T \Big(\underbrace{\Pib \Pbf \Pib^T}_{\hat \Pbf}
        \begin{bmatrix} -\Ibf & \zeros \\ \hat \Wbf_{21} & -\Ibf + \hat
            \Wbf_{22} \end{bmatrix} \Big) \Pib
      \\
      &= \Pib^T \begin{bmatrix} \star & \star \\ \star & \hat
          \Pbf_{22} (-\Ibf + \hat \Wbf_{22}) \end{bmatrix} \Pib,
    \end{align*}
    where $\hat \Pbf = \begin{bmatrix} \hat \Pbf_{11} & \hat \Pbf_{12}
        \\ \hat \Pbf_{21} & \hat \Pbf_{22} \end{bmatrix} = \hat \Pbf^T >
    \zeros$. Thus, by assumption,
    \begin{align*}
      &\Pib^T \begin{bmatrix} \star & \star \\ \star & (-\Ibf +
          \hat \Wbf_{22}^T) \hat \Pbf_{22} + \hat \Pbf_{22} (-\Ibf + \hat
          \Wbf_{22}) \end{bmatrix} \Pib < \zeros
      \\
      &\Rightarrow \begin{bmatrix} \star & \star \\ \star &
          (-\Ibf + \hat \Wbf_{22}^T) \hat \Pbf_{22} + \hat \Pbf_{22} (-\Ibf + \hat
          \Wbf_{22}) \end{bmatrix} < \zeros
      \\
      &\Rightarrow (-\Ibf + \hat \Wbf_{22}^T) \hat \Pbf_{22} + \hat \Pbf_{22} (-\Ibf
      + \hat \Wbf_{22}) < \zeros,
    \end{align*}
    proving that $-\Ibf + \hat \Wbf_{22}$ is Hurwitz. Since $\sigmab$ is
    arbitrary, $-\Ibf + \Wbf$ is totally Hurwitz.
    
    \emph{(iv)} The result follows from Lemma~\ref{lem:p-mat}(ii).
\end{proof}

\begin{remark}\longthmtitle{Counterexamples for converses of
    Lemma~\ref{lem:inc-mat-class}}\label{rem:ce}
  {\rm The converse of the implications in
    Lemma~\ref{lem:inc-mat-class} do not hold, as shown in the
    following.  First, for a general matrix $\Wbf$, neither of
    $\rho(|\Wbf|)$ and $\|\Wbf\|$ is bounded by the other. The former
    is larger for $[8, 3; 2, -1]$, e.g., while the latter is larger
    for $[0, 0; 1, 0]$. However, if $\Wbf$ satisfies the \emph{Dale's
      law} (as many biological neuronal networks do), i.e., each
    column is either nonnegative or nonpositive, then $\Wbf = |\Wbf|
    \Dbf$ where $\Dbf$ is a diagonal matrix such that $|\Dbf| =
    \Ibf$. Then, $\|\Wbf\| = \| |\Wbf| \| \ge \rho(|\Wbf|),$ showing
    that, in this case, $\rho(|\Wbf|) < 1$ is a less restrictive
    condition. Further, $-\Ibf + \Wbf \in \Hc \not\Rightarrow
    \rho(|\Wbf|) < 1$ as seen, e.g., from $\Wbf = -2\Ibf$. The same
    example shows $\Wbf \in \Lc \not\Rightarrow \|\Wbf\| <
    1$. Likewise, $-\Ibf + \Wbf \in \Hc \not\Rightarrow \Wbf \in \Lc$,
    for which $[0.5, -3; 4, -1]$ serves as a counter example (note
    that $\Wbf \in \Lc$ is an LMI feasibility problem that can be
    checked using standard solvers such as MATLAB \texttt{feasp}
    function). Finally, $\Abf = [-1, -5, 0; 0, -1, -6; -1, 0, -1]$
    ensures the converse of Lemma~\ref{lem:inc-mat-class}(iv) does not
    hold either.  \oprocend}
\end{remark}
 
Figure~\ref{fig:mat-class} shows a summary of
Lemma~\ref{lem:inc-mat-class} and Remark~\ref{rem:ce}.

\begin{figure}
  \centering
  \includegraphics[width=.875\linewidth]{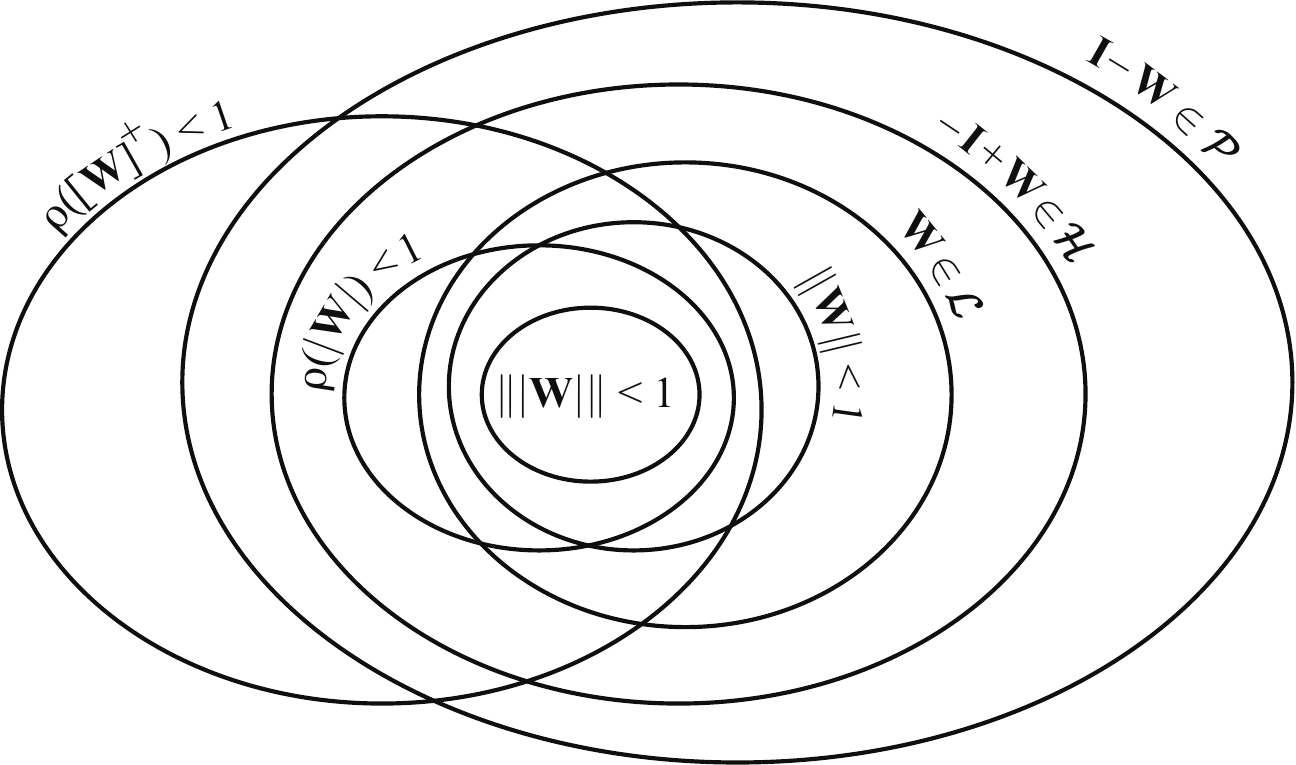}
  \caption{Inclusion relationships between the matrix classes
    introduced in Definition~\ref{def:matrix-classes} (cf.
    Lemma~\ref{lem:inc-mat-class}).}
  \label{fig:mat-class}
\vspace*{-1ex}
\end{figure}

\subsection*{Dynamical Rate Models of Brain
  Networks}

Here we briefly review, following~\cite[\S 7]{PD-LFA:01}, the
construction of the linear-threshold network model used throughout the
paper.  In a lumped model, neurons are the smallest unit of neuronal
circuits and the (directional) transmission of activity from one
neuron to another takes place at a \emph{synapse}, thus the terms
\emph{pre-synaptic} and \emph{post-synaptic} for the two neurons,
respectively.  Both the input and output signals mainly consist of a
sequence of spikes (action-potentials, Figure~\ref{fig:spike-rate} top
panel) which are modeled as impulse trains of the form
\begin{align*}
  \rho(t) = \sum_k \delta(t - t_k),
\end{align*}
where $\delta(\cdot)$ denotes the Dirac delta function. In many brain
areas, the exact timing $\{t_k\}$ of $\rho(t)$ seems highly random
while the firing rate (number of spikes per second,
Figure~\ref{fig:spike-rate} bottom panel) shows greater trial-to-trial
reproducibility. Therefore, a standard approximation is to model
$\rho(t)$ as the sample path of an inhomogeneous Poisson point process
with rate, say, $x(t)$.

Now, consider a pair of pre- and post-synaptic neurons with rates
$x_\text{pre}(t)$ and $x_\text{post}(t)$, respectively.  As a result
of $x_\text{pre}(t)$, an electrical current $I_\text{post}(t)$ flows
in the post-synaptic neuron.  Assuming fast synaptic dynamics,
$I_\text{post}(t) \propto x_\text{pre}(t)$. Let $w_\text{post,pre}$ be
the proportionality constant, so $I_\text{post}(t) = w_\text{post,pre}
x_\text{pre}(t)$. The \emph{pre-synaptic neuron} is called excitatory
if $w_\text{post,pre} > 0$ and inhibitory if $w_\text{post,pre} <
0$. In other words, excitatory neurons increase the activity of their
out-neighbors while inhibitory neurons decrease it.%
\footnote{While many brain networks, such as mammalian cortical
  networks, satisfy the Dale's law (cf. Remark~\ref{rem:ce}), all of
  our results in this work are applicable to arbitrary synaptic sign
  patterns.}  If the post-synaptic neuron receives input from multiple
neurons, $I_\text{post}(t)$ follows a superposition law,
\begin{align}\label{eq:super-pos}
  I_\text{post}(t) = \sum_j w_{\text{post}, j} x_j(t),
\end{align} 
where the sum is taken over its in-neighbors.

If $I_\text{post}$ is constant, the post-synaptic rate approximately
follows $x_\text{post} = F(I_\text{post})$, where $F$ is a nonlinear
``response function''. Among the two widely used response functions, %
sigmoidal and linear-threshold, we use the latter for its analytical
tractability: $F(\cdot) = [\, \cdot \, ]_0^{m_\text{post}}$. Finally,
if $I_\text{post}(t)$ is time-varying, $x_\text{post}(t)$ ``lags''
$F(I_\text{post}(t))$ with a time constant~$\tau$, i.e.,
\begin{align}\label{eq:res-func}
  \tau \dot x_\text{post}(t) =
    -x_\text{post}(t) + [I_\text{post}(t)]_0^{m_\text{post}}.
\end{align}
Equations~\eqref{eq:super-pos}-\eqref{eq:res-func} are the basis for
our network model described next.

\begin{figure}
  \centering
  \includegraphics[width=.9\linewidth]{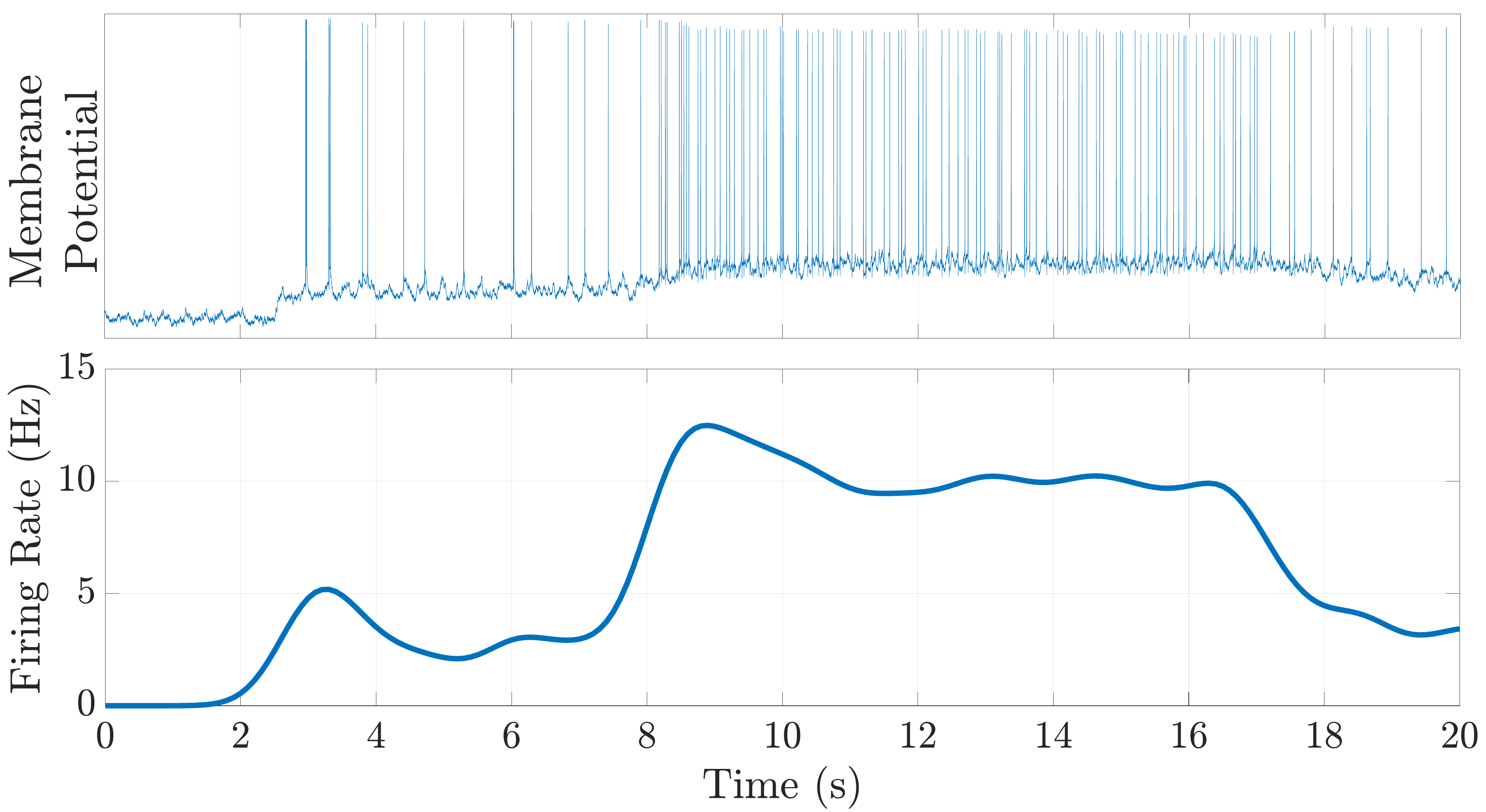}
\caption{A sample intracellular recording illustrating the spike train
  used for neuronal communication (top panel, measured
  intracellularly~\cite{DAH-ZB-JC-MAM-KDH-GB:00,DAH-KDH-ZB-JC-AM-HH-AS-GB:09-crcns})
  and the corresponding (estimate of) firing rate (bottom panel,
  estimated by binning spikes in $100^\text{ms}$ bins and smoothing
  with Gaussian window with $500^\text{ms}$ standard deviation).}
\label{fig:spike-rate} 
\vspace*{-1ex}
\end{figure}

\section{Problem Formulation}\label{sec:prob-form}

Consider a network of neurons evolving according
to~\eqref{eq:super-pos}-\eqref{eq:res-func}.  Since the number of
neurons in a brain region is very large, it is common to consider a
\emph{population of neurons} with similar activation patterns as a
single \emph{node} with the average firing rate of its neurons.  This
convention also has the advantage of getting more consistent rates, as
the firing pattern of individual neurons may be sparse.%
\footnote{Our discussion is nevertheless valid irrespective of whether
  network nodes represent individual neurons or groups of them.}
Combining the nodal rates in a vector $\xbf \in \real^n$ and synaptic
weights in a matrix $\Wbf \in \real^{n \times n}$, we obtain,
according to~\eqref{eq:super-pos}-\eqref{eq:res-func}, the
\emph{linear-threshold network dynamics}
\begin{align}\label{eq:dyn}
  \!\!\!\!\!\!\tau \dot \xbf(t) = -\xbf(t) + [\Wbf \xbf(t) +
  \dbf(t)]_\zeros^\mbf, \quad &\zeros \le \xbf(0) \le \mbf,
  \\
  \notag &\mbf \in \realpos^n \cup \{\infty\}^n.
\end{align}
The term $\dbf(t) \in \real^n$ captures the \emph{external inputs} to
the network, including un-modeled background activity and possibly
nonzero thresholds (i.e., if a node $i$ becomes active when $(\Wbf
\xbf + \dbf)_i > \vartheta_i$ for some threshold $\vartheta_i \neq
0$).

The vector of state upper bounds $\mbf$ can be finite ($\mbf \in
\realpos^n$) or infinite ($\mbf = \infty\ones_n$). Even though all
biological neurons eventually saturate for high input values, whether
finite or infinite $\mbf$ gives a more realistic/appropriate model can
vary from brain region to brain region depending on whether typical
(in vivo) values of $\xbf$ reach (near) saturation. Historically, the
unbounded case has in fact been used and studied more extensively,
both in computational neuroscience and machine learning. See, e.g.,
\cite{YA-DBR-KDM:13} and the references therein for evidence in favor
of unbounded activation functions. Surprisingly, however, the
analytical properties of the two cases are very similar, as we will
see throughout this work. Further, note that the right-hand side
of~\eqref{eq:dyn} is globally Lipschitz-continuous (though not smooth)
and therefore a unique continuously differentiable solution exists for
all $t \ge 0$~\cite[Thm 3.2]{HKK:02}. Moreover, it is straightforward
to show that if $\zeros \le \xbf(0) \le \mbf$ then $\zeros \le \xbf(t)
\le \mbf$ for all $t \ge 0$.

Consistent with the vision for hierarchical selective recruitment
(HSR) outlined in Section~\ref{sec:intro}, we consider a hierarchy of
linear-threshold networks of the form~\eqref{eq:dyn}, as depicted in
Figure~\ref{fig:multi}.  For each layer $i$, we use $\Nc_i$,
$\Nc_i^\ssp$, and $\Nc_i^\ssm, i \in \until{N}$ to denote the
corresponding subnetwork and its task-relevant and task-irrelevant
sub-subnetworks, respectively.

\begin{figure}[tb]
  \begin{center} 
    \begin{tikzpicture} 
          \node[circle, draw, line width=0.6pt, inner sep=-0.4pt] (1) {\includegraphics[width=16pt]{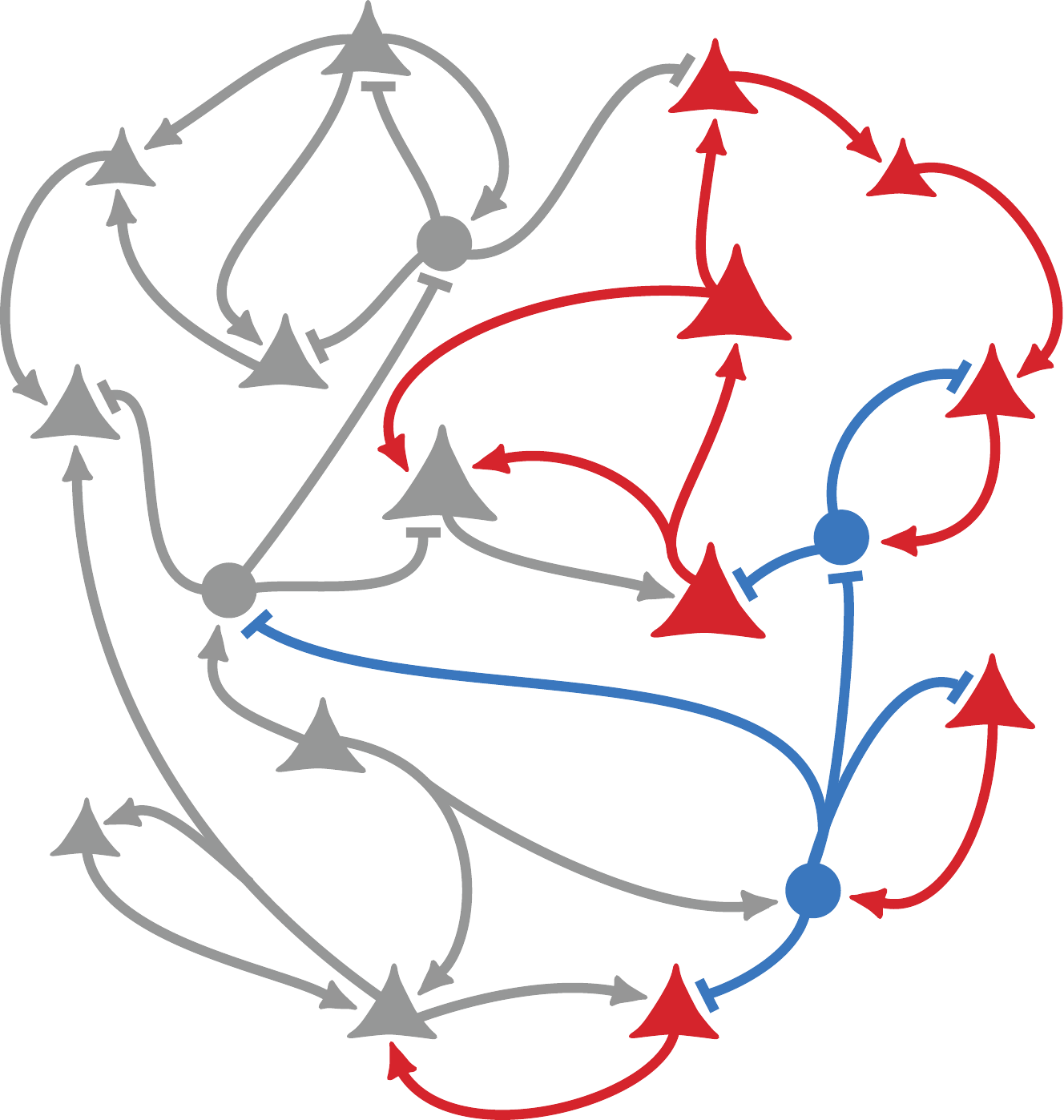}};
          \draw[-, line width=0.3pt] (1.north) to (1.south);
          \node[below of=1, yshift=-10pt, circle, draw, line width=0.6pt, inner sep=-0.4pt] (2) {\includegraphics[width=16pt]{EI_main_half_gray}};
          \draw[-, line width=0.3pt] (2.north) to (2.south);
          \node[below of=2, yshift=-10pt, circle, draw, line width=0.6pt, inner sep=-0.4pt] (3) {\includegraphics[width=16pt]{EI_main_half_gray}};
          \draw[-, line width=0.3pt] (3.north) to (3.south);
          \draw[latex-latex, line width=0.3pt, shorten <=1pt, shorten >=1pt] (1.270) to (2.90);
          \draw[latex-latex, line width=0.3pt, shorten <=1pt, shorten >=1pt] (2.270) to (3.90);
          \node[below of=3, yshift=10pt] {$\vdots$};
          \node[above of=1, yshift=-5pt] {$\vdots$};
      \node[left of=2, xshift=-20pt, yshift=40pt, scale=0.8] (i-1) {Subnetwork $i - 1$};
      \node[below of=i-1, yshift=-11pt, scale=0.8] (i) {Subnetwork $i$};
      \node[below of=i, yshift=-12pt, scale=0.8] (i+1) {Subnetwork $i + 1$};
      \node[right of=2, xshift=50pt, circle, draw, line width=1pt, inner sep=-2pt] (i-big) {\includegraphics[width=40pt]{EI_main_half_gray}};
      \draw[-] (i-big.north) to (i-big.south);
      \draw[shorten <=5pt, shorten >=25pt] (2.75) to (i-big.85);
      \draw[shorten <=5pt, shorten >=25pt] (2.285) to (i-big.275);
      \node[below of=i-big, xshift=-30pt, yshift=-25pt, scale=0.8] (i1) {\parbox{40pt}{\centering $\Nc_i^\ssm$ \\ (inhibited)}};
      \node[below of=i-big, xshift=30pt, yshift=-25pt, scale=0.8] (i2) {\parbox{40pt}{\centering $\Nc_i^\ssp$ \\ (recruited)}};
      \draw[-latex, bend right=0, shorten <=-8pt, shorten >=3pt] (i1.45) to (i-big.250);
      \draw[-latex, bend right=0, shorten <=-8pt, shorten >=3pt] (i2.135) to (i-big.290);
    \end{tikzpicture}
  \end{center} 
  \caption{The hierarchical network structure considered in this
    work. Each layer is only directly connected to the layers below
    and above it. Longer-range connections between non-successive
    layers do exist in thalamocortical hierarchies but are weaker than
    those between successive layers and are not considered in this
    work for simplicity.}
    \label{fig:multi}
\vspace*{-1ex}
\end{figure}

Even when considered in isolation, each layer of the network exhibits
rich dynamical behavior.  In fact, simulations of~\eqref{eq:dyn} with
random $\Wbf$ and $\dbf$ reveal that
\begin{itemize}
\item[--] locally, the dynamics may have zero, one, or many stable
  and/or unstable equilibrium points,
\item[--] globally, the dynamics can exhibit nonlinear phenomena such
  as limit cycles, multi-stability, and chaos,
\item[--] the state trajectories may grow unbounded (if $\mbf = \infty
  \ones_n$) if the excitatory subnetwork $[\Wbf]_0^\infty$ is
  sufficiently strong.
\end{itemize}
This richness of behavior can only increase if layers are subject to
time-varying inputs $\dbf(t)$ and, in particular, when interconnected
with other layers in the hierarchy.  Motivated by these observations,
our ultimate goal in this work is to tackle four problems:
\begin{enumerate}[wide]
\item the analysis of the relationship between structure ($\Wbf$) and
  dynamical behavior (basic properties such as existence and
  uniqueness of equilibria (EUE), asymptotic stability, and
  boundedness of trajectories) for each subnetwork when operating in
  isolation from the rest of the network ($\dbf(t) \equiv \dbf$);
\item the analysis of the conditions on the joint structure of each
  two successive layers $\Nc_i$ and $\Nc_{i+1}$ that allows for
  selective inhibition of $\Nc_{i+1}^\ssm$ by its input from $\Nc_i$,
  being equivalent to the stabilization of $\Nc_{i+1}^\ssm$ to $\zeros$
  (inactivity);
\item the analysis of the conditions on the joint structure of each
  two successive layers $\Nc_i$ and $\Nc_{i+1}$ that allows for
  top-down recruitment of $\Nc_{i+1}^\ssp$ by its input from $\Nc_i$,
  being equivalent to the stabilization of $\Nc_{i+1}^\ssp$ toward a
  desired trajectory set by $\Nc_i$ (activity);
\item the combination of (ii) and (iii) in a unified framework and its
  extension to the complete $N$-layer network of networks.
\end{enumerate} 
Problems (i) and (ii) are the focus of this paper, whereas we address
problems (iii) and (iv) in the accompanying
work~\cite{EN-JC:21-tacII}.

\section{Internal Dynamics of Single-Layer
  Networks}\label{sec:stability}

In this section, we provide an in-depth study of the basic dynamical
properties of the network dynamics~\eqref{eq:dyn} in isolation.  In
such case, the external input $\dbf(t)$ boils down to background
activity and possibly nonzero thresholds, which are constant relative
to the timescale~$\tau$. The dynamics~\eqref{eq:dyn} thus simplify to
\begin{align}\label{eq:dyn-d}
  \tau \dot \xbf(t) = -\xbf(t) + [\Wbf \xbf(t) + \dbf]_\zeros^\mbf,
  \quad &\zeros \le \xbf(0) \le \mbf,
  \\
  \notag &\mbf \in \realpos^n \cup \{\infty\}^n.
\end{align}
In the following, we derive conditions in terms of the network
structure for the existence and uniqueness of equilibria (EUE),
asymptotic stability, and boundedness of trajectories.

\subsection{Dynamics as Switched Affine System}

The nonlinear dynamics~\eqref{eq:dyn-d} is a switched affine system
with $2^n$ modes if $\mbf = \infty \ones_n$ or $3^n$ modes if $\mbf <
\infty \ones_n$.  Each mode of this system corresponds to a switching
index $\sigmab = \sigmab(\xbf) \in \zls^n$, where for each $i \in
\until{n}$, $\sigma_i = 0$ if the node is inactive (i.e., $(\Wbf \xbf
+ \dbf)_i \le 0$), $\sigma_i = \ell$ if the node is in linear regime
(i.e., $(\Wbf \xbf + \dbf)_i \in [0, m_i)$), and $\sigma_i = \s$ if
the node is saturated (i.e., $(\Wbf \xbf + \dbf)_i \ge m_i$). Clearly,
the mode of the system is state-dependent and each switching index
$\sigmab \in \zls^n$ corresponds to a switching region
\begin{align*}
  \Omega_\sigmab = \{\xbf \; | \; &(\Wbf \xbf + \dbf)_i \in (-\infty, 0], && \forall i \ \ {\rm s.t.}
  \ \ \sigma_i = 0, \text{ and}
  \\
  &(\Wbf \xbf + \dbf)_i \in [0, m_i], && \forall i \ \ {\rm s.t.} \ \
  \sigma_i = \ell, \text{ and}
  \\
  &(\Wbf \xbf + \dbf)_i \in [m_i, \infty), && \forall i \ \ {\rm s.t.}
  \ \ \sigma_i = \s\}.
\end{align*}
Within each $\Omega_\sigmab$, we have
\begin{align*}
  [\Wbf \xbf(t) + \dbf ]_\zeros^\mbf = \Sigmab^\ell (\Wbf \xbf(t) +
  \dbf) + \Sigmab^s \mbf,
\end{align*}
where $\Sigmab^\ell = \Sigmab^\ell(\sigmab)$ is a diagonal matrix with
$\Sigma^\ell_{ii} = 1$ if $\sigma_i = \ell$ and $\Sigma^\ell_{ii} = 0$
otherwise. $\Sigmab^\s$ is defined similarly, and we set the
convention that $\Sigmab^\s \mbf = \zeros$ if $\mbf = \infty
\ones_n$. Therefore, \eqref{eq:dyn-d} can be written in the equivalent
piecewise-affine form
\begin{align}\label{eq:dyn-switch}
  \tau \dot \xbf = (-\Ibf + \Sigmab^\ell \Wbf) \xbf + \Sigmab^\ell \dbf + \Sigmab^\s \mbf, \qquad
  \forall \xbf \in \Omega_\sigmab.
\end{align}
This switched representation of the dynamics motivates the following
assumptions on the weight matrix~$\Wbf$.

\begin{assumption}\label{as:1}
  Assume
  \begin{enumerate}
  \item $\det(\Wbf) \neq 0$;
  \item $\det(\Ibf - \Sigmab^\ell \Wbf) \neq 0$ for all the $2^n$
    matrices $\Sigmab^\ell(\sigmab), \sigmab \in \zls^n$.
  \end{enumerate}
\end{assumption}

Assumption~\ref{as:1} is not a restriction in practice since the set
of matrices for which it is not satisfied can be expressed as a finite
union of measure-zero sets, and hence has measure zero.  By
Assumption~\ref{as:1}(i), the system of equations $\Wbf \xbf + \dbf =
\zeros$ defines a non-degenerate set of $n$ hyperplanes partitioning
$\real^n$ into $2^n$ solid convex polytopic translated cones apexed at
$-\Wbf^{-1} \dbf$.%
\footnote{Recall that a set of $n$ hyperplanes is
  \emph{non-degenerate}~\cite{RTR-RJBW:98} if their intersection is a
  point or, equivalently, the matrix composed of their normal vectors
  is nonsingular.  A set $S \subseteq \real^n$ is called a
  \emph{polytope} if it has the form $S = \setdef{\xbf}{\Abf \xbf \le
    \bbf}$; a \emph{cone} if $c \xbf \in S$ for any $\xbf \in S, c \in
  \realnonneg$; a \emph{translated cone apexed at $\ybf$} if
  $\setdef{\xbf\!}{\!\xbf + \ybf \in S}$ is a cone; \emph{convex} if
  $(1 - \theta) \xbf + \theta \ybf \in S$ for any $\xbf, \ybf \in S,
  \theta \in [0, 1]$; and \emph{solid} if it has a non-empty interior.
}

Unlike linear systems, the existence of equilibria is not guaranteed
for~\eqref{eq:dyn-switch}.  Rather, each $\sigmab \in \zls^n$
corresponds to an \emph{equilibrium candidate}
\begin{align}\label{eq:h-uniq}
  \xbf^*_\sigmab = (\Ibf - \Sigmab^\ell \Wbf)^{-1} (\Sigmab^\ell \dbf + \Sigmab^\s \mbf),
\end{align}
which is an equilibrium if it belongs to $\Omega_\sigmab$.
We next identify conditions for this to be true.

\subsection{Existence and Uniqueness of Equilibria (EUE)}\label{subsec:eq}

The first step in analyzing any dynamical system is the
characterization of its equilibria. We being our analysis of the EUE
with the case of bounded activation functions ($\mbf \in \realpos^n$).

\begin{theorem}\longthmtitle{EUE}\label{thm:eue}
  The network dynamics~\eqref{eq:dyn-d} has a unique equilibrium for
  all $\dbf \in \real^n$ if and only if $\Ibf - \Wbf \in \Pc$.
\end{theorem}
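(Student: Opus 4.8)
The plan is to identify equilibria of~\eqref{eq:dyn-d} with fixed points of the piecewise-affine self-map $T_\dbf(\xbf) = [\Wbf\xbf + \dbf]_\zeros^\mbf$ of the box $[\zeros,\mbf]$, and then to prove both implications directly from the sign-reversal characterization of P-matrices in Lemma~\ref{lem:p-mat}(iii). (Equilibria are exactly the solutions of a box-constrained linear complementarity problem with matrix $\Ibf-\Wbf$, so the statement is the box analogue of the classical theorem that $\mathrm{LCP}(M,\cdot)$ is uniquely solvable iff $M$ is a P-matrix; but a self-contained argument is shorter and avoids importing that machinery.)

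\emph{Sufficiency.} Assume $\Ibf-\Wbf\in\Pc$. Since $T_\dbf$ is continuous and maps the compact convex set $[\zeros,\mbf]$ into itself, Brouwer's fixed-point theorem yields an equilibrium; note that in the bounded case existence is thus automatic, so the content is uniqueness. For uniqueness, let $\xbf,\ybf$ be equilibria and $\zbf=\xbf-\ybf$. Fix $k$ with $z_k\ne 0$, say $z_k>0$. Because each scalar map $[\,\cdot\,]_0^{m_k}$ is nondecreasing, $x_k>y_k$ precludes $(\Wbf\xbf+\dbf)_k<(\Wbf\ybf+\dbf)_k$, hence $(\Wbf\zbf)_k\ge 0$; because $[\,\cdot\,]_0^{m_k}$ is also nonexpansive, $0<z_k=|x_k-y_k|\le|(\Wbf\zbf)_k|=(\Wbf\zbf)_k$. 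Therefore $z_k\big((\Ibf-\Wbf)\zbf\big)_k = z_k\big(z_k-(\Wbf\zbf)_k\big)\le 0$, and the symmetric argument handles $z_k<0$ while $z_k=0$ is trivial. Thus $z_k\big((\Ibf-\Wbf)\zbf\big)_k\le 0$ for all $k$, so Lemma~\ref{lem:p-mat}(iii) forces $\zbf=\zeros$.

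\emph{Necessity.} Argue by contraposition: if $\Ibf-\Wbf\notin\Pc$, Lemma~\ref{lem:p-mat}(iii) gives $\zbf\ne\zeros$ with $z_k\big(z_k-(\Wbf\zbf)_k\big)\le 0$ for all $k$, i.e. $(\Wbf\zbf)_k\ge z_k$ when $z_k>0$ and $(\Wbf\zbf)_k\le z_k$ when $z_k<0$. I would then exhibit, for a suitable $\dbf$, two distinct equilibria $\xbf^1,\xbf^2$ with $\xbf^1-\xbf^2=c\zbf$, namely $(\xbf^1)_i=c\max\{z_i,0\}$ and $(\xbf^2)_i=c\max\{-z_i,0\}$, with $c>0$ small enough that $\xbf^1,\xbf^2\in[\zeros,\mbf]$ (possible since $\mbf\in\realpos^n$). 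The input is set coordinatewise: for $i$ with $z_i>0$ take $d_i=cz_i-(\Wbf\xbf^1)_i$, so node $i$ is in the linear regime at $\xbf^1$ (for small $c$) and, using $(\Wbf\zbf)_i\ge z_i$, inactive at $\xbf^2$; for $z_i<0$ take symmetrically $d_i=-cz_i-(\Wbf\xbf^2)_i$; for $z_i=0$ take $d_i$ negative enough that node $i$ is inactive at both points. Checking the defining inequalities of the switching regions $\Omega_{\sigmab}$ for the two corresponding switching indices reduces exactly to the displayed sign conditions, so $\xbf^1$ and $\xbf^2$ are both equilibria for this $\dbf$, contradicting uniqueness for all $\dbf$.

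The uniqueness step is essentially one line once one notices that monotonicity and nonexpansiveness of the clipping together pin the sign of each component of $(\Ibf-\Wbf)\zbf$, and that the whole P-matrix hypothesis is used only there — unlike the unbounded (ReLU) case, existence is free in the bounded case. The step I expect to require the most care is the necessity construction: one must verify the candidate points genuinely lie in the closures of their switching regions, the bookkeeping for the $z_i=0$ coordinates being the fussiest part. The standing Assumption~\ref{as:1} is not needed for this argument; it only removes the measure-zero degenerate weight matrices.
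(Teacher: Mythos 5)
Your argument for the bounded case $\mbf \in \realpos^n$ is correct and takes a genuinely more elementary route than the paper's. The paper proves uniqueness by viewing $\xbf \mapsto \xbf - [\Wbf\xbf+\dbf]_\zeros^\mbf$ as a coherently oriented piecewise-affine map on a chamber system and invoking the Kuhn--L\"owen bijectivity and injectivity theorems, whereas your monotonicity-plus-nonexpansiveness computation pins down $z_k\big((\Ibf-\Wbf)\zbf\big)_k \le 0$ for every $k$ in a few lines and closes with Lemma~\ref{lem:p-mat}(iii); this is essentially the Forti--Tesi argument the paper credits for one of the four implications. Your necessity construction is also sound and has the virtue of being explicit where the paper's is indirect (the paper only deduces non-injectivity abstractly): the sign conditions $(\Wbf\zbf)_i \ge z_i$ for $z_i>0$ and $(\Wbf\zbf)_i \le z_i$ for $z_i<0$ are exactly what makes node $i$ sit in the linear regime at one candidate and be inactive at the other, and pushing $d_i$ sufficiently negative handles the $z_i=0$ coordinates. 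You are also right that Assumption~\ref{as:1} plays no role in your version.

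The gap is the unbounded case. Theorem~\ref{thm:eue} is stated for $\mbf \in \realpos^n \cup \{\infty\}^n$, and when $\mbf = \infty\ones_n$ your existence step collapses: $T_\dbf$ maps $\realnonneg^n$ into itself, but that set is not compact, so Brouwer gives nothing. Your uniqueness computation and your necessity construction survive verbatim in this regime (the latter becomes easier, since no upper bound on $c$ is needed), but the existence of at least one equilibrium under $\Ibf-\Wbf\in\Pc$ is precisely the nontrivial content of the ReLU case; your closing remark acknowledges this asymmetry without supplying the missing argument. The paper fills it by rewriting $\xbf=[\Wbf\xbf+\dbf]_\zeros^\infty$ as a standard linear complementarity problem with matrix $\Ibf-\Wbf$ and invoking Murty's theorem, which yields existence and uniqueness simultaneously; a self-contained alternative would be a properness-plus-coherent-orientation (degree) argument showing that $\xbf\mapsto\xbf-[\Wbf\xbf+\dbf]_\zeros^\infty$ is surjective. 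As written, your proof establishes the theorem only for bounded~$\mbf$.
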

\begin{proof}
  Despite their similarity, different equivalences need to be
  established and results need to be invoked for the bounded and
  unbounded cases. Therefore, we prove the result separately for each
  case.

  \emph{Case 1: $\mbf < \infty \ones_n$.} The existence of equilibria
  is guaranteed by the Brouwer fixed point theorem~\cite{LEJB:11} for
  all $\Wbf$ and all $\dbf$. We use results from~\cite{DK-RL:87} to
  characterize uniqueness. Following the terminology therein, the set
  $\Cc = \setdef{\Omega_\sigmab}{\sigmab \in \zls^n}$ is a chamber
  system and its branching number is $4$ by Assumption~\ref{as:1}. Let
  $f(\xbf; \dbf) = \xbf - [\Wbf \xbf + \dbf]_\zeros^\mbf$ which, for
  any $\dbf$, is piecewise-affine on the chamber system $\Cc$
  by~\eqref{eq:dyn-switch} and is proper since $\|f(\xbf; \dbf)\| \to
  \infty$ whenever $\|\xbf\| \to \infty$.

  First, assume that $\Ibf - \Wbf \in \Pc$. Then, $f$ is coherently
  oriented by definition and thus~\cite[Thm 5.3]{DK-RL:87} ensures
  that $f$ is bijective. In particular, there exists a unique $\xbf$
  such that $f(\xbf; \dbf) = \zeros$, giving uniqueness of equilibria
  for any $\dbf$.

  Now, assume that $\Ibf - \Wbf \notin \Pc$. Since the determinant of
  $\Ibf - \Sigmab^\ell \Wbf$ is always positive on the chamber
  $\Omega_\zeros$, $f$ cannot be coherently oriented, thus not
  bijective by~\cite[Thm 5.3]{DK-RL:87}, and thus not injective
  by~\cite[Cor 5.2]{DK-RL:87}. Therefore, there exists $\xbf_1,
  \xbf_2, \zbf \in \real^n$ such that $\xbf_1 \neq \xbf_2$ but
  \begin{align*}
    \xbf_1 - [\Wbf \xbf + \dbf]_\zeros^\mbf = \zbf = \xbf_2 - [\Wbf
    \xbf + \dbf]_\zeros^\mbf,
  \end{align*}
  where $\dbf \in \real^n$ is arbitrary. Therefore, $f(\, \cdot \,;
  \Wbf \zbf + \dbf)$ has two distinct roots, $\xbf_1 - \zbf$ and
  $\xbf_2 - \zbf$, proving the necessity of $\Ibf - \Wbf \in \Pc$ for
  uniqueness of equilibria.

  \emph{Case 2: $\mbf = \infty \ones_n$.} In this case, we simply show
  the equivalence between our equilibrium equation $\xbf = [\Wbf \xbf
  + \dbf]_\zeros^\infty$ and the well-studied linear complementarity
  problem (LCP). By Assumption~\ref{as:1},
  \begin{align}\label{eq:eq-mult}
    \xbf = [\Wbf \xbf + \dbf]_\zeros^\infty \Leftrightarrow \Wbf \xbf
    + \dbf = \Wbf [\Wbf \xbf + \dbf]_\zeros^\infty + \dbf.
  \end{align}
  We next perform a change of variables as follows. Let
  \begin{align*}
    \zbf = [\Wbf \xbf + \dbf]_\zeros^\infty \qquad \text{and} \qquad
    \wbf = [-\Wbf \xbf - \dbf]_\zeros^\infty.
  \end{align*}
  These vectors have the properties that
  \begin{align*}
    \zbf, \wbf \in \realnonneg^n, \qquad \zbf^T \wbf = 0, \quad
    \text{and} \quad \Wbf \xbf + \dbf = \zbf - \wbf.
  \end{align*}
  and thus provide a unique (invertible) characterization of
  $\xbf$. Therefore, \eqref{eq:eq-mult} is equivalent to
  \begin{align*}
    \wbf = (\Ibf - \Wbf) \zbf - \dbf, \qquad \zbf, \wbf \in
    \realnonneg^n, \qquad \zbf^T \wbf = 0,
  \end{align*}
  which is the standard LCP and has a unique solution $(\zbf, \wbf)$
  for all $\dbf \in \real^n$ if and only if $\Ibf - \Wbf \in
  \Pc$~\cite{KGM:72}. Similar to Case 1, it can also be shown that if
  $\Ibf - \Wbf \notin \Pc$, there exists at least one $\dbf$ for which
  two equilibrium points exists (see, e.g., the proof of~\cite[Thm
  4.2]{KGM:72}). This completes the proof.
\end{proof} 

Even though the condition $\Ibf - \Wbf \in \Pc$ may seem abstract, it
has a nice geometric interpretation. From~\cite[Lem 2.2]{DK-RL:87},
$\Ibf - \Wbf \in \Pc$ if and only if the (negative) vector field $\xbf
\mapsto \xbf - [\Wbf \xbf + \dbf]_\zeros^\mbf$ maps each switching
region $\Omega_\sigma$ to another polytopic region and the images of
adjacent switching regions remains adjacent. In other words, the
vector field has a \emph{coherent orientation} when mapping the state
space.%
\footnote{A closely-related class of matrices is that of
  M-matrices~\cite{DSB:09} with established relationships with the
  stability of nonlinear systems, see, e.g.,~\cite{DMS-DDS:00}.}

\begin{remark}\longthmtitle{Computational complexity of verifying
    $\Ibf - \Wbf \in \Pc$}\label{rem:p}
  {\rm Although the problem of determining whether a matrix is in
    $\Pc$ is straightforward for small $n$, it is known to be
    co-NP-complete~\cite{GEC:94}, and thus expensive for large
    networks.  Indeed,~\cite{SMR:03} shows that all the $2^n$
    principal minors of $\Abf$ have to be checked to prove $\Abf \in
    \Pc$ (though disproving $\Abf \in \Pc$ is usually much easier).
    In these cases, one may need to rely on more conservative
    sufficient conditions such as $\rho(|\Wbf|) < 1$ or $\|\Wbf\| < 1$
    (cf. Lemma~\ref{lem:inc-mat-class}) to establish $\Ibf - \Wbf \in
    \Pc$.  These conditions, moreover, have the added benefit of
    providing intuitive connections between the distribution of
    synaptic weights, network size, and stability. We elaborate more
    on this point in Section~\ref{subsec:network-size}.  \oprocend }
\end{remark}

\begin{example}\longthmtitle{Wilson-Cowan model}\label{ex:ei}
  {\rm Consider a network of $n$ nodes in which $\alpha n, \alpha \in
    (0, 1)$ are excitatory, $(1 - \alpha)n$ are inhibitory. Under some
    regularity assumptions, given next, this network can be (further)
    reduced to a simple, two-dimensional network commonly known as the
    Wilson-Cowan model~\cite{HRW-JDC:72} and widely used in
    computational
    neuroscience~\cite{ACEO-MWJ-RB:14,MPJ-TJS:14}. Assume that the
    synaptic weight between any pair of nodes, the external input to
    them, and their maximal firing rate (if finite) only depends on
    their type: the synaptic weight of any inhibitory-to-excitatory
    connection is $w_{ei} < 0$, similarly for $w_{ee} > 0, w_{ie} > 0,
    w_{ii} < 0$, and all excitatory (inhibitory) nodes receive $d_e
    \in \real$ ($d_i \in \real$) and have maximal rate $m_e \in
    \realpos \cup \{\infty\}$ ($m_i \in \realpos \cup \{\infty\}$).
    Let $x_e(t)$ and $x_i(t)$ be the average firing rates of
    excitatory and inhibitory nodes, respectively. Then,
    \eqref{eq:dyn-d} simplifies to
    \begin{align*}
      \!\tau \begin{bmatrix} \dot x_e \\ \dot x_i \end{bmatrix} =
      -\!\begin{bmatrix} x_e \\ x_i \end{bmatrix} +
      \left[\begin{bmatrix} \alpha n w_{ee} & \!\!(1 \!-\! \alpha) n
          w_{ei} \\ \alpha n w_{ie} &
          \!\!(1 \!-\! \alpha) n w_{ii} \end{bmatrix} \! \begin{bmatrix} x_e \\
          x_i \end{bmatrix} + \begin{bmatrix} d_e \\
          d_i \end{bmatrix} \right]_\zeros^\mbf.
    \end{align*}
    Let $\Wbf_{EI} \in \real^{2
    \times 2}$ be the corresponding weight matrix above.  One can
  check that
  \begin{align}\label{eq:wceue}
    &\Ibf - \Wbf_{EI} \in \Pc \Leftrightarrow \alpha n w_{ee} < 1,
    \end{align}
    and
    \begin{align*}
    &\rho(|\Wbf_{EI}|) < 1 \Leftrightarrow \alpha n w_{ee} < 1 
    , \ (1 - \alpha) n |w_{ii}| < 1, \ \text{and}
    \\
    &\alpha(1 - \alpha) n^2 w_{ie} |w_{ei}| < (1 - \alpha
    n w_{ee})(1 - (1 - \alpha) n |w_{ii}|).
  \end{align*}
  Thus, according to Theorem~\ref{thm:eue}, EUE only requires the
  excitatory dynamics to be stable (note that $w_{ee}$ has to become
  smaller as $n$ grows), while the more conservative condition
  $\rho(|\Wbf_{EI}|) < 1$ also requires (relatively) weak
  inhibitory-inhibitory synapses and a weak interconnection between
  excitatory and inhibitory subnetworks. \oprocend }
\end{example}

When $\Ibf - \Wbf \in \Pc$, Theorem~\ref{thm:eue} ensures EUE for all
$\dbf \in \real^n$. When $\Ibf - \Wbf \notin \Pc$, however, a more
involved question is to find the values of $\dbf$ that give rise to
non-unique equilibrium points. To answer this, we next perform  a more direct
analysis of the equilibria. For simplicity, we focus in the
remainder of this section on unbounded dynamics~($\mbf = \infty
\ones_n$).

Recall the definition of an equilibrium candidate
in~\eqref{eq:h-uniq}.  Using Assumption~\ref{as:1}, and after some
manipulations, we have
\begin{align}\label{eq:auxx}
  \Wbf \xbf^*_\sigmab + \dbf &= \Wbf (\Ibf - \Sigmab^\ell \Wbf)^{-1} \Sigmab^\ell \dbf + \dbf 
  \\
  &= (\Wbf^{-1} -
  \Sigmab^\ell)^{-1} \Sigmab^\ell \dbf + \dbf \nonumber
  \\
  &= (\Ibf - \Wbf \Sigmab^\ell)^{-1} \Wbf \Sigmab^\ell \dbf + \dbf \nonumber
  \\
  &= \big[(\Ibf - \Wbf \Sigmab^\ell)^{-1} \Wbf \Sigmab^\ell + \Ibf\big] \dbf = (\Ibf - \Wbf
  \Sigmab^\ell)^{-1} \dbf,  \nonumber
\end{align}
thus,
\begin{align}\label{eq:candid-cond}
  \xbf^*_\sigmab \in \Omega_\sigmab \Leftrightarrow \underbrace{(2
    \Sigmab^\ell - \Ibf) (\Ibf - \Wbf \Sigmab^\ell)^{-1}}_{\triangleq
    \Mbf_\sigmab} \dbf \ge \zeros.
\end{align}
Therefore, if $\Mbf_\sigmab \dbf \ge \zeros$ for exactly one $\sigmab
\in \{0, \ell\}^n$, then a unique equilibrium exists. 
However, when $\Mbf_{\sigmab_\ell} \dbf \ge \zeros$ for multiple
$\sigmab_\ell \in \{0, \ell\}^n, \ell \in \until{\bar \ell}$, the
network may have either multiple equilibria or a unique one
$\xbf^*_{\sigmab_1} = \cdots = \xbf^*_{\sigmab_{\bar \ell}}$ lying on
the boundary between $\{\Omega_{\sigma_\ell}\}_{\ell = 1}^{\bar
  \ell}$. The next result shows that the quantities $\Mbf_\sigmab
\dbf$ can be used to distinguish between these two latter cases.

\begin{lemma}\longthmtitle{Existence of multiple
    equilibria}\label{lem:mul-eq} 
  Assume $\Wbf$ satisfies Assumption~\ref{as:1}, $\dbf \in \real^n$ is
  arbitrary, and $\Mbf_\sigmab$ is defined as
  in~\eqref{eq:candid-cond} for $\sigmab \in \{0, \ell\}^n$.  If there
  exist $\sigmab_1 \neq \sigmab_2$ such that $\Mbf_{\sigmab_1} \dbf
  \ge \zeros$ and $\Mbf_{\sigmab_2} \dbf \ge \zeros$,
  then $\xbf^*_{\sigmab_1} = \xbf^*_{\sigmab_2}$ if and only if
  $\Mbf_{\sigmab_1} \dbf = \Mbf_{\sigmab_2} \dbf$.
\end{lemma}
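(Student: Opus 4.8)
The plan is to express both sides of the claimed equivalence through the pre-activation vectors of the two modes. For each $\sigmab$ set $\ybf_\sigmab \triangleq \Wbf \xbf^*_\sigmab + \dbf$. Since $\xbf^*_\sigmab$ is the equilibrium of the mode-$\sigmab$ affine system~\eqref{eq:dyn-switch} (with $\Sigmab^\s \mbf = \zeros$ in the unbounded case considered here), it obeys $\xbf^*_\sigmab = \Sigmab^\ell \ybf_\sigmab$; and the definition of $\Mbf_\sigmab$ in~\eqref{eq:candid-cond} together with~\eqref{eq:auxx} gives $\Mbf_\sigmab \dbf = (2\Sigmab^\ell - \Ibf) \ybf_\sigmab$. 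Since $2\Sigmab^\ell - \Ibf$ is diagonal with $\pm 1$ entries, the hypothesis $\Mbf_{\sigmab_j} \dbf \ge \zeros$ is precisely $\xbf^*_{\sigmab_j} \in \Omega_{\sigmab_j}$ and fixes, coordinatewise, the sign of each $(\ybf_{\sigmab_j})_i$ ($\ge 0$ if $(\sigma_j)_i = \ell$, $\le 0$ if $(\sigma_j)_i = 0$). Chaining the two identities gives the compact form $\Mbf_\sigmab \dbf = (2\Ibf - \Wbf)\xbf^*_\sigmab - \dbf$, hence $\Mbf_{\sigmab_1}\dbf - \Mbf_{\sigmab_2}\dbf = (2\Ibf - \Wbf)(\xbf^*_{\sigmab_1} - \xbf^*_{\sigmab_2})$, which drives the non-trivial direction.

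The ``only if'' direction is then immediate: if $\xbf^*_{\sigmab_1} = \xbf^*_{\sigmab_2}$, the right-hand side of this identity vanishes, so $\Mbf_{\sigmab_1}\dbf = \Mbf_{\sigmab_2}\dbf$. (A self-contained coordinatewise version: $\xbf^*_{\sigmab_1} = \xbf^*_{\sigmab_2}$ forces $\ybf_{\sigmab_1} = \ybf_{\sigmab_2} =: \ybf$; at a coordinate $i$ where $\sigmab_1, \sigmab_2$ agree the sign coefficients in the two $\Mbf_{\sigmab_j}\dbf$ agree, while at a coordinate where they disagree --- say $(\sigma_1)_i = \ell$, $(\sigma_2)_i = 0$ --- membership in $\Omega_{\sigmab_1}$ forces $y_i \ge 0$ and in $\Omega_{\sigmab_2}$ forces $y_i \le 0$, so $y_i = 0$ and both $i$-th entries vanish.) This uses only Assumption~\ref{as:1} (so the identities make sense) and the two sign conditions.

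For the ``if'' direction, assume $\Mbf_{\sigmab_1}\dbf = \Mbf_{\sigmab_2}\dbf$; the identity above gives $(2\Ibf - \Wbf)\ubf = \zeros$ with $\ubf \triangleq \xbf^*_{\sigmab_1} - \xbf^*_{\sigmab_2}$, and the first paragraph shows $\ubf$ is supported on the coordinates where $\sigmab_1, \sigmab_2$ differ, with $u_i = (\xbf^*_{\sigmab_1})_i \ge 0$ where $(\sigma_1)_i = \ell$, $(\sigma_2)_i = 0$ and $u_i = -(\xbf^*_{\sigmab_2})_i \le 0$ in the opposite case. It remains to upgrade $(2\Ibf - \Wbf)\ubf = \zeros$ to $\ubf = \zeros$, and this is the step I expect to be the main obstacle. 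Under Assumption~\ref{as:1} alone $2\Ibf - \Wbf$ need not be nonsingular --- $\ubf$ could a priori be a nonzero eigenvector of $\Wbf$ for eigenvalue $2$, which already happens for $n = 1$ with $w_{11} = 2$ and a suitable $\dbf$ --- so closing the argument appears to need either a mild extra nonsingularity hypothesis $\det(2\Ibf - \Wbf) \neq 0$ (a measure-zero restriction, in the spirit of Assumption~\ref{as:1}), or a sharper argument extracting a contradiction from the sign pattern of $\ubf$ together with the complementarity relations $\xbf^*_{\sigmab_j} = [\Wbf \xbf^*_{\sigmab_j} + \dbf]_\zeros^\infty$ that both candidates satisfy on their differing coordinates. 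I would try the latter route first and, failing that, record the nonsingularity condition explicitly.
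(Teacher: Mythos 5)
Your reduction is sound and is in substance the same as the paper's: the paper also passes to $\ybf_\sigmab \triangleq \Wbf\xbf^*_\sigmab + \dbf = (\Ibf - \Wbf\Sigmab^\ell)^{-1}\dbf$ via~\eqref{eq:auxx}, uses $\det(\Wbf)\neq 0$ to get $\xbf^*_{\sigmab_1} = \xbf^*_{\sigmab_2} \Leftrightarrow \ybf_{\sigmab_1} = \ybf_{\sigmab_2}$, and then exploits the sign structure $\Mbf_\sigmab\dbf = (2\Sigmab^\ell - \Ibf)\ybf_\sigmab \ge \zeros$. Your ``only if'' direction is complete, and slightly slicker than the paper's coordinatewise version, since your identity $\Mbf_{\sigmab_1}\dbf - \Mbf_{\sigmab_2}\dbf = (2\Ibf - \Wbf)(\xbf^*_{\sigmab_1} - \xbf^*_{\sigmab_2})$ does not even need the nonnegativity hypothesis there.

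The obstacle you flag in the ``if'' direction is not an artifact of your route: it is a genuine gap in the lemma itself, and the $n=1$ example you sketch already settles the matter. Take $w_{11} = 2$, $d = -1$, $\mbf = \infty$: Assumption~\ref{as:1} holds, $\sigmab_1 = (0)$ and $\sigmab_2 = (\ell)$ give $\Mbf_{\sigmab_1}\dbf = \Mbf_{\sigmab_2}\dbf = 1 \ge \zeros$, yet $\xbf^*_{\sigmab_1} = 0$ and $\xbf^*_{\sigmab_2} = 1$ are two distinct genuine equilibria of $\tau\dot{x} = -x + [2x-1]_0^\infty$. So the ``if'' direction fails under the stated hypotheses, and your fallback hope of extracting a contradiction from the sign pattern of $\ubf$ cannot work either, since this example satisfies every sign and complementarity constraint you derived. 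The paper's own proof errs at its final step: from $\Mbf_{\sigmab_1}\dbf = \Mbf_{\sigmab_2}\dbf$ one only obtains $(\ybf_{\sigmab_1})_i = -(\ybf_{\sigmab_2})_i$ at coordinates where $\sigma_{1,i}\neq\sigma_{2,i}$ (each sign-flipped entry being nonnegative), not $(\ybf_{\sigmab_1})_i = (\ybf_{\sigmab_2})_i = 0$, so the claimed equivalence with the intermediate condition holds in only one direction. Your proposed repair is the right one: adding $\det(2\Ibf - \Wbf)\neq 0$ (or, more sharply, linear independence of the columns of $2\Ibf - \Wbf$ indexed by the coordinates where $\sigmab_1$ and $\sigmab_2$ differ, which is where you showed $\ubf$ is supported) closes the argument via your identity, and it is a measure-zero exclusion of exactly the same flavor as Assumption~\ref{as:1}.
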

\begin{proof}
  Clearly,
  \begin{align}\label{eq:mul-eq}
    \notag \xbf^*_{\sigmab_1} = \xbf^*_{\sigmab_2} &\Leftrightarrow \Wbf
    \xbf^*_{\sigmab_1} + \dbf = \Wbf \xbf^*_{\sigmab_2} + \dbf
    \\
    &\Leftrightarrow (\Ibf - \Wbf \Sigmab_1)^{-1} \dbf = (\Ibf - \Wbf \Sigmab_2)^{-1}
    \dbf,
  \end{align}
  where we have used~\eqref{eq:auxx}.  Since both $\Mbf_{\sigmab_1} \dbf$ and
  $\Mbf_{\sigmab_2} \dbf$ are nonnegative, \eqref{eq:mul-eq} holds if and
  only if $\big((\Ibf - \Wbf \Sigmab_1)^{-1} \dbf\big)_i = \big((\Ibf - \Wbf
  \Sigmab_2)^{-1} \dbf\big)_i = 0$ for any $i$ such that $ \sigma_{1, i}
  \neq \sigma_{2, i}$, which is equivalent to $\Mbf_{\sigmab_1} \dbf =
  \Mbf_{\sigmab_2} \dbf$.
\end{proof}

This property of $\Mbf_\sigmab$ can be used to derive a
computationally more involved but input-dependent characterization of
EUE, as follows.

\begin{proposition}\longthmtitle{Optimization-based condition for
    EUE}\label{prop:opt-cond} 
  Let $\Wbf$ satisfy Assumption~\ref{as:1} and $\Mbf_\sigmab$ be as
  defined in~\eqref{eq:candid-cond} for $\sigmab \in \{0, \ell\}^n$. 
   For $\dbf \in
  \real^n$, define $\mu_1(\dbf)$ and $\mu_2(\dbf)$ to be the largest
  and second largest elements of the set
  \begin{align*}
    \setdefb{\min_{i = 1, \dots, n} (\Mbf_\sigmab \dbf)_i}{\sigmab
      \in \{0, \ell\}^n},
  \end{align*}
  respectively. Then, \eqref{eq:dyn-d} has a unique equilibrium for each $\dbf
  \in \real^n$ if and only if
  \begin{align}\label{eq:opt-cond}
    \max_{\|\dbf\| = 1} \mu_1(\dbf) \mu_2(\dbf) < 0.
  \end{align}
\end{proposition}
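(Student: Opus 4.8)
\emph{Proof strategy.} The plan is to convert ``unique equilibrium for each $\dbf$'' into a count of how many switching indices $\sigmab\in\{0,\ell\}^n$ (only these, since $\mbf=\infty\ones_n$) produce a genuine equilibrium, and then read that count off from $\mu_1$ and $\mu_2$. First I would record the structural facts: for each $\sigmab$, $\dbf\mapsto\min_{i}(\Mbf_\sigmab\dbf)_i$ is continuous (indeed concave and positively homogeneous of degree~$1$), hence $\mu_1$ and $\mu_2$ are continuous and positively homogeneous of degree~$1$, so $g(\dbf)\triangleq\mu_1(\dbf)\mu_2(\dbf)$ is continuous and positively homogeneous of degree~$2$. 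Since the unit sphere is compact, $\max_{\|\dbf\|=1}g$ is attained and $\max_{\|\dbf\|=1}g<0$ is equivalent to $g(\dbf)<0$ for all $\dbf\neq\zeros$, which (using $\mu_1\ge\mu_2$ pointwise) is in turn equivalent to $\mu_1(\dbf)>0>\mu_2(\dbf)$ for all $\dbf\neq\zeros$. Likewise, \eqref{eq:dyn-d} with $\mbf=\infty\ones_n$ and its equilibrium set are positively homogeneous in $\dbf$, so ``unique equilibrium for each $\dbf\in\real^n$'' is equivalent to the same statement restricted to $\|\dbf\|=1$, plus the trivial case $\dbf=\zeros$ (for which $\zeros$ is the unique equilibrium). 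These reductions line up the two sides of the claimed equivalence.

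Next I would fix $\dbf$ and describe its equilibria. By~\eqref{eq:h-uniq}--\eqref{eq:candid-cond}, the equilibria are exactly the candidates $\xbf^*_\sigmab$ with $\min_{i}(\Mbf_\sigmab\dbf)_i\ge0$; write $\Sigma(\dbf)=\setdef{\sigmab\in\{0,\ell\}^n}{\min_{i}(\Mbf_\sigmab\dbf)_i\ge0}$, and note that $\xbf^*_\sigmab$ lies in the interior of $\Omega_\sigmab$ exactly when $\min_{i}(\Mbf_\sigmab\dbf)_i>0$. Thus an equilibrium exists iff $\Sigma(\dbf)\neq\emptyset$ iff $\mu_1(\dbf)\ge0$, and by Lemma~\ref{lem:mul-eq} two active indices $\sigmab_1\neq\sigmab_2$ give the same equilibrium iff $\Mbf_{\sigmab_1}\dbf=\Mbf_{\sigmab_2}\dbf$. \emph{Sufficiency} then follows immediately: if $\mu_1(\dbf)>0>\mu_2(\dbf)$ for all $\dbf\neq\zeros$, then for each such $\dbf$, $\mu_1(\dbf)>0$ forces $\Sigma(\dbf)\neq\emptyset$ while $\mu_2(\dbf)<0$ forces $|\Sigma(\dbf)|\le1$, so $\Sigma(\dbf)$ is a singleton and the equilibrium is unique; adding the $\dbf=\zeros$ case gives uniqueness for every $\dbf$.

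\emph{Necessity} is where I expect the work to concentrate. Assume \eqref{eq:dyn-d} has a unique equilibrium for every $\dbf$ and fix $\dbf\neq\zeros$. Existence forces $\mu_1(\dbf)\ge0$. If $\mu_2(\dbf)>0$, take distinct $\sigmab_1,\sigmab_2$ with $\min_{i}(\Mbf_{\sigmab_j}\dbf)_i>0$; then $\xbf^*_{\sigmab_1}$ and $\xbf^*_{\sigmab_2}$ lie in the (disjoint) interiors of $\Omega_{\sigmab_1}$ and $\Omega_{\sigmab_2}$ and so are two distinct equilibria --- a contradiction. Hence $\mu_2(\dbf)\le0\le\mu_1(\dbf)$, i.e.\ $g(\dbf)\le0$. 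The main obstacle is to upgrade this to the \emph{strict} inequality, i.e.\ to exclude the borderline situation in which the unique equilibrium sits on a switching hyperplane so that several indices of $\Sigma(\dbf)$ coincide and $\mu_1(\dbf)=0$ (and then $\mu_2(\dbf)\le0$, possibly $=0$). The route I would try is a perturbation argument: if $\mu_1(\dbf_0)=0$ for some $\dbf_0\neq\zeros$, use Assumption~\ref{as:1} and the sign of the pertinent $\det(\Ibf-\Sigmab^\ell\Wbf)$, together with the coherent-orientation content of~\eqref{eq:candid-cond} and Lemma~\ref{lem:mul-eq}, to move $\dbf_0$ to a nearby $\dbf$ with two genuinely distinct equilibria, contradicting uniqueness; alternatively one may invoke Theorem~\ref{thm:eue} to replace the hypothesis by $\Ibf-\Wbf\in\Pc$ and argue within the $\Pc$ structure. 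I would also sanity-check this last step on small instances (e.g.\ $\Wbf=-\varepsilon\Ibf$, where the unique equilibrium corresponding to a $\dbf$ on a coordinate hyperplane sits on a switching boundary), since the clean part of the argument yields only $g\le0$ and it is precisely the passage to $g<0$ that is delicate.
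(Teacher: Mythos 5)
Your overall strategy---reduce to the unit sphere by homogeneity, identify equilibria with the candidates $\xbf^*_\sigmab$ satisfying $\Mbf_\sigmab\dbf\ge\zeros$ via~\eqref{eq:candid-cond}, and count them using Lemma~\ref{lem:mul-eq}---is exactly the paper's, and your sufficiency argument coincides with the one given there. Two caveats. First, the step ``$\mu_2(\dbf)<0$ forces $|\Sigma(\dbf)|\le1$'' requires reading the collection $\{\min_i(\Mbf_\sigmab\dbf)_i\}$ with multiplicity, i.e., as a family indexed by the $2^n$ values of $\sigmab$: under a literal ``set'' reading, two distinct indices attaining the same strictly positive minimum would collapse to one element, $\mu_2$ would skip to the next distinct (possibly negative) value, and the test would report uniqueness while two interior, hence distinct, equilibria exist. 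Second, and more importantly, your necessity argument stops at $\mu_1(\dbf)\mu_2(\dbf)\le 0$, and the perturbation argument you sketch cannot upgrade this to a strict inequality. The obstruction at a point $\dbf_0$ with $\mu_1(\dbf_0)=0$ is not that uniqueness fails near $\dbf_0$---typically it does not---but that the test quantity itself vanishes there, so~\eqref{eq:opt-cond} is violated without any violation of EUE. Your own sanity check is already a counterexample to the literal ``only if'' direction: for $\Wbf=-\varepsilon\Ibf$ with $n=2$, $\varepsilon>0$, one has $\Ibf-\Wbf=(1+\varepsilon)\Ibf\in\Pc$, so Theorem~\ref{thm:eue} guarantees a unique equilibrium for every $\dbf$; yet at $\dbf=(1,0)$ the four values $\min_i(\Mbf_\sigmab\dbf)_i$ are $0,0,-1,-1$, so $\mu_1(\dbf)=0$, $\mu_1(\dbf)\mu_2(\dbf)=0$, and $\max_{\|\dbf\|=1}\mu_1\mu_2\ge 0$.

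So the gap you flag is real, and it cannot be closed by a cleverer argument: it sits in the statement. For the record, the paper's own proof does not resolve it either; it asserts the pointwise equivalence~\eqref{eq:mus} between $\mu_1(\dbf)\mu_2(\dbf)<0$ and the existence of a unique $\Mbf_\sigmab\dbf\ge\zeros$, and treats ties only through the remark that several $\sigmab$ with equal vectors $\Mbf_\sigmab\dbf$ are permitted. But by the argument in Lemma~\ref{lem:mul-eq}, equality of these vectors forces a zero component, hence $\mu_1(\dbf)=0$ and a product equal to $0$ rather than a negative number, so the asserted equivalence fails exactly at those $\dbf$ whose unique equilibrium lies on a switching boundary (a nonempty, measure-zero set of directions for essentially every admissible $\Wbf$). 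The natural repair is either to relax~\eqref{eq:opt-cond} to a non-strict inequality with ties resolved as in the footnote following Theorem~\ref{thm:eq-candid-P}, or to restrict the maximum to the open, full-measure set of $\dbf$ for which no component of any $\Mbf_\sigmab\dbf$ vanishes; on that set your pointwise equivalence is exact, and the two implications you do establish (existence forces $\mu_1\ge0$; two interior candidates force two equilibria, so uniqueness forces $\mu_2\le0$) constitute the correct and complete content of the necessity direction. Do not pursue the perturbation route further.
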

\begin{proof}
  First, note that $\dbf = \zeros$ is a degenerate case where the
  origin is the unique equilibrium belonging to all
  $\Omega_\sigmab$. For any $\dbf \neq \zeros$ and $\sigmab \in \{0,
  \ell\}^n$, $\Mbf_\sigmab \dbf \ge \zeros$ if and only if
  $\Mbf_\sigmab \dbf / \|\dbf\| \ge \zeros$. Thus, EUE for all $\dbf
  \in \real^n$ and all $\|\dbf\| = 1$ are equivalent.  Then, for any
  $\dbf$,
  \begin{align}\label{eq:mus}
    \notag \!\!\!\!\!\!\mu_1(\dbf) \mu_2(\dbf) < 0 &\Leftrightarrow
    \mu_1(\dbf) > 0 \ \text{and} \ \mu_2(\dbf) < 0
    \\
    &\Leftrightarrow \exists \ \text{unique} \ \Mbf_\sigmab \dbf \ge
    \zeros, \quad \sigmab \in \{0, \ell\}^n. \!\!\!\!
  \end{align}
  Note that the latter allows for the possibility of the existence of
  multiple $\sigmab$ with $\Mbf_\sigmab \dbf \ge \zeros$ provided that
  they have the same value of $\Mbf_\sigmab \dbf$. By
  Lemma~\ref{lem:mul-eq}, \eqref{eq:mus} is then equivalent to EUE,
  completing the proof.
\end{proof} 

In our experience, the optimization involved in~\eqref{eq:opt-cond} is
usually highly non-convex but since the search space $\{\|\dbf\| =
1\}$ is compact, global search methods can be used to
verify~\eqref{eq:opt-cond} numerically if $n$ is not too large.
However, note that our main interest in~\eqref{eq:opt-cond} (being
equivalent to $\Ibf - \Wbf \in \Pc$) is when it does \emph{not}
hold. If so, then any $\dbf$ for which~\eqref{eq:opt-cond} fails gives
a ray $\setdef{\alpha \dbf}{\alpha > 0}$ of input values that give
rise to non-unique equilibria. Combined with stability analysis of
Section~\ref{subsec:as}, e.g., this can be a basis for the
characterization of multistability in linear-threshold dynamics which
is itself beyond the scope of this work.

The proof of Theorem~\ref{thm:eue} (for the unbounded case) is based
on the LCP, which makes the relationship between $\Ibf - \Wbf \in \Pc$
and EUE opaque, even when taking into account the proof of the
LCP. The equilibrium characterization in~\eqref{eq:candid-cond},
however, can be used to explain this relationship more
transparently. For any given $\dbf$, the non-uniqueness of equilibria
is equivalent to asking whether
\begin{align*}
  \exists \sigmab_1, \sigmab_2 \in \{0, \ell\}^n \quad \text{s.t.}
  \quad &\Mbf_{\sigmab_1} \dbf \ge \zeros \ \ \text{and} \ \
  \Mbf_{\sigmab_2} \dbf \ge \zeros
  \\
  &\Mbf_{\sigmab_1} \dbf \neq \Mbf_{\sigmab_2} \dbf,
\end{align*}
or, whether there exist $\qbf_{\sigmab_1} \neq \qbf_{\sigmab_2} \in
\realnonneg^n $ such that $\Mbf_{\sigmab_1}^{-1} \qbf_{\sigmab_1} =
\Mbf_{\sigmab_2}^{-1} \qbf_{\sigmab_2} = \dbf$.  A more general
question, which turns out to be relevant to EUE, is whether
\begin{align}\label{eq:mxmy}
  \exists \qbf_{\sigmab_1} \neq \qbf_{\sigmab_2} \in \Oc_n \quad
  \text{s.t.} \quad \qbf_{\sigmab_1} = \Mbf_{\sigmab_1}
  \Mbf_{\sigmab_2}^{-1} \qbf_{\sigmab_2},
\end{align}
for \emph{any} orthant $\Oc_n$ of $\real^n$ (including $\Oc_n =
\realnonneg^n$ as a special case).  This depends on whether the matrix
$\Mbf_{\sigmab_1} \Mbf_{\sigmab_2}^{-1}$ can map any nonzero vector to
the same orthant which, by Lemma~\ref{lem:p-mat}(iii), happens if and
only if $-\Mbf_{\sigmab_1} \Mbf_{\sigmab_2}^{-1} \notin \Pc$. The
following result, whose proof is in ~\ref{app:proofs}, gives a
necessary and sufficient condition for this to not happen.
  
\begin{theorem}\longthmtitle{Coherently oriented vector fields and the
    validity of equilibrium candidates}\label{thm:eq-candid-P}
  Let $\Wbf$ satisfy Assumption~\ref{as:1} and $\Mbf_\sigma$ be
  defined as in~\eqref{eq:candid-cond}. Then, $-\Mbf_{\sigmab_1}
  \Mbf_{\sigmab_2}^{-1} \in \Pc$ for all (distinct) $\sigmab_1,
  \sigmab_2 \in \{0, \ell\}^n$ if and only if $\Ibf - \Wbf \in \Pc$.
\end{theorem}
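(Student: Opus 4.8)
\emph{Strategy and the easy direction.} Write $\Mbf_\sigmab = \Dbf_\sigmab(\Ibf-\Wbf\Sigmab^\ell)^{-1}$, where $\Dbf_\sigmab := 2\Sigmab^\ell-\Ibf$ is the $\pm1$ signature matrix equal to $1$ on the coordinates where $\sigma_i=\ell$ and $-1$ where $\sigma_i=0$; since $\Dbf_\sigmab^2=\Ibf$, we have $\Mbf_\sigmab^{-1}=(\Ibf-\Wbf\Sigmab^\ell)\Dbf_\sigmab$ and hence
\begin{align*}
  -\Mbf_{\sigmab_1}\Mbf_{\sigmab_2}^{-1} = -\Dbf_{\sigmab_1}(\Ibf-\Wbf\Sigmab_1^\ell)^{-1}(\Ibf-\Wbf\Sigmab_2^\ell)\Dbf_{\sigmab_2}.
\end{align*}
For the ``only if'' direction I would just specialize: with $\sigmab_1$ the all-$\ell$ index and $\sigmab_2=\zeros$ one gets $\Sigmab_1^\ell=\Ibf$, $\Sigmab_2^\ell=\zeros$, so $\Mbf_{\sigmab_1}=(\Ibf-\Wbf)^{-1}$ (well defined by Assumption~\ref{as:1}(ii)) and $\Mbf_{\sigmab_2}=-\Ibf$, whence $-\Mbf_{\sigmab_1}\Mbf_{\sigmab_2}^{-1}=(\Ibf-\Wbf)^{-1}$; if this is a P-matrix then so is $\Ibf-\Wbf$ by Lemma~\ref{lem:p-mat}(i).

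\emph{The converse.} Assume $\Ibf-\Wbf\in\Pc$ and fix distinct $\sigmab_1,\sigmab_2\in\{0,\ell\}^n$. Reorder coordinates so that the block $D$ on which $\sigmab_1,\sigmab_2$ disagree comes first and the agreement block $A$ second. Using the push-through identity $(\Ibf-\Wbf\Sigmab_1^\ell)^{-1}\Wbf=\Wbf(\Ibf-\Sigmab_1^\ell\Wbf)^{-1}$ and
\begin{align*}
  (\Ibf-\Wbf\Sigmab_1^\ell)^{-1}(\Ibf-\Wbf\Sigmab_2^\ell)=\Ibf+(\Ibf-\Wbf\Sigmab_1^\ell)^{-1}\Wbf(\Sigmab_1^\ell-\Sigmab_2^\ell),
\end{align*}
together with the $2\times 2$-block formula for a matrix inverse, I would bring $-\Mbf_{\sigmab_1}\Mbf_{\sigmab_2}^{-1}$ into the form of the principal pivot transform $\pi(\cdot)$ of $\Ibf-\Wbf$ (cf.\ the displayed formula in Section~\ref{sec:prelims}), pivoting on the set determined by the $\ell$-labels of $\sigmab_1$ on $D$ and the $\ell$-labels shared by $\sigmab_1,\sigmab_2$ on $A$, modulo the left/right signature factors $\Dbf_{\sigmab_1},\Dbf_{\sigmab_2}$. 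Since $\Ibf-\Wbf\in\Pc$, iterating Lemma~\ref{lem:p-mat}(iv) shows every principal pivot transform of $\Ibf-\Wbf$ is again a P-matrix; because conjugating by a $\pm1$ diagonal matrix leaves every principal minor unchanged and $\Pc$ is closed under inversion (Lemma~\ref{lem:p-mat}(i)), we conclude $-\Mbf_{\sigmab_1}\Mbf_{\sigmab_2}^{-1}\in\Pc$. A less computational alternative: if $-\Mbf_{\sigmab_1}\Mbf_{\sigmab_2}^{-1}\notin\Pc$, Lemma~\ref{lem:p-mat}(iii) yields $\dbf\neq\zeros$ with $\Mbf_{\sigmab_1}\dbf,\Mbf_{\sigmab_2}\dbf$ in a common closed orthant; relabeling~\eqref{eq:dyn-d} so that this orthant becomes $\realnonneg^n$ --- a principal pivot that sends $\Ibf-\Wbf$ to one of its pivot transforms (still a P-matrix) and the $\Mbf_{\sigmab_j}$ to genuine equilibrium-candidate matrices of the relabeled dynamics --- produces, via Lemma~\ref{lem:mul-eq}, two distinct valid candidates for some input, contradicting Theorem~\ref{thm:eue}.

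\emph{Expected obstacle.} The main difficulty is the bookkeeping in the converse: pinning down exactly which principal pivot transform of $\Ibf-\Wbf$ appears, carrying both signature matrices $\Dbf_{\sigmab_1},\Dbf_{\sigmab_2}$ through the block factorization, and --- the subtlest point --- verifying that the agreement coordinates contribute only benign diagonal blocks that do not affect P-matrix-ness (consistent with Lemma~\ref{lem:mul-eq}, which shows agreement coordinates never force non-uniqueness). I would first settle the case where $\sigmab_1,\sigmab_2$ differ in every coordinate --- there $-\Mbf_{\sigmab_1}\Mbf_{\sigmab_2}^{-1}$ collapses to a signature conjugate of a single principal pivot transform of $\Ibf-\Wbf$ --- and then reduce the general case to it by a Schur complement onto the disagreement block $D$.
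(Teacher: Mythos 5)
Your necessity argument is exactly the paper's (one line via the pair $\sigmab_1=\ellb_n$, $\sigmab_2=\zeros_n$ and Lemma~\ref{lem:p-mat}(i)), and your plan for the converse---reorder coordinates, compute $\Mbf_{\sigmab_1}\Mbf_{\sigmab_2}^{-1}$ in block form, and recognize a principal pivot transform of (a principal submatrix of the inverse of a principal submatrix of) $\Ibf-\Wbf$---is also the paper's. The genuine gap sits precisely at the step you flag as the ``subtlest point'' and then assert will work out: the agreement coordinates are \emph{not} benign, and the principal-minor-preservation argument does not apply because your factorization carries \emph{two different} signature matrices, $\Dbf_{\sigmab_1}$ on the left and $\Dbf_{\sigmab_2}$ on the right. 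The map $\Abf\mapsto\Dbf_{\sigmab_1}\Abf\Dbf_{\sigmab_2}$ is not a conjugation when $\sigmab_1\neq\sigmab_2$; it flips the sign of every principal minor indexed by an odd number of disagreement coordinates. Indeed, the paper's own computation~\eqref{eq:gamma-stars} shows that the agreement coordinates contribute blocks $+\Ibf_{n_1}$ and $+\Ibf_{n_4}$ to $\Mbf_{\sigmab_1}\Mbf_{\sigmab_2}^{-1}$, hence $-\Ibf_{n_1}$ and $-\Ibf_{n_4}$ to $-\Mbf_{\sigmab_1}\Mbf_{\sigmab_2}^{-1}$, which destroys membership in $\Pc$ whenever $\sigmab_1$ and $\sigmab_2$ agree on some coordinate. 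A concrete check: for $n=2$, $\Wbf=\tfrac12\Ibf_2$ (so $\Ibf-\Wbf=\tfrac12\Ibf_2\in\Pc$), $\sigmab_1=(\ell,\ell)$ and $\sigmab_2=(\ell,0)$ give $\Mbf_{\sigmab_1}=2\Ibf$, $\Mbf_{\sigmab_2}^{-1}=\diag(\tfrac12,-1)$, hence $-\Mbf_{\sigmab_1}\Mbf_{\sigmab_2}^{-1}=\diag(-1,2)$, whose leading principal minor is negative. What the paper's proof actually establishes---and what is used later, e.g.\ in Corollary~\ref{cor:peue}---is that the \emph{disagreement block} $-\Gammab$ of~\eqref{eq:gamma-stars} is a principal pivot transform of a P-matrix and hence in $\Pc$; the full-matrix claim holds only when the agreement set is empty (e.g., for complementary $\sigmab_1,\sigmab_2$, where $n_1=n_4=0$). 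Your reduction, once the signature bookkeeping is done honestly, lands in the same place and no further.

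Your ``less computational alternative'' has the same hole. Applying Lemma~\ref{lem:p-mat}(iii) to $-\Mbf_{\sigmab_1}\Mbf_{\sigmab_2}^{-1}\notin\Pc$ produces a nonzero $\qbf_{\sigmab_2}$ such that $\qbf_{\sigmab_1}=\Mbf_{\sigmab_1}\Mbf_{\sigmab_2}^{-1}\qbf_{\sigmab_2}$ lies in the same closed orthant as $\qbf_{\sigmab_2}$, but it does \emph{not} guarantee $\qbf_{\sigmab_1}\neq\qbf_{\sigmab_2}$. When they coincide, Lemma~\ref{lem:mul-eq} says the two equilibrium candidates are the same point, so there is no contradiction with Theorem~\ref{thm:eue}; in the example above the offending vector is exactly the fixed vector $(1,0)^T$ of $\Mbf_{\sigmab_1}\Mbf_{\sigmab_2}^{-1}$, supported on the agreement coordinate. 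So to repair either route you must quotient out the agreement coordinates first (equivalently, restrict the claim to the disagreement principal submatrix $-\Gammab$), which is what the paper's block computation accomplishes.
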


Theorem~\ref{thm:eq-candid-P} provides a more transparent account of
the relationship between $\Ibf - \Wbf \in \Pc$ and EUE. If $\Ibf -
\Wbf \in \Pc$, then Theorem~\ref{thm:eq-candid-P} and
Lemma~\ref{lem:p-mat}(iii) ensure that none of $\Mbf_{\sigmab_1}
\Mbf_{\sigmab_2}^{-1}$ can map a vector to the same orthant. Thus, no
two $\qbf_\sigmab = \Mbf_\sigmab \dbf$ belong to the same
orthant. Therefore, there exists a one-to-one correspondence between
$\{\qbf_\sigmab\}$ and orthants in $\real^n$, ensuring that exactly
one $\qbf_\sigmab$ belongs to $\realnonneg^n$, i.e., EUE.%
\footnote{With a careful resolution of the ties, this still holds in
  the measure-zero event that multiple $\qbf_\sigmab$ are equal and
  belong to the boundary between orthants.}

We end this subsection with a result that bounds the number and
location if equilibria for the case when $\Ibf - \Wbf \notin \Pc$. For
$\Abf \in \real^{n \times n}$ and $\sigmab \in \{0, \ell\}^n$, let
$\Abf_{(\sigmab)}$ be the principal submatrix of $\Abf$ containing the
rows and columns for which $\sigma_i = \ell$. Further, for $\sigmab_1,
\sigmab_2 \in \{0, \ell\}^n$, we say $\sigmab_1 \le \sigmab_2$ if
$\sigma_{1, i} = \ell \Rightarrow \sigma_{2, i} = \ell$ for all $i \in
\until{n}$.

\begin{corollary}\longthmtitle{Partial EUE}\label{cor:peue}
  Consider the dynamics~\eqref{eq:dyn-d} and assume that
  Assumption~\ref{as:1} holds. If $\Ibf - \Wbf_{(\bar \sigmab)} \in
  \Pc$ for any $\bar \sigmab \in \{0, \ell\}^n$, then
  $\bigcup_{\sigmab \le \bar \sigmab} \Omega_\sigmab$ contains at most
  one equilibrium point.
\end{corollary}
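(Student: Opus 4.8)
The plan is to reduce the statement to a direct application of Theorem~\ref{thm:eue} to a lower-dimensional subsystem. Split the index set as $\until{n} = A \cup B$ with $A = \{i : \bar\sigma_i = \ell\}$ and $B = \{i : \bar\sigma_i = 0\}$, and partition $\Wbf$, $\xbf$, $\dbf$ into the corresponding blocks. The first step is to identify the region $R := \bigcup_{\sigmab \le \bar\sigmab} \Omega_\sigmab$ explicitly. Since $\sigmab \le \bar\sigmab$ forces $\sigma_i = 0$ for every $i \in B$ while leaving $\sigma_i$ free for $i \in A$, taking the union of the polytopes $\Omega_\sigmab$ over all such $\sigmab$ (using that, for $i \in A$, one can always choose $\sigma_i$ to match the sign of $(\Wbf\xbf+\dbf)_i$) yields $R = \{\xbf : (\Wbf\xbf+\dbf)_i \le 0 \text{ for all } i \in B\}$; in the bounded case one additionally picks up the non-saturation constraints $(\Wbf\xbf+\dbf)_i \le m_i$ on $A$, which are harmless for what follows. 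I expect this combinatorial bookkeeping — pinning down exactly which inequalities cut out $R$ — to be the only step that really needs care.

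Next I would reduce any equilibrium lying in $R$ to the $\bar\sigmab$-subsystem. Let $\xbf^*$ satisfy $\xbf^* = [\Wbf\xbf^* + \dbf]_\zeros^\mbf$ with $\xbf^* \in R$. The constraint $(\Wbf\xbf^*+\dbf)_B \le \zeros$ forces $[\Wbf\xbf^*+\dbf]_\zeros^\mbf$ to vanish on $B$, hence $\xbf^*_B = \zeros$; substituting this into the $A$-block of the equilibrium equation, and using the non-saturation constraint on $A$ in the bounded case, leaves $\xbf^*_A = [\Wbf_{AA}\xbf^*_A + \dbf_A]_\zeros^{\infty}$ with $\Wbf_{AA} = \Wbf_{(\bar\sigmab)}$. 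Thus $\xbf^*_A$ is an equilibrium of the unbounded linear-threshold network on $|A|$ nodes with weight matrix $\Wbf_{(\bar\sigmab)}$ and input $\dbf_A$, and the only remaining coupling to the $B$-variables is the residual inequality $\Wbf_{BA}\xbf^*_A + \dbf_B \le \zeros$, which constrains existence but not uniqueness.

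Finally I would conclude by uniqueness for the subsystem. Since $\Ibf - \Wbf_{(\bar\sigmab)} \in \Pc$ by hypothesis, Theorem~\ref{thm:eue} applied to the $|A|$-node network (equivalently, the bijection between solutions of $\xbf_A = [\Wbf_{(\bar\sigmab)}\xbf_A + \dbf_A]_\zeros^\infty$ and solutions of the linear complementarity problem with matrix $\Ibf - \Wbf_{(\bar\sigmab)}$, which is unique for a P-matrix~\cite{KGM:72}) shows the reduced equation has exactly one solution $\xbf^*_A$. Hence the only vector that can be an equilibrium of~\eqref{eq:dyn-d} inside $R$ is $(\xbf^*_A, \zeros)$ — it is one precisely when $\Wbf_{BA}\xbf^*_A + \dbf_B \le \zeros$, and otherwise $R$ contains none — so $R$ contains at most one equilibrium.

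One technical point worth flagging in the write-up: to invoke uniqueness for the subsystem one needs only $\Ibf - \Wbf_{(\bar\sigmab)} \in \Pc$ and not Assumption~\ref{as:1} for $\Wbf_{(\bar\sigmab)}$, because the fixed-point/LCP correspondence used for the unbounded case of Theorem~\ref{thm:eue} does not require $\Wbf_{(\bar\sigmab)}$ to be nonsingular; Assumption~\ref{as:1} for the full matrix $\Wbf$ is used here only to make the switching-region decomposition $\{\Omega_\sigmab\}$ well behaved, and in particular it already guarantees nonsingularity of every $\Ibf - \Sigmab^\ell\Wbf_{(\bar\sigmab)}$.
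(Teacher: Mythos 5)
Your proof is correct, but it follows a genuinely different route from the paper's. The paper derives the corollary as a byproduct of the machinery built for Theorem~\ref{thm:eq-candid-P}: for any two candidates $\sigmab_1,\sigmab_2\le\bar\sigmab$, the matrix $\Mbf_{\sigmab_1}\Mbf_{\sigmab_2}^{-1}$ has the block structure~\eqref{eq:gamma-stars} in which only the middle block $\Gammab$ is nontrivial, and $-\Gammab$ is a principal pivot transform of a block of $(\Ibf-\Wbf_{(\sigmab_1\vee\sigmab_2)})^{-1}$; since the union of the supports of $\sigmab_1$ and $\sigmab_2$ sits inside the support of $\bar\sigmab$, the hypothesis $\Ibf-\Wbf_{(\bar\sigmab)}\in\Pc$ propagates through Lemma~\ref{lem:p-mat} to give $-\Mbf_{\sigmab_1}\Mbf_{\sigmab_2}^{-1}\in\Pc$, whence no two distinct candidates in $\bigcup_{\sigmab\le\bar\sigmab}\Omega_\sigmab$ can both satisfy $\Mbf_\sigmab\dbf\ge\zeros$ (Lemma~\ref{lem:p-mat}(iii) together with Lemma~\ref{lem:mul-eq} for ties). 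You instead argue directly on the fixed-point equation: you correctly identify $\bigcup_{\sigmab\le\bar\sigmab}\Omega_\sigmab$ as the set where $(\Wbf\xbf+\dbf)_B\le\zeros$, observe that any equilibrium there must have $\xbf_B^*=\zeros$ and $\xbf_A^*$ solving the reduced $|A|$-node equation with weight matrix $\Wbf_{(\bar\sigmab)}$, and then invoke the LCP uniqueness from the unbounded case of Theorem~\ref{thm:eue}. Your argument is more elementary and self-contained (it bypasses the principal-pivot-transform computation entirely), handles the bounded case explicitly where the paper restricts to $\mbf=\infty\ones_n$, and makes visible the extra feasibility condition $\Wbf_{BA}\xbf_A^*+\dbf_B\le\zeros$ that governs existence versus emptiness; what the paper's route buys is that it comes essentially for free once Theorem~\ref{thm:eq-candid-P} is proved, and it localizes the P-matrix requirement to the join $\sigmab_1\vee\sigmab_2$ of each pair rather than to all of $\bar\sigmab$ at once. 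Your closing remark that Assumption~\ref{as:1} is not needed for the submatrix is also sound, since $\Ibf-\Wbf_{(\bar\sigmab)}\in\Pc$ already forces all the determinants $\det(\Ibf-\Sigmab^\ell\Wbf_{(\bar\sigmab)})$ to be positive principal minors of $\Ibf-\Wbf_{(\bar\sigmab)}$.
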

\begin{proof}
  The proof follows directly from the proof of
  Theorem~\ref{thm:eq-candid-P} and the fact that, using the
  definitions therein, $-\Gammab \in \Pc$ only requires $\Ibf -
  \Wbf_{([\ellb_{n_1+n_2+n_3} \ \zeros_{n_4}]^T)} \in \Pc$.
\end{proof} 

Even in the simplest case when $\Ibf - \Wbf \in \Pc$, the resulting
unique equilibrium may or may not be stable, as studied next.

\subsection{Asymptotic Stability}\label{subsec:as}

The EUE is an \emph{opportunity} to shape the network activity at
steady state, provided that the equilibrium corresponds to a desired
state (a memory, the encoding of a spatial location, eye position,
etc.~\cite{JJH:82,HSS-DDL-BYR-DWT:00, DD-JKS-TJS:00,RC-DA-RY:03,
  JJK-KZ:12}) \emph{and} it attracts network trajectories.  Here we
investigate the latter.

\begin{theorem}\longthmtitle{Asymptotic Stability}\label{thm:ges}
  Consider the network dynamics~\eqref{eq:dyn-d} and assume $\Wbf$
  satisfies Assumption~\ref{as:1}.
  \begin{enumerate}
  \item {[Sufficient condition]} If $\Wbf \in \Lc$, then for all $\dbf
    \in \real^n$, the network is globally exponentially stable (GES)
    relative to a unique equilibrium $\xbf^*$;
  \item {[Necessary condition]} If for all $\dbf \in \real^n$ the
    network is locally asymptotically stable relative to a unique
    equilibrium $\xbf^*$, then $-\Ibf + \Wbf \in \Hc$.
  \end{enumerate}
\end{theorem}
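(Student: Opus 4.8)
The plan is to prove (i) by exhibiting a \emph{common} quadratic Lyapunov function that certifies decay along every trajectory, irrespective of which switching regions are visited, and to prove (ii) by contraposition, building an input $\dbf$ that forces an equilibrium into the interior of a switching region whose local (linear) dynamics fails to be Hurwitz.

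For (i), I would first dispatch existence and uniqueness: by Lemma~\ref{lem:inc-mat-class}(iii), $\Wbf\in\Lc \Rightarrow -\Ibf+\Wbf\in\Hc$, and by Lemma~\ref{lem:inc-mat-class}(iv), $-\Ibf+\Wbf\in\Hc \Rightarrow \Ibf-\Wbf\in\Pc$, so Theorem~\ref{thm:eue} gives a unique equilibrium $\xbf^*$ for every $\dbf$ (with $\xbf^*=[\Wbf\xbf^*+\dbf]_\zeros^\mbf\in[\zeros,\mbf]$). The crucial observation is that each scalar saturation $[\,\cdot\,]_0^{m_i}$ is monotone nondecreasing and $1$-Lipschitz, so at every $\xbf$ there is a diagonal matrix $\Lambda=\Lambda(\xbf)$ with entries in $[0,1]$ such that $[\Wbf\xbf+\dbf]_\zeros^\mbf-[\Wbf\xbf^*+\dbf]_\zeros^\mbf = \Lambda\Wbf(\xbf-\xbf^*)$; since $[\Wbf\xbf^*+\dbf]_\zeros^\mbf=\xbf^*$, the dynamics~\eqref{eq:dyn-d} rewrite \emph{globally} as $\tau\dot\xbf = (-\Ibf+\Lambda(\xbf)\Wbf)(\xbf-\xbf^*)$. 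Taking $V(\xbf)=(\xbf-\xbf^*)^T\Pbf(\xbf-\xbf^*)$ with $\Pbf$ as in the definition of $\Lc$, and using that $\xbf(\cdot)$ is $C^{1}$,
\begin{align*}
\tau\dot V = (\xbf-\xbf^*)^T\big[(-\Ibf+\Wbf^T\Lambda)\Pbf+\Pbf(-\Ibf+\Lambda\Wbf)\big](\xbf-\xbf^*).
\end{align*}
The map $\Sigmab\mapsto(-\Ibf+\Wbf^T\Sigmab)\Pbf+\Pbf(-\Ibf+\Sigmab\Wbf)$ is affine in the diagonal of $\Sigmab$ and, by $\Wbf\in\Lc$, negative definite at every vertex $\Sigmab=\diag(\sigmab)$, $\sigmab\in\{0,1\}^n$, of the unit cube; since convex combinations of negative definite matrices are negative definite, it is negative definite for all diagonal $\Sigmab$ with entries in $[0,1]$, in particular for $\Lambda(\xbf)$. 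Maximizing the largest eigenvalue of the bracket over this compact set of matrices produces a uniform $-c<0$, whence $\tau\dot V\le -c\|\xbf-\xbf^*\|^2\le -(c/\lambda_{\max}(\Pbf))V$ and $\xbf(t)$ converges to $\xbf^*$ exponentially from every initial condition. This argument makes no distinction between finite and infinite $\mbf$.

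For (ii), I would argue contrapositively: assume $-\Ibf+\Wbf\notin\Hc$ and exhibit a $\dbf$ for which some equilibrium of~\eqref{eq:dyn-d} is not locally asymptotically stable. First, after reordering so that the indices $S$ with $\sigma_i=\ell$ come first, the mode matrix $-\Ibf+\Sigmab^\ell\Wbf$ of~\eqref{eq:dyn-switch} is block upper triangular with diagonal blocks $-\Ibf_{|S|}+\Wbf_{SS}$ and $-\Ibf_{n-|S|}$, hence Hurwitz iff the principal submatrix $(-\Ibf+\Wbf)_{SS}$ is Hurwitz; therefore $-\Ibf+\Wbf\in\Hc$ is equivalent to all $2^n$ mode matrices being Hurwitz. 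Pick $S$ with $(-\Ibf+\Wbf)_{SS}$ non-Hurwitz and let $\sigmab$ have $\sigma_i=\ell$ for $i\in S$ and $\sigma_i=0$ otherwise, so $\Sigmab^\s=\zeros$. By Assumption~\ref{as:1}, the affine map $\dbf\mapsto\Wbf\xbf^*_\sigmab+\dbf$ equals $\dbf\mapsto(\Ibf-\Wbf\Sigmab^\ell)^{-1}\dbf$ (precisely the computation in~\eqref{eq:auxx}, which does not involve $\mbf$) and is invertible; choosing any $\ybf$ with $y_i<0$ for $i\notin S$ and $0<y_i<m_i$ for $i\in S$, and setting $\dbf=(\Ibf-\Wbf\Sigmab^\ell)\ybf$, makes $\xbf^*_\sigmab$ an equilibrium of~\eqref{eq:dyn-d} lying in the \emph{interior} of $\Omega_\sigmab$. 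On a small ball about $\xbf^*_\sigmab$ contained in that interior the dynamics coincide exactly with $\tau\dot\zbf=(-\Ibf+\Sigmab^\ell\Wbf)\zbf$, $\zbf=\xbf-\xbf^*_\sigmab$; this matrix is non-Hurwitz and, by Assumption~\ref{as:1}(ii), nonsingular, so it has either an eigenvalue with positive real part or a nonzero purely imaginary eigenvalue, and in both cases $\xbf^*_\sigmab$ is not locally asymptotically stable, contradicting the hypothesis.

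The step I expect to be the main obstacle is the convexity argument in (i): the definition of $\Lc$ only constrains the $2^n$ vertex matrices, whereas a single quadratic form must certify decay along trajectories that pass through many switching regions and, relative to the far-off equilibrium $\xbf^*$, effectively "see" slope matrices $\Lambda(\xbf)$ lying in the \emph{interior} of $[0,1]^n$; recognizing that affineness in $\Sigmab$ bridges this gap (and, incidentally, unifies the bounded and unbounded cases) is the heart of the proof. In (ii), the delicate point is ensuring the planted equilibrium sits in $\mathrm{int}\,\Omega_\sigmab$ rather than on its boundary, so that the local dynamics are genuinely linear; Assumption~\ref{as:1} supplies both this and the nonsingularity that rules out the borderline zero-eigenvalue case (which would otherwise have to be handled via non-uniqueness of equilibria instead).
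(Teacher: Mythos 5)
Your proof is correct. Part (ii) follows essentially the same route as the paper: both arguments plant an equilibrium in the interior of the switching region whose mode matrix $-\Ibf+\Sigmab^\ell\Wbf$ fails to be Hurwitz (the paper uses the specific choice $\dbf=(2\Ibf-\Wbf)\sigmab_{01}-\ones_n$, which makes $\Wbf\xbf^*+\dbf=2\sigmab_{01}-\ones_n$; your choice $\dbf=(\Ibf-\Wbf\Sigmab^\ell)\ybf$ with $0<y_i<m_i$ on $S$ is a mild generalization that is actually more careful in the bounded case, where the paper's construction implicitly needs $m_i>1$ to keep the active nodes in the linear rather than saturated regime). Part (i) is where the two texts genuinely diverge in presentation, though not in substance: the paper merely cites~\cite{AP-NvdW-HN:05} and defers its ``simpler and direct proof'' to a preliminary arXiv version, whereas you supply the argument in full. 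Your key steps --- the slope reparametrization $[\Wbf\xbf+\dbf]_\zeros^\mbf-[\Wbf\xbf^*+\dbf]_\zeros^\mbf=\Lambda\Wbf(\xbf-\xbf^*)$ with diagonal $\Lambda\in[0,1]^n$, and the observation that the Lyapunov inequality defining $\Lc$ is affine in $\Sigmab$ and hence propagates from the $2^n$ vertices to the whole cube by convexity --- are exactly what makes a single quadratic $V$ work across all switching regions and uniformly for both finite and infinite $\mbf$, and the compactness argument yields the uniform decay rate needed for GES rather than mere asymptotic stability. In short, your writeup is a correct, self-contained version of a proof the paper outsources, plus a faithful reproduction of the paper's necessity argument.
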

\begin{proof}
  \emph{(i)} The EUE follows from
  Lemma~\ref{lem:inc-mat-class}(iii)\&(iv) and
  Theorem~\ref{thm:eue}. GES can be deduced from~\cite[Thm
  1]{AP-NvdW-HN:05}, but a simpler and direct proof can also be found
  in a preliminary version of this work~\cite{EN-JC:19-tacI-v2} which
  is omitted here space reasons.

  \emph{(ii)} Assume, by contradiction, that $-\Ibf + \Wbf \notin
  \Hc$, which means that there exists $\sigmab \in \{0, \ell\}^n$ such
  that $-\Ibf + \Sigmab^\ell \Wbf$ is not Hurwitz. Let $\sigmab_{01}
  \izon$ have the same zeros as $\sigmab$, and consider the choice
    \begin{align*}
      \dbf = (2 \Ibf - \Wbf) \sigmab_{01} - \ones_n.
    \end{align*}
    It is straightforward to show that $\xbf^* = \sigmab_{01}$ is an
    equilibrium point for~\eqref{eq:dyn-d} lying in the interior of
    $\Omega_\sigmab$. By assumption, $\xbf^*$ is (unique and) locally
    asymptotically stable, which contradicts $-\Ibf + \Sigmab^\ell \Wbf$
    not being Hurwitz. This completes the proof.
\end{proof}

Similar to the problem of verifying whether a matrix is a P-matrix,
cf. Remark~\ref{rem:p}, the computational complexity of verifying
total-Hurwitzness grows exponentially with $n$. The same applies to
the verification of total $\Lc$-stability, see, e.g., \cite{DL-RT:04}
and the references therein. The next result gives a usually more
conservative but computationally inexpensive alternative.

\begin{proposition}\longthmtitle{Computationally feasible sufficient
    conditions for GES}\label{prop:comp-ges}
  Consider the network dynamics~\eqref{eq:dyn-d} and assume the weight
  matrix $\Wbf$ satisfies Assumption~\ref{as:1}.  If $\rho(|\Wbf|) <
  1$ or $\|\Wbf\| < 1$, then for all $\dbf \in \real^n$, the network
  has a unique equilibrium $\xbf^*$ and it is GES relative to
  $\xbf^*$.
\end{proposition}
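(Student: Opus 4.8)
The plan is to reduce everything to Theorem~\ref{thm:ges}(i) by showing that each of the two hypotheses $\rho(|\Wbf|) < 1$ and $\|\Wbf\| < 1$ implies $\Wbf \in \Lc$. For this we simply invoke Lemma~\ref{lem:inc-mat-class}: part (ii) of that lemma gives $\|\Wbf\| < 1 \Rightarrow \Wbf \in \Lc$ directly. For the condition $\rho(|\Wbf|) < 1$, I would first observe (as in Remark~\ref{rem:ce}) that this does \emph{not} in general imply $\|\Wbf\| < 1$, so a little more care is needed. The cleanest route is to note that $\rho(|\Wbf|) < 1$ implies, by a standard Perron--Frobenius-type argument, the existence of a positive diagonal scaling $\Dbf = \diag(\dbf)$ with $\dbf > \zeros$ such that the weighted $\infty$- (or $1$-) norm of $|\Wbf|$ is strictly less than $1$; equivalently, there is a diagonal $\Pbf = \Pbf^T > \zeros$ acting as a common quadratic Lyapunov certificate. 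Then one checks that this $\Pbf$ satisfies the defining LMI of $\Lc$ for every $\sigmab \in \{0,1\}^n$, because zeroing out rows/columns of $\Wbf$ (as $\Sigmab^\ell$ does) only helps the Lyapunov inequality. Hence $\rho(|\Wbf|) < 1 \Rightarrow \Wbf \in \Lc$ as well.

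Concretely, the steps in order are: (1) Case $\|\Wbf\| < 1$: apply Lemma~\ref{lem:inc-mat-class}(ii) to get $\Wbf \in \Lc$, then apply Theorem~\ref{thm:ges}(i). (2) Case $\rho(|\Wbf|) < 1$: produce a diagonal $\Pbf > \zeros$ such that $(-\Ibf + \Sigmab^\ell |\Wbf|^T)\Pbf + \Pbf(-\Ibf + |\Wbf|\Sigmab^\ell) < \zeros$ for all $\sigmab$; argue that the same $\Pbf$ works with $\Wbf$ in place of $|\Wbf|$ (since replacing entries of $|\Wbf|$ by entries of smaller or equal absolute value can only decrease the relevant quadratic form, one way to see this is that the off-diagonal Gershgorin-type bounds controlling the LMI depend only on the magnitudes $|w_{ij}|$); conclude $\Wbf \in \Lc$ and invoke Theorem~\ref{thm:ges}(i). (3) In either case, EUE and GES relative to a unique $\xbf^*$ follow for all $\dbf \in \real^n$ from Theorem~\ref{thm:ges}(i).

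I expect the only nontrivial point to be step (2): justifying that $\rho(|\Wbf|) < 1$ yields a common diagonal Lyapunov matrix that certifies $\Wbf \in \Lc$ rather than merely $|\Wbf| \in \Lc$. This is where the sign-robustness of the linear-threshold structure matters, and it is worth stating carefully (or, alternatively, citing the corresponding monotone-systems / M-matrix fact: $\rho(|\Wbf|) < 1$ iff $\Ibf - |\Wbf|$ is a nonsingular M-matrix, which is diagonally stable, and diagonal stability of $\Ibf - |\Wbf|$ transfers to diagonal stability of $\Ibf - \Sigmab^\ell \Wbf$ for all $\sigmab$). Everything else is a direct chaining of already-established lemmas. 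Note also that the hypothesis $\rho(|\Wbf|) < 1$ separately implies $-\Ibf + \Wbf \in \Hc$ by Lemma~\ref{lem:inc-mat-class}(i), which is consistent with (though weaker than) the GES conclusion, so there is no tension.
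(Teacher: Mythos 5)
Your proposal is correct, and for the case $\|\Wbf\| < 1$ it coincides with the paper's proof (Lemma~\ref{lem:inc-mat-class}(ii) followed by Theorem~\ref{thm:ges}(i)). For the case $\rho(|\Wbf|) < 1$, however, you take a genuinely different route: the paper does \emph{not} pass through the class $\Lc$ --- Lemma~\ref{lem:inc-mat-class} only records $\rho(|\Wbf|) < 1 \Rightarrow -\Ibf + \Wbf \in \Hc$ --- and instead invokes the direct proof technique of~\cite[Prop.~3]{JF-KPH:96} (a weighted comparison estimate, relegated to the preliminary arXiv version). You instead prove the stronger inclusion $\rho(|\Wbf|) < 1 \Rightarrow \Wbf \in \Lc$, which unifies both hypotheses under Theorem~\ref{thm:ges}(i) and in effect sharpens Lemma~\ref{lem:inc-mat-class}. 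This inclusion is true, and the M-matrix phrasing you give at the end is the right way to establish it: $\Ibf - |\Wbf|$ is a nonsingular M-matrix, hence Lyapunov diagonally stable, so there is a positive diagonal $\Dbf$ with $\ybf^T \Dbf |\Wbf| \ybf < \ybf^T \Dbf \ybf$ for all $\ybf \neq \zeros$; then for every $\sigmab \in \{0,1\}^n$ and every $\xbf \neq \zeros$, writing $|\xbf|$ for the entrywise absolute value, one has $\xbf^T \Dbf \Sigmab \Wbf \xbf = \sum_{i,j} d_i \sigma_i w_{ij} x_i x_j \le |\xbf|^T \Dbf |\Wbf| |\xbf| < |\xbf|^T \Dbf |\xbf| = \xbf^T \Dbf \xbf$, which is exactly the LMI defining $\Lc$ with $\Pbf = \Dbf$. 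Be careful, though, with the first justification you offer in step (2): the claim that replacing entries of $|\Wbf|$ by entries of smaller or equal magnitude ``can only decrease the relevant quadratic form'' is false verbatim when $\xbf$ has mixed signs, and a Gershgorin/diagonal-dominance reading of the certificate is not what M-matrix diagonal stability delivers in general; the transfer from $|\Wbf|$ to $\Sigmab \Wbf$ genuinely requires passing to $|\xbf|$ and exploiting that $\xbf^T \Dbf \xbf = |\xbf|^T \Dbf |\xbf|$ because $\Dbf$ is diagonal. With that fix your argument is complete. What your approach buys is a single common diagonal quadratic Lyapunov function covering both sufficient conditions (and a new arrow in Figure~\ref{fig:mat-class}); what the paper's approach buys is a self-contained treatment of the absolute-Schur case that does not rely on constructing such a certificate.
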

\begin{proof}
  If $\|\Wbf\| < 1$, the result follows from
  Lemma~\ref{lem:inc-mat-class}(ii) and Theorem~\ref{thm:ges}. For the
  case $\rho(|\Wbf|) < 1$, the same proof technique as
  in~\cite[Prop. 3]{JF-KPH:96} can be used to prove GES, as shown in a
  preliminary version of this work~\cite{EN-JC:19-tacI-v2}, but is
  omitted here for space reasons.
\end{proof}

From Lemma~\ref{lem:inc-mat-class}(iii), the conditions of
Theorem~\ref{thm:ges} and Proposition~\ref{prop:comp-ges} are not
conclusive when $\Wbf$ satisfies $-\Ibf + \Wbf \in \Hc$ but neither
$\Wbf \in \Lc$ nor $\rho(|\Wbf|) < 1$. However,
\begin{enumerate}[wide]
\item If a unique equilibrium $\xbf^*$ lies in the interior of a
  switching region (a condition that can be shown to hold for almost
  all $\dbf$), then $\xbf^*$ is at least locally exponentially stable
  (since a sufficiently small region of attraction of $\xbf^*$ is
  contained in that switching region).
\item In our extensive simulations with random $(\Wbf, \dbf)$, any
  system satisfying $-\Ibf + \Wbf \in \Hc$ was GES for all $\dbf$.
\end{enumerate}
These observations lead us to the following conjecture, whose analytic
characterization remains an open problem.

\begin{conjecture}\longthmtitle{Sufficiency of total-Hurwitzness for
    GES}\label{conj:ges} 
  Consider the dynamics~\eqref{eq:dyn-d} and assume $\Wbf$ satisfies
  Assumption~\ref{as:1}. The network has a unique GES equilibrium for
  all $\dbf \in \real^n$ if and only if $-\Ibf + \Wbf \in \Hc$.
\end{conjecture}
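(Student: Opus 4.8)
\emph{Proof proposal for Conjecture~\ref{conj:ges}.} The ``only if'' direction is exactly Theorem~\ref{thm:ges}(ii), so the substance of the statement is the ``if'' direction: $-\Ibf+\Wbf\in\Hc$ implies that~\eqref{eq:dyn-d} has a unique, globally exponentially stable equilibrium for every $\dbf\in\real^n$. Since $\Abf\in\Hc\Rightarrow-\Abf\in\Pc$ (Lemma~\ref{lem:inc-mat-class}(iv)), taking $\Abf=-\Ibf+\Wbf$ gives $\Ibf-\Wbf\in\Pc$, so Theorem~\ref{thm:eue} already yields EUE for all $\dbf$; denote the equilibrium $\xbf^*$. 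The remaining task is global (exponential) attractivity of $\xbf^*$.

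The first step I would take is to remove the dependence on $\dbf$. Writing $\ybf=\xbf-\xbf^*$ and $\bbf=\Wbf\xbf^*+\dbf$ (so $\xbf^*=[\bbf]_\zeros^\infty$), the shifted dynamics become $\tau\dot\ybf=-\ybf+[\Wbf\ybf+\bbf]_\zeros^\infty-[\bbf]_\zeros^\infty=-\ybf+[\Wbf\ybf]_\zeros^\infty+e(\ybf)$, and the elementary componentwise bound $\bigl|[u+b]_0^\infty-[b]_0^\infty-[u]_0^\infty\bigr|\le|b|$ shows that $e(\ybf)$ is globally bounded by a constant depending only on $\bbf$. Hence it suffices to prove: (a) the positively homogeneous switched-linear system $\tau\dot\zbf=-\zbf+[\Wbf\zbf]_\zeros^\infty$ is GES (its unique equilibrium is $\zbf=\zeros$, by Theorem~\ref{thm:eue} with $\dbf=\zeros$); and (b) a robustness argument that upgrades (a) to global attractivity of the bounded-input perturbed $\ybf$-system, crucially using that $\ybf=\zeros$ is the \emph{only} equilibrium.

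For (a), the modes of the homogeneous system are the matrices $-\Ibf+\Sigmab^\ell\Wbf$, $\sigmab\in\{0,\ell\}^n$, each of which (being block upper-triangular with diagonal blocks $-\Ibf+\Wbf_{(\sigmab)}$ and $-\Ibf$) is Hurwitz \emph{precisely} because $-\Ibf+\Wbf\in\Hc$; each is active on the polyhedral cone $\Omega_\sigmab$, and $\Ibf-\Wbf\in\Pc$ forces the field $\xbf\mapsto\xbf-[\Wbf\xbf]_\zeros^\infty$ to be a coherently oriented piecewise-linear homeomorphism of $\real^n$ (cf.\ the discussion following Theorem~\ref{thm:eue} and~\cite{DK-RL:87}). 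The plan is to exploit this coherent-orientation geometry to build a continuous, degree-two positively homogeneous, piecewise-quadratic Lyapunov function $V(\zbf)=\zbf^T\Pbf_\sigmab\zbf$ on $\Omega_\sigmab$: impose the decrease conditions $\Pbf_\sigmab(-\Ibf+\Sigmab^\ell\Wbf)+(-\Ibf+\Sigmab^\ell\Wbf)^T\Pbf_\sigmab<\zeros$ restricted to $\Omega_\sigmab$ together with continuity of $V$ across each shared facet, relax both via the $S$-procedure to an LMI system, and argue its feasibility from the tiling/adjacency structure that coherent orientation imposes on the images $f(\Omega_\sigmab)$ so that the $\Pbf_\sigmab$ patch consistently. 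An alternative, trajectory-based route is to bound the growth of $\|\zbf\|$ over one full passage through a cycle of cells, using that each $A_\sigmab$ is a contraction in its own metric and that injectivity of $f$ constrains the cell-transition graph; the planar analysis in~\cite{KCT-HT-WZ:05} suggests this is tractable in low dimension. Step (b) is then comparatively routine: a degree-one homogeneous GES system admits a homogeneous Lyapunov function with $c_1\|\zbf\|^2\le V\le c_2\|\zbf\|^2$, $\dot V\le-c_3 V$, $\|\nabla V\|\le c_4\|\zbf\|$; feeding in the bounded $e(\ybf)$ gives ultimate boundedness of every $\ybf$-trajectory to a ball $B_R$, and convergence to $\ybf=\zeros$ then follows by combining local exponential stability of $\ybf=\zeros$ (immediate if $\xbf^*$ is interior to a cell, and a consequence of (a) applied to the conic system apexed at $\xbf^*$ otherwise) with the exclusion of any other invariant set in $B_R$ --- i.e.\ ruling out limit cycles and recurrence, for which a $\dbf$-adapted piecewise-quadratic certificate or a LaSalle-type argument on the piecewise-affine field is the natural tool.

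The main obstacle is Step (a), specifically the Lyapunov-certificate construction. It is genuinely delicate because arbitrary switched-linear systems with Hurwitz modes can be unstable, so the proof \emph{must} use the coherent-orientation (P-matrix) structure essentially; moreover, total $\Lc$-stability --- the exact condition under which a \emph{common} quadratic certificate exists (Theorem~\ref{thm:ges}(i)) --- is strictly stronger than total Hurwitzness (Lemma~\ref{lem:inc-mat-class}(iii) and Remark~\ref{rem:ce}), so one is forced into genuinely piecewise-quadratic (or non-quadratic) certificates, and whether total Hurwitzness always guarantees their existence for this particular switching partition is exactly what is not known. The supporting evidence --- the planar case, local exponential stability for almost every $\dbf$, and extensive simulations --- makes the conjecture plausible, but closing this gap in general dimension is what remains open.
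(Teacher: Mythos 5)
You should note first that the paper itself offers \emph{no} proof of this statement: it is explicitly labeled a conjecture, introduced with the sentence that its ``analytic characterization remains an open problem.'' So there is nothing in the paper to compare your argument against, and the right standard is whether your proposal actually closes the open question. It does not, and to your credit you say so. The parts you do establish are correct and match what the paper already provides: the ``only if'' direction is Theorem~\ref{thm:ges}(ii); EUE under $-\Ibf+\Wbf\in\Hc$ follows from Lemma~\ref{lem:inc-mat-class}(iv) and Theorem~\ref{thm:eue}; the reduction to the homogeneous system via $\ybf=\xbf-\xbf^*$ with a globally bounded error term $e(\ybf)$ is sound (the componentwise bound $\bigl|[u+b]_0^\infty-[b]_0^\infty-[u]_0^\infty\bigr|\le|b|$ checks out case by case); and the observation that every mode $-\Ibf+\Sigmab^\ell\Wbf$ is Hurwitz exactly when $-\Ibf+\Wbf\in\Hc$ is the same computation as in the proof of Lemma~\ref{lem:inc-mat-class}(iii).

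The genuine gap is that your Step (a) is not a proof step but a restatement of the conjecture in Lyapunov language. You correctly note that a common quadratic certificate cannot exist in general (since $\Lc$ is strictly stronger than the Hurwitz condition, cf.\ Remark~\ref{rem:ce}), and you propose an $S$-procedure/piecewise-quadratic construction whose feasibility you would ``argue from the tiling/adjacency structure that coherent orientation imposes'' --- but no such argument is given, and whether one exists is precisely what is open. Arbitrary conewise-linear systems with Hurwitz modes can be unstable, so everything hinges on extracting a quantitative stability mechanism from the P-matrix/coherent-orientation property, and neither you nor the paper supplies one. I would also push back on calling Step (b) ``comparatively routine'': ultimate boundedness from a bounded perturbation of a GES homogeneous system is fine, but upgrading that to global convergence to the unique equilibrium requires excluding limit cycles and other recurrent behavior of the piecewise-affine field inside the absorbing ball, which for non-planar piecewise-affine systems is itself a hard problem with no off-the-shelf LaSalle argument (the vector field is not a gradient and the natural invariant-set candidates are not easily characterized). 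In short: the proposal is a reasonable research plan for attacking the conjecture, with the two hard sub-problems correctly located, but it is not a proof, and the statement remains open both in the paper and after your attempt.
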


We next study the GES of the uniform excitatory-inhibitory networks of
Example~\ref{ex:ei}.

\begin{example}\longthmtitle{Wilson-Cowan model,
    cont'd}\label{ex:ei2}
  {\rm Consider the Wilson-Cowan model 
    of Example~\ref{ex:ei}. One can verify
    \begin{align}\label{eq:ei-h}
      -\Ibf + \Wbf_{EI} \in \Hc \Leftrightarrow \alpha n w_{ee} < 1,
    \end{align}
    thus being equivalent (in this two-dimensional case) to $\Ibf -
    \Wbf_{EI} \in \Pc$ and, interestingly, only restricting $w_{ee}$
    while $w_{ei}$, $w_{ie}$, and $w_{ii}$ are completely free.
    Figure~\ref{fig:ei} shows sample phase portraits for the cases
    $\alpha n w_{ee} < 1$ and $\alpha n w_{ee} > 1$, matching our
    expectations from Theorems~\ref{thm:eue} and~\ref{thm:ges}.  While
    our focus here is on the existence, uniqueness, and stability of
    equilibria, it is instructive to highlight the role of equilibrium
    analysis and, in particular, \emph{lack of stable equilibria} in
    the generation of oscillations in the same linear-threshold
    Wilson-Cowan model~\cite{EN-JC:19-acc}. In this case, both the
    linear-threshold Wilson-Cowan model and the popular Kuramoto
    model~\cite{YQ-YK-MC:18,MJ-XY-MP-HS-KH:18,TM-GB-DSB-FP:18} of
    neural oscillations provide parallel simplifications to the (more
    biologically accurate) Wilson-Cowan model with smooth sigmoidal
    nonlinearities, cf.~\cite{HGS-PW:90}.} \oprocend
\end{example}

\begin{figure}
  \centering \subfloat[]{\small
    \hspace*{-18pt} \includegraphics[width=.9\linewidth]{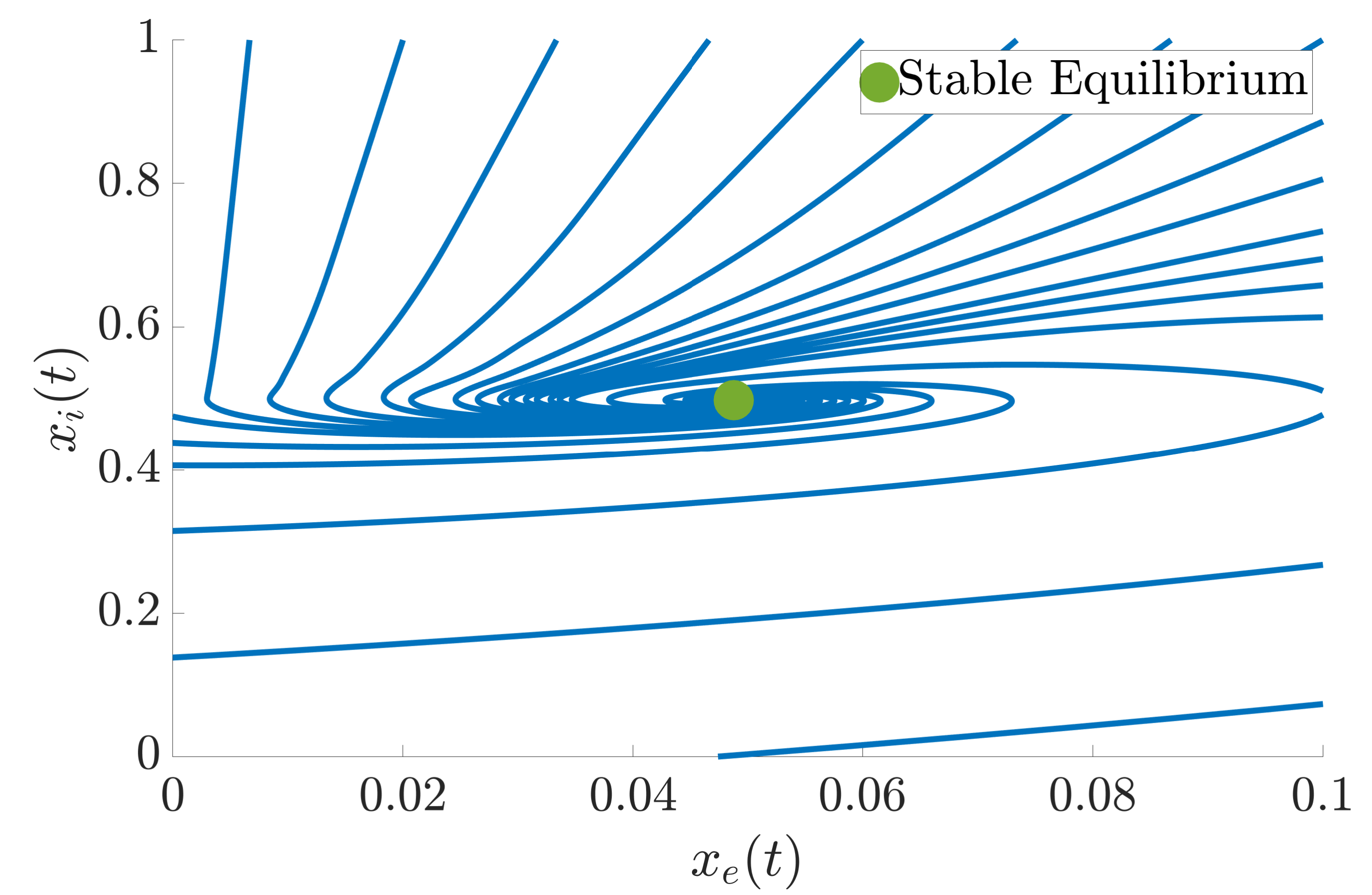}
  }
  \\
  \subfloat[]{\small
    \hspace*{-22pt} \includegraphics[width=0.9\linewidth]{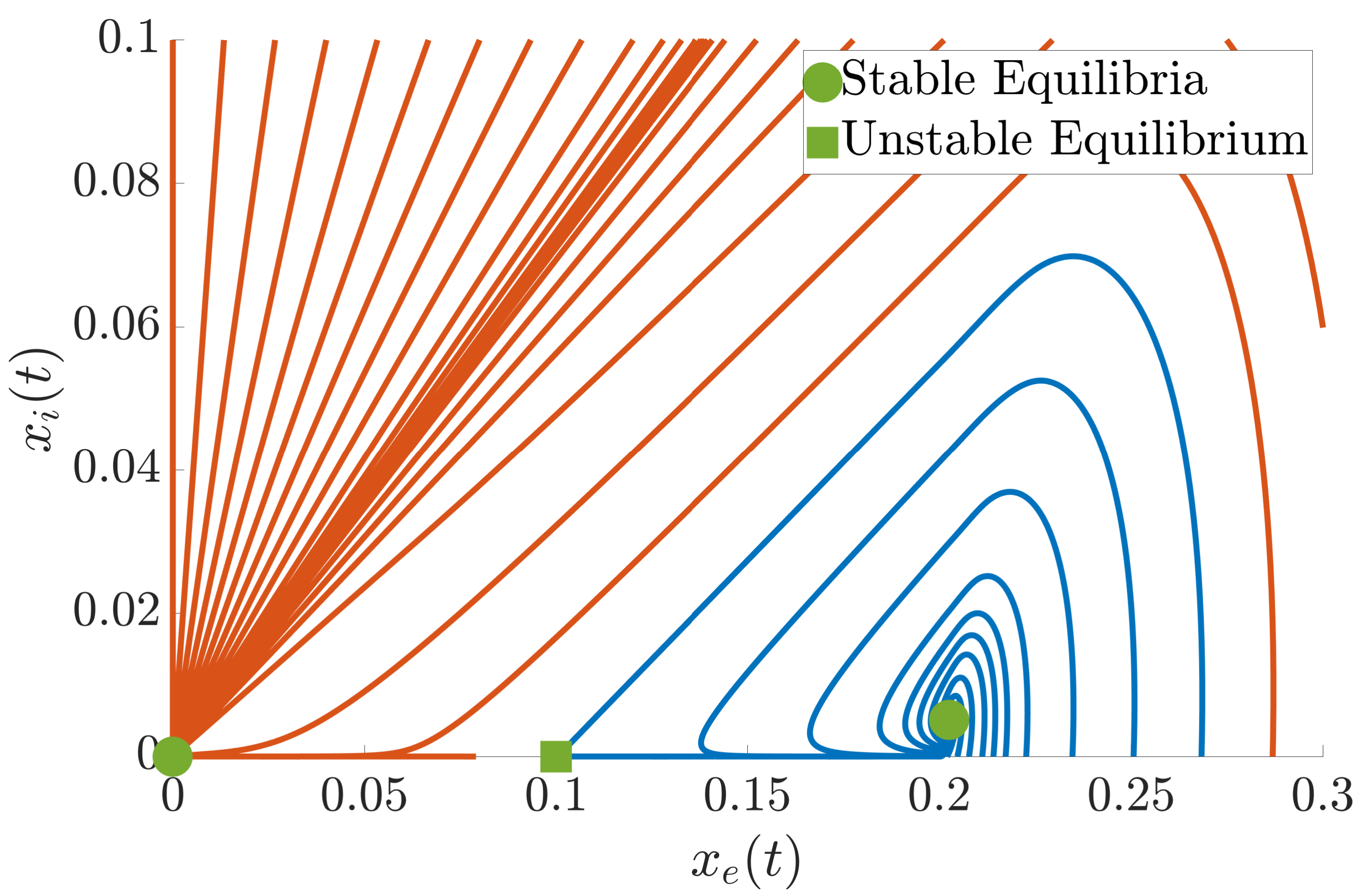}
  }
  \caption{\small Network trajectories for the excitatory-inhibitory
    network of Example~\ref{ex:ei2}.  \textbf{(a)} When $\Wbf_{EI} =
    [0.9, -2; 5, -1.5]$, $\dbf_{EI} = [1; 1]$, the network has a
    unique GES equilibrium.  \textbf{(b)} However, for $\Wbf_{EI} =
    [1.1, -2; 5, -1.5]$, $\dbf_{EI} = [-0.01; -1]$, the network
    exhibits bi-stable behavior. $\mbf = \infty \ones_2$ in both
    cases. The trajectory colors correspond to the equilibria to which
    they converge.  Although $\alpha n w_{ee} > 1$, the network is GES
    for most values of $\dbf_{EI}$, so we used
    Proposition~\ref{prop:opt-cond} for finding a $\dbf_{EI}$ that
    leads to multi-stability.}
  \label{fig:ei}
\vspace*{-1.5ex}
\end{figure}

\subsection{Boundedness of Solutions}\label{subsec:b}

Here we study the boundedness of solutions under the
dynamics~\eqref{eq:dyn}.  While our discussion so far has been
about~\eqref{eq:dyn-d} (with constant $\dbf$), we switch for the
remainder of this section to~\eqref{eq:dyn} for the sake of
generality, as the same results are applicable without major
modifications. Since network trajectories are trivially bounded if
$\mbf < \infty \ones_n$, we limit our discussion here to the unbounded
case. Note that in reality, the firing rate of any neuron is bounded
by a maximum rate dictated by its refractory period (the minimum
inter-spike duration).  Unboundedness of solutions in this model
corresponds in practice to the so-called ``run-away'' excitations
where the firing of neurons grow beyond sustainable rates for
prolonged periods of time, which is neither desirable nor
safe~\cite{PJ-JC-ADP-JEF-WCMV-XL-MP-AFB-RWD-JGRJ:10}.

Since GES implies boundedness of solutions, any condition that is
sufficient for GES is also sufficient for boundedness. However,
boundedness of solutions can be guaranteed under less restrictive
conditions. The next result shows that inhibition, overall, preserves
boundedness.

\begin{lemma}\longthmtitle{Inhibition preserves
    boundedness}\label{lem:inh-bdd}
  Let $t \mapsto \xbf(t)$ be the solution of~\eqref{eq:dyn} starting
  from initial state $\xbf(0) = \xbf_0$. Consider the system
  \begin{align}\label{eq:ib-xp}
    \tau \dot{\bar \xbf}(t) = -\bar \xbf(t) + [[\Wbf]_\zeros^\infty \bar \xbf(t) +
    \dbf(t)]_\zeros^\infty, \quad \bar \xbf(0) = \xbf_0.
  \end{align}
  Then, $\xbf(t) \le \bar \xbf(t)$ for all $t \ge 0$.
\end{lemma}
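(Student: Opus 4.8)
My plan is to treat the comparison system~\eqref{eq:ib-xp} as a \emph{cooperative (quasimonotone) system that dominates}~\eqref{eq:dyn} componentwise, and then invoke a comparison principle. Throughout this subsection $\mbf = \infty\ones_n$. First I would record the well-posedness: exactly as for~\eqref{eq:dyn}, the right-hand side of~\eqref{eq:ib-xp} is globally Lipschitz (it is the composition of the affine map $\ybf \mapsto [\Wbf]_\zeros^\infty\ybf + \dbf(t)$ with the $1$-Lipschitz projection $[\,\cdot\,]_\zeros^\infty$, plus the linear term $-\ybf$), so~\eqref{eq:ib-xp} has a unique $C^1$ solution $\bar\xbf$ on $[0,\infty)$; and since~\eqref{eq:ib-xp} is an instance of~\eqref{eq:dyn} with weight matrix $[\Wbf]_\zeros^\infty$, $\mbf = \infty\ones_n$, and $\bar\xbf(0) = \xbf_0 \ge \zeros$, the same reasoning that gives $\xbf(t) \ge \zeros$ in~\eqref{eq:dyn} gives $\bar\xbf(t) \ge \zeros$ for all $t \ge 0$.

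Next I would isolate the two structural facts that drive the comparison. Write $f(t,\ybf) \triangleq -\ybf + [\Wbf\ybf + \dbf(t)]_\zeros^\infty$ and $g(t,\ybf) \triangleq -\ybf + [[\Wbf]_\zeros^\infty\ybf + \dbf(t)]_\zeros^\infty$, so $\xbf$ solves $\tau\dot\xbf = f(t,\xbf)$ and $\bar\xbf$ solves $\tau\dot{\bar\xbf} = g(t,\bar\xbf)$. (a) \emph{Quasimonotonicity (Kamke condition) of $g$:} if $\ybf \le \zbf$ and $y_i = z_i$, then $g_i(t,\ybf) \le g_i(t,\zbf)$, because $[\Wbf]_\zeros^\infty \ge \zeros$ gives $([\Wbf]_\zeros^\infty\ybf)_i \le ([\Wbf]_\zeros^\infty\zbf)_i$, the scalar map $[\,\cdot\,]_0^\infty$ is nondecreasing, and the terms $-y_i$, $-z_i$ coincide. (b) \emph{Domination on the nonnegative orthant:} for every $\ybf \ge \zeros$, $\Wbf\ybf \le [\Wbf]_\zeros^\infty\ybf$ componentwise, since $W_{ij}y_j \le \max\{W_{ij},0\}\,y_j$ whenever $y_j \ge 0$; applying the nondecreasing map $[\,\cdot\,]_\zeros^\infty$ yields $f(t,\ybf) \le g(t,\ybf)$. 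In particular, since $\xbf(t) \ge \zeros$ for all $t$, the trajectory of~\eqref{eq:dyn} is a subsolution of the comparison system: $\tau\dot\xbf(t) = f(t,\xbf(t)) \le g(t,\xbf(t))$.

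With (a) and (b) in hand I would run a first-exit argument, perturbed to make the decisive inequality strict. For $\epsilon>0$, let $\bar\xbf_\epsilon$ solve $\tau\dot{\bar\xbf}_\epsilon = g(t,\bar\xbf_\epsilon) + \epsilon\ones_n$ with $\bar\xbf_\epsilon(0) = \xbf_0 + \epsilon\ones_n$, and put $\vbf \triangleq \bar\xbf_\epsilon - \xbf$, so $\vbf(0) = \epsilon\ones_n > \zeros$. I claim $\vbf(t) > \zeros$ for all $t \ge 0$. Otherwise let $t_1 \triangleq \inf\{t \ge 0 : \min_i v_i(t) \le 0\}$, which (as $\min_i v_i(\cdot)$ is continuous and positive at $0$) is finite and positive; then $\vbf(t) \ge \zeros$ on $[0,t_1]$ and some component obeys $v_i(t_1)=0$ while $v_i(t)>0$ on $[0,t_1)$, forcing $\dot v_i(t_1) \le 0$. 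But $\bar\xbf_\epsilon(t_1) \ge \xbf(t_1)$ with $(\bar\xbf_\epsilon(t_1))_i = x_i(t_1)$, so (a) gives $g_i(t_1,\xbf(t_1)) \le g_i(t_1,\bar\xbf_\epsilon(t_1))$, whence, using the subsolution inequality, $\tau\dot v_i(t_1) = g_i(t_1,\bar\xbf_\epsilon(t_1)) + \epsilon - f_i(t_1,\xbf(t_1)) \ge g_i(t_1,\bar\xbf_\epsilon(t_1)) - g_i(t_1,\xbf(t_1)) + \epsilon \ge \epsilon > 0$, a contradiction. Hence $\bar\xbf_\epsilon(t) > \xbf(t)$ for all $t$. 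Finally, by continuous dependence of the solutions of the globally Lipschitz family~\eqref{eq:ib-xp} on the initial condition and the forcing, $\bar\xbf_\epsilon(t) \to \bar\xbf(t)$ as $\epsilon \to 0^+$ for each $t$, giving $\bar\xbf(t) \ge \xbf(t)$.

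The one genuinely delicate point is exactly the non-strictness that appears if one argues directly with $\bar\xbf - \xbf$: at a first touching time the offending component's derivative is only guaranteed to be $\le 0$, not $<0$, so a naive Nagumo-type argument stalls for merely Lipschitz, non-smooth vector fields. Perturbing the drift (and the initial condition) by $\epsilon$ creates the strict margin needed at the touching time, and $\epsilon\to 0$ recovers the non-strict conclusion; alternatively, facts (a)--(b) let one quote a standard comparison theorem for quasimonotone systems (e.g.\ differential-inequality results à la Walter, or Smith's monograph on monotone dynamical systems). Everything else---global Lipschitz continuity, invariance of $\realnonneg^n$, and the two componentwise inequalities---is routine.
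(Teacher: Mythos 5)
Your proposal is correct and follows essentially the same route as the paper's proof: both rest on the two facts that the excitatory-only vector field is quasi-monotone nondecreasing (because $[\Wbf]_\zeros^\infty \ge \zeros$) and that it dominates the original vector field on the nonnegative orthant (the paper phrases this by absorbing $\deltab(t) = (\Wbf - [\Wbf]_\zeros^\infty)\xbf(t) \le \zeros$ into the projection argument), and both then conclude via a comparison principle for quasimonotone systems. The only difference is that the paper cites a vector-valued Comparison Principle (\cite[Lemma~3.4]{SKYN:13}) where you prove it from scratch with the $\epsilon$-perturbed first-exit argument, which correctly handles the non-strictness issue you flag.
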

\begin{proof}
  Since $\xbf(t) \ge \zeros$ for all $t$, we can write~\eqref{eq:dyn} as
  \begin{align}\label{eq:ib-x2}
    \tau \dot \xbf(t) = -\xbf(t) + [[\Wbf]_\zeros^\infty \xbf(t) + \dbf(t) + \deltab(t)]_\zeros^\infty,
  \end{align}
  where $ \deltab(t) \triangleq (\Wbf - [\Wbf]_\zeros^\infty) \xbf(t)
  \le \zeros$.  Since the vector field $(\xbf, t) \mapsto -\xbf +
  [[\Wbf]_\zeros^\infty \xbf + \dbf(t)]_\zeros^\infty$ is
  quasi-monotone nondecreasing\footnote{A vector field $f: \real^n
    \times \real \to \real^n$ is \emph{quasi-monotone
      nondecreasing}~\cite[Def 2.3]{SKYN:13} if for any $\xbf, \ybf
    \in \real^n$ and any $i \in \until{n}$,
    \begin{align*}
      \big(x_i = y_i \ \ \text{and} \ \ x_j \le y_j \ \text{for all} \ j
      \neq i \big) \Rightarrow f(\xbf, t) \le f(\ybf, t).
    \end{align*}
  }, the result follows by using the monotonicity of $[\, \cdot\,
  ]_0^\infty$ and applying the vector-valued extension of the
  Comparison Principle given in~\cite[Lemma~3.4]{SKYN:13}
  to~\eqref{eq:ib-xp} and~\eqref{eq:ib-x2}.
\end{proof}

While the result about preservation of boundedness under inhibition in
Lemma~\ref{lem:inh-bdd} is intuitive, one must interpret it carefully:
it is \emph{not} in general true that adding inhibition to any
dynamics~\eqref{eq:dyn} can only decrease $\xbf(t)$. This is only true
if the network vector field is quasi-monotone nondecreasing, as is the
case with the excitatory-only dynamics~\eqref{eq:ib-xp}. Intuitively,
this is because, if the network has inhibitory nodes, adding
inhibition to their input can in turn ``disinhibit'' and increase the
activity of the rest of the network.  The next result identifies a
condition on the excitatory part of the dynamics to determine if
trajectories are bounded.

\begin{theorem}\longthmtitle{Boundedness}\label{thm:bdd}
  Consider the network dynamics~\eqref{eq:dyn}. If the
  corresponding excitatory-only dynamics~\eqref{eq:ib-xp} has bounded
  trajectories, the trajectories of~\eqref{eq:dyn} are also bounded by
  the same bound as those of~\eqref{eq:ib-xp}.
\end{theorem}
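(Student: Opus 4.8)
The plan is to obtain the result as an almost immediate consequence of Lemma~\ref{lem:inh-bdd} together with the forward invariance of the nonnegative orthant under~\eqref{eq:dyn} already noted in Section~\ref{sec:prob-form} (we focus on the unbounded case $\mbf = \infty\ones_n$, since for finite $\mbf$ trajectories are trivially bounded by $\mbf$). Fix an initial condition $\xbf_0$ with $\zeros \le \xbf_0 \le \mbf$, and let $t \mapsto \xbf(t)$ and $t \mapsto \bar\xbf(t)$ denote the solutions of~\eqref{eq:dyn} and~\eqref{eq:ib-xp}, respectively, both starting from $\xbf_0$.

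First I would recall that, since $[\,\cdot\,]_\zeros^\mbf$ keeps the state nonnegative, $\xbf(t) \ge \zeros$ for all $t \ge 0$ (and likewise $\bar\xbf(t) \ge \zeros$). Next, Lemma~\ref{lem:inh-bdd} supplies the one-sided comparison $\xbf(t) \le \bar\xbf(t)$ componentwise for all $t \ge 0$. Combining the two, one gets the sandwich $\zeros \le \xbf(t) \le \bar\xbf(t)$, i.e., $0 \le x_i(t) \le \bar x_i(t)$ for every $i \in \until{n}$ and every $t \ge 0$.

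Finally, I would invoke the hypothesis that $t \mapsto \bar\xbf(t)$ is bounded, say $\|\bar\xbf(t)\| \le B$ for all $t \ge 0$ (with $B = B(\xbf_0)$ possibly depending on the initial state). Since the Euclidean norm — indeed any $p$-norm — is monotone with respect to the componentwise order on $\realnonneg^n$, the sandwich yields $\|\xbf(t)\| \le \|\bar\xbf(t)\| \le B$ for all $t \ge 0$, so the trajectory of~\eqref{eq:dyn} is bounded by the same bound $B$ as that of~\eqref{eq:ib-xp}, which is the claim.

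There is essentially no genuine obstacle here: all the work is already carried by Lemma~\ref{lem:inh-bdd} (whose proof rests on quasimonotonicity of the excitatory-only vector field and the vector Comparison Principle) and by invariance of the nonnegative orthant. The only points worth stating with care are (a) that Lemma~\ref{lem:inh-bdd} gives only an \emph{upper} comparison, so it must be paired with $\xbf(t) \ge \zeros$ to actually pin the trajectory between two bounded functions, and (b) that ``bounded by the same bound'' is to be read per initial condition and in a norm monotone on $\realnonneg^n$, which is precisely what makes the last inequality valid.
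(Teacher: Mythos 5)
Your argument is exactly the one the paper intends: the authors state that the theorem ``follows from Lemma~\ref{lem:inh-bdd}'' and omit the details, and your proposal correctly fills them in by pairing the upper comparison $\xbf(t) \le \bar\xbf(t)$ from that lemma with the forward invariance of the nonnegative orthant ($\xbf(t) \ge \zeros$) to sandwich the trajectory. The care you take with points (a) and (b) is appropriate and does not depart from the paper's route.
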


The proof of this result follows from Lemma~\ref{lem:inh-bdd} and is
therefore omitted. The following result, similar to
Proposition~\ref{prop:comp-ges}, provides a more conservative but
computationally feasible test for boundedness.

\begin{corollary}\longthmtitle{Boundedness}\label{cor:bdd}
  Consider the network dynamics~\eqref{eq:dyn} and assume that
  $\dbf(t)$ is bounded, i.e., there exists $\bar \dbf \in \realpos^n$ such
  that $\dbf(t) \le \bar \dbf, t \ge 0$. If $\rho([\Wbf]_\zeros^\infty) < 1$, then the
  network trajectories remain bounded for all $t \ge 0$.
\end{corollary}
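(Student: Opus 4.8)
The plan is to reduce the claim to the excitatory-only dynamics~\eqref{eq:ib-xp} via Theorem~\ref{thm:bdd}, and then observe that for that system, on the forward-invariant nonnegative orthant and with a nonnegative input bound, the rectification never acts, so the dynamics collapses to a stable linear system. First, by Theorem~\ref{thm:bdd} it suffices to show that the trajectories of~\eqref{eq:ib-xp} are bounded; and since $\xbf(t) \ge \zeros$ for all $t$ (cf. the invariance remark after~\eqref{eq:dyn}) and $\xbf(t) \le \bar\xbf(t)$ by Lemma~\ref{lem:inh-bdd}, it is in fact enough to bound $\bar\xbf(t)$ from above.

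Next I would bring in the hypothesis $\dbf(t) \le \bar\dbf$ through monotonicity. Consider the vector field $\ybf \mapsto -\ybf + [[\Wbf]_\zeros^\infty \ybf + \bar\dbf]_\zeros^\infty$. It is quasi-monotone nondecreasing (the off-diagonal coupling is through the nonnegative matrix $[\Wbf]_\zeros^\infty$ composed with the nondecreasing scalar map $[\,\cdot\,]_0^\infty$), and, because $\dbf(t) \le \bar\dbf$ and $[\,\cdot\,]_0^\infty$ is nondecreasing, it dominates entrywise the right-hand side of~\eqref{eq:ib-xp} at every $t$. Hence, applying the vector-valued Comparison Principle (as in~\cite[Lemma 3.4]{SKYN:13}, exactly as in the proof of Lemma~\ref{lem:inh-bdd}), $\bar\xbf(t) \le \ybf(t)$ for all $t \ge 0$, where $\ybf$ solves $\tau \dot\ybf = -\ybf + [[\Wbf]_\zeros^\infty \ybf + \bar\dbf]_\zeros^\infty$ with $\ybf(0) = \xbf_0$.

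The key step is then to note that $\ybf(t) \ge \zeros$ for all $t$ (the nonnegative orthant is forward invariant, since $\dot y_i \ge 0$ whenever $y_i = 0$), so $[\Wbf]_\zeros^\infty \ybf(t) + \bar\dbf \ge \zeros$ and the rectification is the identity along this trajectory; thus $\ybf$ actually solves the \emph{linear} system $\tau \dot\ybf = (-\Ibf + [\Wbf]_\zeros^\infty)\ybf + \bar\dbf$. Since $[\Wbf]_\zeros^\infty$ is a nonnegative matrix, every eigenvalue $\lambda$ of it satisfies $\mathrm{Re}(\lambda) \le |\lambda| \le \rho([\Wbf]_\zeros^\infty) < 1$, so $-\Ibf + [\Wbf]_\zeros^\infty$ is Hurwitz; hence $\ybf(t)$ converges exponentially to $(\Ibf - [\Wbf]_\zeros^\infty)^{-1}\bar\dbf$ and is in particular bounded. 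Chaining $\zeros \le \xbf(t) \le \bar\xbf(t) \le \ybf(t)$ yields the uniform bound on $\xbf(t)$.

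I do not anticipate a genuine obstacle here: the only points needing a little care are verifying the quasi-monotonicity/comparison hypotheses (routine, and essentially already carried out for Lemma~\ref{lem:inh-bdd}) and the elementary spectral fact that $\rho < 1$ for a nonnegative matrix forces $-\Ibf + [\Wbf]_\zeros^\infty$ to be Hurwitz. The conceptual crux is simply recognizing that the excitatory-only dynamics, restricted to its invariant nonnegative orthant and driven by a nonnegative input, never hits the lower threshold and therefore reduces to a linear system whose stability is governed by $\rho([\Wbf]_\zeros^\infty)$.
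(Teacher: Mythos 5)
Your proof is correct, and its skeleton matches the paper's: reduce to the excitatory-only dynamics~\eqref{eq:ib-xp} via Theorem~\ref{thm:bdd}/Lemma~\ref{lem:inh-bdd}, then use the quasi-monotone comparison principle to dominate that trajectory by the one driven by the constant input $\bar\dbf$. Where you diverge is in the final step: the paper closes by invoking Proposition~\ref{prop:comp-ges} (GES of a linear-threshold network whenever $\rho(|\Wbf|)<1$, applied to the weight matrix $[\Wbf]_\zeros^\infty$), whose own proof is nontrivial and deferred elsewhere; you instead observe that on the forward-invariant nonnegative orthant, with $[\Wbf]_\zeros^\infty \ge \zeros$ and $\bar\dbf \in \realpos^n$, the argument of the rectification is always nonnegative, so the dominating system is genuinely linear, $\tau\dot\ybf = (-\Ibf + [\Wbf]_\zeros^\infty)\ybf + \bar\dbf$, and $\rho([\Wbf]_\zeros^\infty)<1$ forces $-\Ibf + [\Wbf]_\zeros^\infty$ to be Hurwitz. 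Your route is more elementary and self-contained---it bypasses the general GES machinery entirely and even yields the explicit asymptotic bound $(\Ibf - [\Wbf]_\zeros^\infty)^{-1}\bar\dbf$---whereas the paper's is shorter on the page because it reuses an already-stated proposition. Note that your linearity observation is special to the excitatory-only, nonnegative-input situation and would not substitute for Proposition~\ref{prop:comp-ges} in general, but that is exactly the situation at hand here, so the argument is sound.
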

\begin{proof}
  If $\dbf(t)$ is constant, the result follows from
  Theorem~\ref{thm:bdd} and Proposition~\ref{prop:comp-ges}. 
  If $\dbf(t)$ is not constant, the same argument proves boundedness
  of trajectories for
  \begin{align}\label{eq:x-bar-d-bar}
    \tau \dot{\bar \xbf}(t) = -\bar \xbf(t) + [[\Wbf]_\zeros^\infty \bar \xbf(t) + \bar \dbf]_\zeros^\infty,
    \quad \bar \xbf(0) = \xbf_0.
  \end{align}
  The result then follows from the quasi-monotonicity of $(\xbf, t)
  \mapsto -\xbf + [\Wbf^+ \xbf + \bar \dbf]_\zeros^\infty$, similar to
  Lemma~\ref{lem:inh-bdd}.
\end{proof}

\begin{example}\longthmtitle{Uniform excitatory-inhibitory networks,
    cont'd}\label{ex:ei3}
  {\rm Let us revisit the excitatory-inhibitory network of
    Example~\ref{ex:ei}, here with $\mbf = \infty \ones_2$.  Clearly,
    the excitatory-only dynamics have bounded trajectories if and only
    if
  \begin{align}\label{eq:ei-bdd}
    \rho([\Wbf_{EI}]_\zeros^\infty) < 1 \Leftrightarrow \alpha n w_{ee} < 1,
  \end{align}
  which is the same condition as in~\eqref{eq:ei-h}
  and~\eqref{eq:wceue}. However, an exhaustive inspection of the
  switching regions $\{\Omega_\sigmab\}_\sigmab$ and the eigenvalues
  of $\{-\Ibf + \Sigmab^\ell \Wbf\}_\sigmab$ reveals that the
  boundedness of trajectories can also be guaranteed with the weaker
  condition
\begin{align*}
    &-\Ibf + \Wbf \ \text{be Hurwitz} \\
    &\Leftrightarrow \begin{cases}
      (1 - \alpha n w_{ee}) + (1 - (1 - \alpha) n w_{ii}) > 0, \ \text{and} \\
      (1 - \alpha n w_{ee}) (1 - (1 - \alpha) n w_{ii}) > \alpha (1 -
      \alpha) n^2 w_{ie} w_{ei},
    \end{cases}
  \end{align*}
  showing that there is room for sharpening Theorem~\ref{thm:bdd}.
  \oprocend }
\end{example}

\begin{remark}\longthmtitle{Comparison with the literature}
  {\rm In this section, we have provided a comprehensive list of
    conditions that both extend and simplify the state of the art on
    stability of dynamically isolated linear-threshold networks.  To
    the best of our knowledge, all the results are novel for the
    bounded case with the exception of the sufficiency of $\Ibf - \Wbf
    \in \Pc$ for the uniqueness of equilibria (one of the four
    implications in Theorem~\ref{thm:eue}) shown in~\cite{MF-AT:95}
    and Theorem~\ref{thm:ges}(i), for which we present a simpler
    proof.  Regarding unbounded networks, for equilibria we have
    extended~\cite[Thm 5.3]{DK-RL:87} (implying only the sufficiency
    of $\Ibf - \Wbf \in \Pc$ for EUE) to show both necessity and
    sufficiency of $\Ibf - \Wbf \in \Pc$ for both existence and
    uniqueness (Theorem~\ref{thm:eue}) and provided several results
    that partially characterize equilibria when $\Ibf - \Wbf \notin
    \Pc$.  On exponential stability of the unbounded case,
    Theorem~\ref{thm:ges} gives a simpler proof than~\cite[Thm
    1]{AP-NvdW-HN:05} for the sufficiency of $\Wbf \in \Lc$ and a
    novel proof for the necessity of $-\Ibf + \Wbf \in \Hc$.  Finally,
    our result on boundedness of trajectories (Theorem~\ref{thm:bdd})
    extends Corollary~\ref{cor:bdd} (also available in~\cite[Thm
    1]{HW-WB-HR:01}) to a much wider class of networks by exploiting
    the quasi-monotonicity of excitatory-only dynamics.  \oprocend}
\end{remark}

\section{Selective Inhibition in Bilayer Networks}\label{sec:stabilization}

Here, we study selective inhibition in bilayer networks as a building
block towards the understanding of hierarchical selective recruitment
in multilayer networks. With respect to the model described in
Section~\ref{sec:prob-form}, we consider two layers ($N = 2$), where
the dynamics of the lower layer $\Nc_2$ is described by~\eqref{eq:dyn}
and the dynamics of the upper layer $\Nc_1$ is arbitrary (this is for
generality, we consider linear-threshold dynamics for $\Nc_1$ too in a
multilayer framework in our accompanying work~\cite{EN-JC:21-tacII}).
Our goal is to study the selective inhibition of $\Nc_2^\ssm$ via the
input that it receives from~$\Nc_1$.

As pointed out in Section~\ref{sec:prob-form}, when a group of neurons
are inhibited, their activity is substantially decreased, ideally such
that their net input (their respective component of $\Wbf \xbf(t) +
\dbf(t)$) becomes negative and their firing rate decays exponentially
to zero. Therefore, the problem of selective inhibition is equivalent
to the exponential stabilization of the nodes $\Nc_2^\ssm$ to the
origin.  To this end, we decompose $\dbf(t)$ as
\begin{align}\label{eq:d-decomp}
  \dbf(t) = \Bbf \ubf(t) + \tilde \dbf.
\end{align}
The role of $\ubf(t) \in \realnonneg^p$ is to stabilize~$\Nc_2^\ssm$
to the origin while the role of $\tilde \dbf \in \real^n$ is to shape
the activity of $\Nc_2^\ssp$ by determining its equilibrium.  For the
purpose of this section, we assume $\tilde \dbf$ is given and
constant.

Let $r \le n$ be the size of $\Nc_2^\ssm$.  We partition $\xbf$,
$\Wbf$, $\Bbf$, and $\tilde \dbf$ accordingly,
\begin{align}\label{eq:wb}
  \xbf \!=\! \begin{bmatrix} \xbf^\ssm \\ \xbf^\ssp \end{bmatrix}\!, \, \Wbf \!=\!
  \begin{bmatrix}
    \Wbf^{\ssm\ssm} & \Wbf^{\ssm\ssp}
    \\
    \Wbf^{\ssp\ssm} & \Wbf^{\ssp\ssp}
  \end{bmatrix} \!, \,
   \Bbf
  \!=\! 
  \begin{bmatrix}
    \Bbf^\ssm
    \\
    \zeros
  \end{bmatrix}\!, \, \tilde \dbf \!=\! \begin{bmatrix} \zeros \\ \tilde \dbf^\ssp \end{bmatrix}\!,
\end{align}
where $\Wbf^{\ssm\ssm} \in \real^{r \times r}, \Bbf^\ssm \in
\realnonpos^{r \times p}$ is nonpositive to deliver inhibition, and
$\tilde \dbf^\ssp \in \real^{n - r}$.  $\mbf$ is decomposed similarly.
The first $r$ rows of $\Bbf$ are nonzero to allow for the inhibition
of $\Nc_2^\ssm$ while the remaining $n - r$ rows are zero to make this
inhibition selective to $\Nc_2^\ssm$.%
\footnote{This sparsity pattern can always be achieved by
  (re-)labeling the $r$ directly controlled nodes as $1, \dots, r$, so
  that the $n - r$ last entries of $\Bbf$ are $0$.}  The sparsity of
the entries of $\tilde \dbf$ is opposite to the rows of $\Bbf$ due to
the complementary roles of $\Bbf \ubf(t)$ and $\tilde \dbf$.
 
The mechanisms of inhibition in the brain are broadly
divided~\cite{JSI-MS:11} into two categories, feedforward and
feedback, based on how the signal $\ubf(t)$ is determined.  In the
following, we separately study each scenario, analyzing the interplay
between the corresponding mechanism and network structure.  We will
later combine both mechanisms when we discuss the complete HSR
framework in~\cite{EN-JC:21-tacII}, as natural selective inhibition is
not purely feedback or feedforward.

\subsection{Feedforward Selective Inhibition}

Feedforward inhibition~\cite{JSI-MS:11} refers to the scenario where
$\Nc_1$ provides an inhibition based on its own ``desired''
activity/inactivity pattern for $\Nc_2$ and irrespective of the
current state of $\Nc_2$. This is indeed possible if the inhibition is
sufficiently strong, as excessive inhibition has no effect on nodal
dynamics due to the (negative) thresholding in
$[\,\cdot\,]_0^m$. However, this independence from the activity level
of $\Nc_2$ requires some form of guaranteed boundedness, as defined
next.

\begin{definition}\longthmtitle{Monotone boundedness}
  The dynamics~\eqref{eq:dyn} is \emph{monotonically bounded} if for
  any $\bar \dbf \in \real^n$ there exists $\nub(\bar \dbf) \in
  \realnonneg^n$ such that $\xbf(t) \le \nub(\bar \dbf), t \ge 0$ for
  any $\dbf(t) \le \bar \dbf, t \ge 0$.
\end{definition}

From Lemma~\ref{lem:inh-bdd} and Proposition~\ref{prop:comp-ges},
\eqref{eq:dyn} is monotonically bounded if $\rho([\Wbf]_\zeros^\infty)
< 1$ and the initial condition $\xbf_0$ is restricted to a bounded
domain. Also in reality, the state of any biological neuronal network
is uniformly bounded due to the refractory period of its neurons,
implying monotone boundedness.  We next show that the GES of
$\Nc_2^\ssp$ is both necessary and sufficient for feedforward
selective inhibition.

\begin{theorem}\longthmtitle{Feedforward selective
    inhibition}\label{thm:dec-sub2}
  Consider the dynamics~\eqref{eq:dyn}, where the external input is
  given by~\eqref{eq:d-decomp}-\eqref{eq:wb} with a constant
  feedforward control
  \begin{align*}
    \ubf(t) \equiv \ubf \ge \zeros.
  \end{align*}
  Assume that \eqref{eq:dyn} is monotonically bounded and
  \begin{align}\label{eq:range}
    \range([\Wbf^{\ssm\ssm} \ \Wbf^{\ssm\ssp}]) \subseteq \range(\Bbf^\ssm).
  \end{align}
  Then, for any $\tilde \dbf^\ssp \in \real^{n - r}$, there exists
  $\bar \ubf \in \realnonneg^p$ such that for all $\ubf \ge \bar
  \ubf$, $\Nc_2$ is GES relative to a unique equilibrium of the form
  $\xbf_* = [\zeros_r^T \ (\xbf_*^\ssp)^{T}]^T$ if and only if
  $\Wbf^{\ssp\ssp}$ is such that the internal $\Nc_2^\ssp$ dynamics
  \begin{align}\label{eq:subnet}
    \tau \dot \xbf^\ssp = -\xbf^\ssp + [\Wbf^{\ssp\ssp} \xbf^\ssp +
    \tilde \dbf^\ssp]_\zeros^{\mbf\ssp}, 
  \end{align}
  is GES relative to a unique equilibrium.
\end{theorem}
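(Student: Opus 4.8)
The plan is to exploit the triangular (cascade) structure that a strong feedforward inhibition induces in $\Nc_2$: once $\ubf$ is large enough, every node of $\Nc_2^\ssm$ receives a net input clamped below threshold, so $\xbf^\ssm$ decays to $\zeros$ like $e^{-t/\tau}$ and then feeds $\Nc_2^\ssp$ only an exponentially vanishing disturbance on top of its intrinsic dynamics~\eqref{eq:subnet}. The enabling observation is that the range hypothesis~\eqref{eq:range} lets us write $[\Wbf^{\ssm\ssm} \ \Wbf^{\ssm\ssp}] = \Bbf^\ssm \Kbf$ for some $\Kbf \in \real^{p \times n}$, so that the $\Nc_2^\ssm$-block of the argument of $[\,\cdot\,]_\zeros^\mbf$ in~\eqref{eq:dyn} is $\Wbf^{\ssm\ssm}\xbf^\ssm + \Wbf^{\ssm\ssp}\xbf^\ssp + \Bbf^\ssm \ubf = \Bbf^\ssm(\Kbf \xbf + \ubf)$.

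\emph{Uniform clamping of $\Nc_2^\ssm$.} Since $\Bbf^\ssm \le \zeros$ and $\ubf \ge \zeros$, the constant input $\dbf = \Bbf\ubf + \tilde\dbf$ satisfies $\dbf \le \tilde\dbf$, so monotone boundedness yields a bound $\xbf(t) \le \nub_0 := \nub(\tilde\dbf)$ for all $t \ge 0$ that is \emph{uniform in} $\ubf \ge \zeros$. Because $\zeros \le \xbf(t) \le \nub_0$, the vector $\Kbf\xbf(t)$ is bounded below by a fixed $\qbf_0 \in \real^p$, and hence there exists $\bar\ubf \in \realnonneg^p$ such that for every $\ubf \ge \bar\ubf$ one has $\Kbf\xbf(t) + \ubf \ge \zeros$ and therefore $(\Wbf\xbf(t)+\dbf)^\ssm = \Bbf^\ssm(\Kbf\xbf(t)+\ubf) \le \zeros$ for all $t \ge 0$. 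Consequently the $\Nc_2^\ssm$-block of $[\Wbf\xbf(t)+\dbf]_\zeros^\mbf$ vanishes, the $\Nc_2^\ssm$-block of~\eqref{eq:dyn} reduces to $\tau\dot\xbf^\ssm = -\xbf^\ssm$, and $\xbf^\ssm(t) = e^{-t/\tau}\xbf^\ssm(0) \to \zeros$ exponentially (and $\xbf^\ssm(t) \equiv \zeros$ whenever $\xbf^\ssm(0) = \zeros$).

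\emph{Cascade and the two directions.} For $\ubf \ge \bar\ubf$ the $\Nc_2^\ssp$-block of~\eqref{eq:dyn} equals~\eqref{eq:subnet} with $\tilde\dbf^\ssp$ replaced by $\tilde\dbf^\ssp + \Wbf^{\ssp\ssm}\xbf^\ssm(t)$; since the box projection $[\,\cdot\,]_\zeros^{\mbf^\ssp}$ is nonexpansive, this is~\eqref{eq:subnet} plus an additive perturbation of the vector field of norm at most $\|\Wbf^{\ssp\ssm}\|\,e^{-t/\tau}\|\xbf^\ssm(0)\|$. For the ``if'' direction, if~\eqref{eq:subnet} is GES relative to a unique $\xbf_*^\ssp$, then — the right-hand side of~\eqref{eq:subnet} being globally Lipschitz — a converse Lyapunov function $V$ for~\eqref{eq:subnet} satisfying the standard quadratic bounds gives $\dot V \le -c_3\|\xbf^\ssp - \xbf_*^\ssp\|^2 + c_4 C e^{-t/\tau}\|\xbf^\ssp - \xbf_*^\ssp\|$ along the perturbed dynamics, so the comparison lemma yields $\xbf^\ssp(t) \to \xbf_*^\ssp$ exponentially; combined with $\xbf^\ssm(t) \to \zeros$ exponentially, every trajectory of~\eqref{eq:dyn} converges exponentially to $\xbf_* = [\zeros_r^T \ (\xbf_*^\ssp)^{T}]^T$, which is therefore its unique equilibrium and the system is GES relative to it. For the ``only if'' direction, enlarge the hypothesized $\bar\ubf$ so that it also meets the requirement of the clamping step (this only shrinks $\{\ubf \ge \bar\ubf\}$, preserving the hypothesis); then, taking any $\ubf \ge \bar\ubf$ and $\xbf^\ssm(0) = \zeros$, we get $\xbf^\ssm(t) \equiv \zeros$, so $\xbf^\ssp(\cdot)$ is exactly a solution of~\eqref{eq:subnet}, and by hypothesis it converges (exponentially, by GES of $\Nc_2$) to $\xbf_*^\ssp$ for every $\xbf^\ssp(0) \in [\zeros, \mbf^\ssp]$; thus~\eqref{eq:subnet} is GES, and $\xbf_*^\ssp$ is its unique equilibrium since a second one would give a constant, non-converging trajectory of~\eqref{eq:dyn} starting from $[\zeros_r^T \ (\xbf_*^\ssp)'^{T}]^T$, contradicting GES of $\Nc_2$.

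The routine content is the bookkeeping of the clamping step (which uses only $\Bbf^\ssm \le \zeros$, the factorization $[\Wbf^{\ssm\ssm} \ \Wbf^{\ssm\ssp}] = \Bbf^\ssm\Kbf$, and monotone boundedness) and the elementary scalar-type decay of $\xbf^\ssm$. The main obstacle is the robustness argument in the ``if'' direction: turning GES of the \emph{isolated} subnetwork~\eqref{eq:subnet} into convergence of the coupled dynamics under the exponentially vanishing coupling $\Wbf^{\ssp\ssm}\xbf^\ssm(t)$. The delicate point is that the linear-threshold vector field is only globally Lipschitz (not $C^1$), so one must invoke a converse Lyapunov / converging-input--converging-state result valid for Lipschitz GES systems rather than the classical smooth version. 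A secondary subtlety, which the clamping step addresses, is that the a priori bound $\nub_0$ must be made uniform over all admissible $\ubf$; this is exactly where the monotonicity of $\dbf$ in $\ubf$ (via $\Bbf^\ssm \le \zeros$) and the range condition~\eqref{eq:range} enter, letting $\bar\ubf$ be chosen without circularity.
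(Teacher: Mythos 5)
Your proposal is correct and follows essentially the same route as the paper: use monotone boundedness (uniform in $\ubf\ge\zeros$ since $\Bbf^\ssm\le\zeros$) together with the range condition~\eqref{eq:range} to choose $\bar\ubf$ clamping the net input to $\Nc_2^\ssm$ below threshold, reduce the dynamics to the cascade $\tau\dot\xbf^\ssm=-\xbf^\ssm$ driving the $\Nc_2^\ssp$ block, prove the ``if'' direction by a converse-Lyapunov/ISS argument for the exponentially vanishing interconnection (the paper packages this as Lemma~\ref{lem:cas}, using a composite Lyapunov function rather than your comparison-lemma estimate), and prove the ``only if'' direction by restricting to initial conditions with $\xbf^\ssm(0)=\zeros$. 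The only cosmetic difference is that you realize the clamping via the factorization $[\Wbf^{\ssm\ssm}\ \Wbf^{\ssm\ssp}]=\Bbf^\ssm\Kbf$, whereas the paper solves $\Bbf^\ssm\ubf_s=-[[\Wbf^{\ssm\ssm}\ \Wbf^{\ssm\ssp}]]_\zeros^\infty\nub(\tilde\dbf)$ directly; both are equivalent uses of~\eqref{eq:range}.
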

\begin{proof} 
  ($\Leftarrow$) 
  Define $\ubf_s$ to be a solution of
  \begin{align}\label{eq:u0}
    \Bbf^\ssm \ubf_s = - [[\Wbf^{\ssm\ssm} \
    \Wbf^{\ssm\ssp}]]_\zeros^\infty \nub(\tilde \dbf).
  \end{align}
  This solution exists by assumption~\eqref{eq:range}. Let $\bar \ubf
  = [\ubf_s]_\zeros^\infty$ and note that $\Bbf^\ssm \ubf \le \Bbf^\ssm \bar \ubf \le
  \Bbf^\ssm \ubf_s$.  By construction,
  \eqref{eq:dyn},~\eqref{eq:d-decomp},~\eqref{eq:wb},~\eqref{eq:u0}
  simplify to
  \begin{align}\label{eq:dyn-d-sim}
    \notag \tau \dot \xbf^\ssm &= -\xbf^\ssm,
    \\
    \tau \dot \xbf^\ssp &= -\xbf^\ssp + [\Wbf^{\ssp\ssm} \xbf^\ssm +
    \Wbf^{\ssp\ssp} \xbf^\ssp + \tilde \dbf^\ssp]_\zeros^{\mbf\ssp},
  \end{align} 
  whose GES follows from Lemma~\ref{lem:cas}.
  
  ($\Rightarrow$) By monotone boundedness and nonpositivity of
  $\Bbf^\ssm$, $\xbf(t) \le \nu(\tilde \dbf)$ for all $t \ge 0$ and any
  $\ubf \ge \bar \ubf$. Let $\ubf = \bar \ubf + [\ubf_s]_\zeros^\infty$ where
  $\ubf_s$ is a solution to~\eqref{eq:u0}. Similar to above, this
  simplifies
  \eqref{eq:dyn},~\eqref{eq:d-decomp},~\eqref{eq:wb},~\eqref{eq:u0}
  to~\eqref{eq:dyn-d-sim},
  which is GES by assumption. However, for any initial condition of
  the form $\xbf(0) = [\zeros_r^T \ \xbf^\ssp(0)^T]^T$, the trajectories
  of~\eqref{eq:dyn-d-sim} are the same as~\eqref{eq:subnet}, and the
  result follows.
\end{proof} 

The next section shows that the condition~\eqref{eq:range} on the
ability to influence the dynamics of the task-irrelevant nodes through
control also plays a key role in feedback selective inhibition. We
defer the discussion about the interpretation of this condition to
Section~\ref{subsec:network-size} below.

\subsection{Feedback Selective Inhibition}

The core idea of feedback inhibition~\cite{JSI-MS:11}, as found
throughout the brain, is the dependence of the amount of inhibition on
the activity level of the nodes that are to be inhibited. This
dependence is in particular relevant to GDSA, as the stronger and more
salient a source of distraction, the harder one must try to suppress
its effects on perception.  The next result provides a novel
characterization of several equivalences between the dynamical
properties of $\Nc_2$ under linear full-state feedback inhibition and
those of $\Nc_2^\ssp$.

\begin{theorem}\longthmtitle{Feedback selective inhibition}\label{thm:dec-sub}
  Consider the dynamics~\eqref{eq:dyn}, where the external input is
  given by~\eqref{eq:d-decomp}-\eqref{eq:wb} with a linear state
  feedback $\ubf$
  \begin{align}\label{eq:ukx}
    \ubf(t) = \Kbf \xbf(t),
  \end{align}
  and $\Kbf \in \real^{p \times n}$ is a constant control gain. Assume
  that~\eqref{eq:range} holds.  Then, there almost always exists $\Kbf \in \real^{p
    \times n}$ such that
  \begin{enumerate}
  \item $\Ibf - (\Wbf + \Bbf \Kbf) \in \Pc$ if and only if $\Ibf -
    \Wbf^{\ssp\ssp} \in \Pc$;
  \item $-\Ibf + (\Wbf + \Bbf \Kbf) \in \Hc$ if and only if $-\Ibf +
    \Wbf^{\ssp\ssp} \in \Hc$;
  \item $\Wbf + \Bbf \Kbf \in \Lc$ if and only if $\Wbf^{\ssp\ssp} \in \Lc$;
  \item $\rho(|\Wbf + \Bbf \Kbf|) < 1$ if and only if
    $\rho(|\Wbf^{\ssp\ssp}|) < 1$;
  \item $\|\Wbf + \Bbf \Kbf\| < 1$ if and only if $\|[\Wbf^{\ssp\ssm} \
    \Wbf^{\ssp\ssp}]\| < 1$. 
  \end{enumerate} 
\end{theorem}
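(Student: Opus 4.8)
The plan is to reduce all five statements to the block structure of the closed‑loop weight matrix. Under the state feedback $\ubf(t)=\Kbf\xbf(t)$ the input in~\eqref{eq:dyn} becomes $\dbf(t)=(\Bbf\Kbf)\xbf(t)+\tilde\dbf$, so the closed loop is exactly~\eqref{eq:dyn} with weight matrix $\Wbf+\Bbf\Kbf$ and constant offset $\tilde\dbf$; hence (i)--(v) are purely algebraic statements about $\Wbf+\Bbf\Kbf$. Partitioning $\Kbf=[\Kbf^\ssm\ \Kbf^\ssp]$ conformably with the $\ssm/\ssp$ split and using $\Bbf=[(\Bbf^\ssm)^T\ \zeros]^T$ from~\eqref{eq:wb},
\[
  \Wbf+\Bbf\Kbf=\begin{bmatrix} \Wbf^{\ssm\ssm}+\Bbf^\ssm\Kbf^\ssm & \Wbf^{\ssm\ssp}+\Bbf^\ssm\Kbf^\ssp \\ \Wbf^{\ssp\ssm} & \Wbf^{\ssp\ssp} \end{bmatrix}.
\]
By~\eqref{eq:range} the columns of $\Wbf^{\ssm\ssm}$ and $\Wbf^{\ssm\ssp}$ lie in $\range(\Bbf^\ssm)$, so I can pick $\Kbf$ with $\Bbf^\ssm\Kbf^\ssp=-\Wbf^{\ssm\ssp}$ and $\Bbf^\ssm\Kbf^\ssm=\delta\Ibf_r-\Wbf^{\ssm\ssm}$, where $\delta=0$ always works and any small $\delta>0$ works provided $\range(\Bbf^\ssm)=\real^r$ --- which, given~\eqref{eq:range}, holds whenever the first $r$ rows of $\Wbf$ are linearly independent, i.e.\ for almost every $\Wbf$. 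With this $\Kbf$, $\Wbf+\Bbf\Kbf$ is block lower triangular with diagonal blocks $\delta\Ibf_r$ and $\Wbf^{\ssp\ssp}$ and with $[\Wbf^{\ssp\ssm}\ \Wbf^{\ssp\ssp}]$ as its bottom block of rows. The qualifier ``almost always'' covers exactly the genericity needed to take $\delta>0$ while keeping all five equivalences and additionally making $\Wbf+\Bbf\Kbf$ satisfy Assumption~\ref{as:1} (so that (i)--(v) feed correctly into Theorems~\ref{thm:eue} and~\ref{thm:ges}); with $\delta=0$ the equivalences already hold but the closed loop is singular.

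Items (i), (ii), (iv) then follow from the elementary fact that the principal submatrices of a block lower triangular matrix are again block lower triangular, so their determinants (resp.\ eigenvalue multisets, resp.\ --- after applying $|\cdot|$ and noting $|\delta\Ibf_r|=\delta\Ibf_r$ --- spectral radii) decompose over the two diagonal blocks: $\Ibf-(\Wbf+\Bbf\Kbf)\in\Pc$ iff $\Ibf-\delta\Ibf_r\in\Pc$ and $\Ibf-\Wbf^{\ssp\ssp}\in\Pc$; $-\Ibf+(\Wbf+\Bbf\Kbf)\in\Hc$ iff $-\Ibf+\delta\Ibf_r\in\Hc$ and $-\Ibf+\Wbf^{\ssp\ssp}\in\Hc$; and $\rho(|\Wbf+\Bbf\Kbf|)=\max\{\delta,\rho(|\Wbf^{\ssp\ssp}|)\}$. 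Since $0\le\delta<1$, the $\delta\Ibf_r$ conditions are automatic, giving the three equivalences. For (v), since $[\Wbf^{\ssp\ssm}\ \Wbf^{\ssp\ssp}]$ is a row‑submatrix of $\Wbf+\Bbf\Kbf$ we have $\|[\Wbf^{\ssp\ssm}\ \Wbf^{\ssp\ssp}]\|\le\|\Wbf+\Bbf\Kbf\|$ always (the ``$\Rightarrow$'' direction), while $\|\Wbf+\Bbf\Kbf\|\le\delta+\|[\Wbf^{\ssp\ssm}\ \Wbf^{\ssp\ssp}]\|$ by the triangle inequality, so ``$\Leftarrow$'' follows by taking $\delta<1-\|[\Wbf^{\ssp\ssm}\ \Wbf^{\ssp\ssp}]\|$ when that quantity is positive (and both sides of (v) are false otherwise).

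Item (iii) is where the real work lies, since $\Lc$‑membership is not inherited blockwise through an eigenvalue argument. For ``$\Rightarrow$'' I would restrict the defining LMI of $\Wbf+\Bbf\Kbf\in\Lc$ to switching indices $\sigmab$ with $\Sigmab^\ssm=\zeros$; the $(\ssp,\ssp)$ principal block of the resulting negative‑definite matrix is $(-\Ibf+\Sigmab^\ssp\Wbf^{\ssp\ssp})^T\Pbf_{22}+\Pbf_{22}(-\Ibf+\Sigmab^\ssp\Wbf^{\ssp\ssp})$ with $\Pbf_{22}>\zeros$ the matching diagonal block of the certificate, so $\Pbf_{22}$ certifies $\Wbf^{\ssp\ssp}\in\Lc$. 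For ``$\Leftarrow$'', given a certificate $\Qbf>\zeros$ for $\Wbf^{\ssp\ssp}\in\Lc$, I would propose the block‑diagonal certificate $\Pbf=\diag(\Ibf_r,\eta\Qbf)$ for $\Wbf+\Bbf\Kbf$: for every $\sigmab$, the $(1,1)$ block of the associated LMI equals $-2\Ibf_r+2\delta\Sigmab^\ssm$ (negative definite, being bounded above by $-2(1-\delta)\Ibf_r$), the $(2,2)$ block equals $\eta$ times a matrix that is $<\zeros$ uniformly over the finitely many $\Sigmab^\ssp$, and the off‑diagonal coupling has norm $O(\eta)$; a Schur‑complement estimate with respect to the $(1,1)$ block, uniform in $\sigmab$, then shows the whole LMI is $<\zeros$ once $\eta$ is small enough. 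The main obstacle is precisely this perturbation argument --- choosing $\eta$ and verifying the Schur‑complement bound uniformly over all $\sigmab$ --- together with the (minor) bookkeeping that a single $\Kbf$, i.e.\ one value of $\delta$, serves all five items simultaneously, which it does since each item only asks that $\delta$ be small (and, for (iii)--(iv), nonnegative).
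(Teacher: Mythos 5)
Your proposal is correct and, for items (i), (ii), (iv), (v), follows essentially the same route as the paper: use~\eqref{eq:range} to build a gain that cancels the top block rows of $\Wbf$, making $\Wbf + \Bbf\Kbf$ block lower triangular, and then read off each property from the two diagonal blocks (the paper proves the forward implications for \emph{arbitrary} $\Kbf$ via principal-submatrix inheritance and, in (v), eigenvalue interlacing, while you get both directions at once for the constructed gain via the block-triangular factorization and the elementary row-submatrix norm bound --- an equivalent but slightly leaner packaging). Two genuine differences are worth noting. First, your parameter $\delta$ is an embellishment the paper does not need: it takes $\Bbf^\ssm \bar\Kbf = -[\Wbf^{\ssm\ssm} \ \Wbf^{\ssm\ssp}]$ (i.e., $\delta = 0$), and since none of the five classes $\Pc$, $\Hc$, $\Lc$, $\rho(|\cdot|)<1$, $\|\cdot\|<1$ requires nonsingularity of $\Wbf+\Bbf\Kbf$, the singular closed loop is harmless for the statement being proved; your extra genericity condition ($\range(\Bbf^\ssm)=\real^r$) is therefore not required. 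Second, and more substantively, in the backward direction of (iii) the paper does \emph{not} run a perturbation argument: it observes that $-\Ibf + \Sigmab^\ell(\Wbf+\Bbf\bar\Kbf)$ is a cascade of $\tau\dot\xbf^\ssm = -\xbf^\ssm$ with a subsystem admitting the ISS-Lyapunov function $(\xbf^\ssp)^T\Pbf^{\ssp\ssp}\xbf^\ssp$, and invokes Sontag's converse ISS result to certify the common Lyapunov function $\Pbf = \diag(\Ibf, \Pbf^{\ssp\ssp})$ for all $2^n$ modes. Your alternative --- the scaled certificate $\diag(\Ibf_r, \eta\Qbf)$ with a Schur-complement estimate, uniform over the finitely many $\Sigmab^\ell$, for $\eta$ small --- is sound (the off-diagonal block is $\eta(\Sigmab^{\ell\ssp}\Wbf^{\ssp\ssm})^T\Qbf$, the $(1,1)$ block is bounded above by $-2(1-\delta)\Ibf$, and the $(2,2)$ block by $-\eta c\,\Ibf$ uniformly, so the Schur complement is negative definite once $\eta < 2c(1-\delta)/M^2$); it is more self-contained than the paper's appeal to the ISS literature, at the cost of carrying the explicit $\eta$ bookkeeping you correctly identify as the main labor.
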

\begin{proof}
  \emph{(i)} $\Rightarrow$) For any $\Kbf = [\Kbf^\ssm \ \Kbf^\ssp] \in
  \real^{p \times n}$,
  \begin{align}\label{eq:wbk}
    \Wbf + \Bbf \Kbf =
    \begin{bmatrix} 
      \Wbf^{\ssm\ssm} + \Bbf^\ssm \Kbf^\ssm & \Wbf^{\ssm\ssp} + \Bbf^\ssm \Kbf^\ssp
      \\
      \Wbf^{\ssp\ssm} & \Wbf^{\ssp\ssp} 
    \end{bmatrix}.
  \end{align}
  Thus, since any principal submatrix of a P-matrix is a P-matrix,
  $\Ibf - \Wbf^{\ssp\ssp} \in \Pc$.
  \\
  $\Leftarrow$) By~\eqref{eq:range} there exists $\bar \Kbf \in
  \real^{p \times n}$ such that
  \begin{align}\label{eq:Kbar}
    -\begin{bmatrix} \Wbf^{\ssm\ssm} & \Wbf^{\ssm\ssp} \end{bmatrix} = \Bbf^\ssm \bar
    \Kbf.
  \end{align}
  Using the fact that the determinant of any block-triangular matrix
  is the product of the determinants of the blocks on its
  diagonal~\cite[Prop 2.8.1]{DSB:09},
  it follows that
    $\Ibf - (\Wbf + \Bbf \bar \Kbf) \in \Pc$.
  
  \emph{(ii)} $\Rightarrow$) This follows from~\eqref{eq:wbk}
  and the fact that a principal submatrix of a totally-Hurwitz
  matrix is totally-Hurwitz.
  \\
  $\Leftarrow$) Using the matrix $\bar \Kbf$
  in~\eqref{eq:Kbar}, the result follows from 
  the fact that the eigenvalues of a block-triangular matrix are the
  eigenvalues of its diagonal blocks.
    
  \emph{(iii)} $\Rightarrow$) Let $\Pbf = \Pbf^T > \zeros$ be such that
  \begin{align}\label{eq:-Iwbk}
    \!\!\!\! (-\Ibf + (\Wbf \!+\! \Bbf \Kbf)^T \Sigmab^\ell) \Pbf \!+\! \Pbf (-\Ibf +
    \Sigmab^\ell (\Wbf \!+\! \Bbf \Kbf)) < \zeros
  \end{align}
  for all $\sigmab \in \{0, \ell\}^n$. Consider, in particular,
  $\sigmab = [\zeros_r^T \ (\sigmab^\ssp)^T]^T$ where $\sigmab^\ssp
  \in \{0, \ell\}^{n - r}$ is arbitrary. Let $\Sigmab^{\ell\ssp} \in
  \real^{(n-r)\times(n-r)}$ be the bottom-right block of $\Sigma^\ell$
  and partition $\Pbf$ in $2$-by-$2$ block form similarly to
  $\Wbf$. Since
  \begin{align*}
    (-\Ibf &+ (\Wbf + \Bbf \Kbf)^T \Sigmab^\ell) \Pbf + \Pbf (-\Ibf +
    \Sigmab^\ell (\Wbf + \Bbf \Kbf))
    \\
    &= 
    \begin{bmatrix}
      \star & \star
      \\
      \star & (-\Ibf + \Sigmab^{\ell\ssp} \Wbf^{\ssp\ssp})^T \Pbf^{\ssp\ssp} + \Pbf^{\ssp\ssp}
      (-\Ibf + \Sigmab^{\ell\ssp} \Wbf^{\ssp\ssp})
    \end{bmatrix},
  \end{align*}
  and any principal submatrix of a negative definite matrix is
  negative definite, we deduce $\Wbf^{\ssp\ssp} \in \Lc$.
  \\
  $\Leftarrow$) Let $\Pbf^{\ssp\ssp} \in \real^{(n - r) \times (n - r)}$
  be such that 
  \begin{align*}
  (-\Ibf + (\Wbf^{\ssp\ssp})^T \Sigmab^{\ell\ssp}) \Pbf^{\ssp\ssp} +
  \Pbf^{\ssp\ssp} (-\Ibf + \Sigmab^{\ell\ssp} \Wbf^{\ssp\ssp}) < \zeros,
  \end{align*}
  for all $\sigmab^\ssp
  \in \{0, 1\}^{n - r}$ and $\bar \Kbf$ be as
  in~\eqref{eq:Kbar}. For any $\sigmab = [(\sigmab^\ssm)^T \
  (\sigmab^\ssp)^T]^T$, 
  \eqref{eq:Kbar} gives
  \begin{align*}
    -\Ibf + \Sigmab^\ell (\Wbf + \Bbf \bar \Kbf) 
    =
    \begin{bmatrix}
      -\Ibf & \zeros
      \\
      \star & -\Ibf + \Sigmab^{\ell\ssp} \Wbf^{\ssp\ssp} 
    \end{bmatrix}.
  \end{align*}
  Thus, the dynamics $\tau \dot \xbf = \big(-\Ibf + \Sigmab^\ell (\Wbf
  + \Bbf \bar \Kbf)\big) \xbf$ is a cascade of $\tau \dot \xbf^\ssm =
  -\xbf^\ssm$ and $\tau \dot \xbf^\ssp = (-\Ibf + \Sigmab^{\ell\ssp}
  \Wbf^{\ssp\ssp}) \xbf^\ssp + \star \cdot \xbf^\ssm$, where the
  latter has the ISS\footnote{Input-to-state stability}-Lyapunov
  function $V^\ssp(\xbf^\ssp) = (\xbf^\ssp)^T \Pbf^{\ssp\ssp}
  \xbf^\ssp$. Using~\cite[Thm 3]{EDS:95a}, \eqref{eq:-Iwbk} holds for
  $\Kbf = \bar \Kbf$, $\Pbf = \diag(\Ibf, \Pbf^{\ssp\ssp})$, and any
  $\sigmab \in \{0, \ell\}^n$, giving $\Wbf + \Bbf \bar \Kbf \in \Lc$.
  
  \emph{(iv)} $\Rightarrow$) This follows from~\eqref{eq:wbk}
  and~\cite[Fact 4.11.19]{DSB:09}.
  \\
  $\Leftarrow$) Consider the matrix $\bar \Kbf$
  in~\eqref{eq:Kbar}. The result then follows from the fact that the
  eigenvalues of a block-triangular matrix are the eigenvalues of its
  diagonal blocks.
  
  \emph{(v)} $\Rightarrow$) Note that for any $\Kbf \in \real^{p
    \times n}$,
  \begin{align*}
    \|&\Wbf + \Bbf \Kbf\|^2 = \rho\left(
      \begin{bmatrix}
        \star & \star
        \\
        \star & \Wbf^{\ssp\ssm} (\Wbf^{\ssp\ssm})^T + \Wbf^{\ssp\ssp} (\Wbf^{\ssp\ssp})^T
      \end{bmatrix}
    \right)
    \\
    &\ge \rho(\Wbf^{\ssp\ssm} (\Wbf^{\ssp\ssm})^T + \Wbf^{\ssp\ssp} (\Wbf^{\ssp\ssp})^T) =
    \left\|
      \begin{bmatrix}
        \Wbf^{\ssp\ssm} & \Wbf^{\ssp\ssp}
      \end{bmatrix}
    \right\|^2,
  \end{align*}
  where the inequality follows from the well-known interlacing
  property of eigenvalues of principal submatrices
  (cf.~\cite{CRJ-HAR:81}).
  \\
  $\Leftarrow$) Consider the matrix $\bar \Kbf$ in~\eqref{eq:Kbar} and
  note that
  \begin{align*}
    &\|\Wbf + \Bbf \bar \Kbf\|^2 = \rho\left(
      \begin{bmatrix}
        \zeros & \zeros 
        \\
        \zeros & \Wbf^{\ssp\ssm} (\Wbf^{\ssp\ssm})^T + \Wbf^{\ssp\ssp}
        (\Wbf^{\ssp\ssp})^T 
      \end{bmatrix}
    \right)
    \\
    &= \rho(\Wbf^{\ssp\ssm} (\Wbf^{\ssp\ssm})^T \!+\! \Wbf^{\ssp\ssp}
    (\Wbf^{\ssp\ssp})^T) = \left\|
      \begin{bmatrix}
        \Wbf^{\ssp\ssm} &
        \!\!\! \Wbf^{\ssp\ssp}
      \end{bmatrix}
    \right\|^2 < 1,
  \end{align*}
  completing the proof.
\end{proof}

\begin{remark}\longthmtitle{Feedback inhibition with nonnegative
    $\ubf(t)$}\label{rem:nnegfb}
  {\rm Even though Theorem~\ref{thm:dec-sub} is motivated by feedback
    inhibition in the brain, the result illustrates some fundamental
    properties of linear-threshold dynamics and the corresponding
    matrix classes that is of independent interest, which motivates
    the generality of its formulation. The particular application to
    brain networks requires nonnegative inputs, which we discuss next.
    The core principle of Theorem~\ref{thm:dec-sub} is the
    cancellation of local input $[\Wbf^{\ssm\ssm} \ \Wbf^{\ssm\ssp}] \xbf$ to
    $\Nc_2^\ssm$ with the top-down feedback input $\Bbf^\ssm \bar \Kbf
    \xbf$, simplifying the dynamics of $\Nc_2^\ssm$ to $\tau \dot \xbf^\ssm
    = - \xbf^\ssm$ that guarantee its inhibition. However, the resulting
    input signal $\ubf = \bar \Kbf \xbf$ (being the firing rate of
    some neuronal population in $\Nc_1$) may not remain nonnegative at all
    times. This can be easily addressed as follows. Similar to the
    proof of Theorem~\ref{thm:dec-sub2}, we let
    \begin{align*}
      \ubf(t) = [\bar \Kbf \xbf(t)]_\zeros^\infty.
    \end{align*}
    This makes $\ubf(t)$ nonnegative without affecting the selective
    inhibition of $\Nc_2^\ssm$ in~\eqref{eq:dyn} as $\Bbf^\ssm \le
    \zeros$.  In principle, a similar concern can exist as to whether
    $\bar \Kbf \xbf$ becomes larger than the maximum firing rate of
    the corresponding populations in $\Nc_1$. However, this only
    relates to the magnitude of the entries in $\Bbf^\ssm$
    (via~\eqref{eq:Kbar}, as opposed to the sign of $\bar \Kbf \xbf$,
    which relates to the sign of the entries in $\Bbf^\ssm$), which
    can always be increased via synaptic long term potentiation
    (LTP)~\cite{RAN:17}, in turn decreasing the magnitude of the
    entries in $\bar \Kbf$.\oprocend}
\end{remark}

\begin{remark}\longthmtitle{State vs. output feedback}
    \rm The assumption of state feedback is a simplifying one and its
    generalization merits further research. However, we note that
    $\xbf^\ssm$ is most likely always available for feedback (as
    feedback inhibition is highly reciprocal at the neuronal
    level~\cite{RT-SL-BR:16} and even more so at the population level)
    while the availability of $\xbf^\ssp$ for feedback remains
    case-specific. If the latter is not available, one of two
    scenarios may happen: either the local interaction of $\xbf^\ssm$
    and $\xbf^\ssp$ is competitive and $\Wbf^{\ssm\ssp} \le \zeros$
    (which is not unlikely due to the prevalence of lateral inhibition
    in the cortex~\cite{PS-GT-RL-EHB:98}), in which case the feedback
    of $\xbf^\ssp$ is not even needed (similar to
    Remark~\ref{rem:nnegfb}) or, at worst, the unobserved $\xbf^\ssp$
    actively excites $\xbf^\ssm$, in which case a combination of
    feedback and feedforward inhibition can be used, similar to our
    full model in Part II~\cite[Thm 4.3]{EN-JC:21-tacII}. \oprocend
\end{remark}

\subsection{Network Size, Weight Distribution, and
  Stabilization}\label{subsec:network-size}

Underlying the discussion above is the requirement that $\Nc_2$ can be
asymptotically stabilized towards an equilibrium which has some
components equal to zero and the remaining components determined by
$\tilde \dbf$.  Here, it is important to distinguish between the
stability of $\Nc_2$ in the absence and presence of selective
inhibition. In reality, the large size of biological neuronal networks
often leads to highly unstable dynamics if all the nodes in a layer,
say $\Nc_2$, are active. Therefore, the selective inhibition of
$\Nc_2^\ssm$ is not only responsible for the suppression of the
task-irrelevant activity of $\Nc_2^\ssm$, but also for the overall
stabilization of $\Nc_2$ that allows for top-down recruitment
of~$\Nc_2^\ssp$.  This poses limitations on the size and structure of the
subnetworks $\Nc_2^\ssm$ and $\Nc_2^\ssp$.  It is in this context that one
can analyze the condition~\eqref{eq:range} assumed in both
Theorems~\ref{thm:dec-sub2} and~\ref{thm:dec-sub}.  This condition
requires, essentially, that there are sufficiently many
``independent'' external controls $\ubf$ to enforce inhibition on
$\Nc_2^\ssm$.  The following result formalizes this statement.

\begin{lemma}\longthmtitle{Equivalent characterization
    of~\eqref{eq:range}}\label{lem:range}
  Let the matrices $\Wbf^\ssm$ and $\Bbf^\ssm$ have dimensions $r
  \times n$ and $r \times p$, respectively. Then, $\range(\Wbf^\ssm)
  \subseteq \range(\Bbf^\ssm)$ for almost all $(\Wbf^\ssm, \Bbf^\ssm)
  \in \real^{r \times n} \times \real^{r \times m}$ if and only if $p
  \ge r$.
\end{lemma}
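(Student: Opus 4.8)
The plan is to handle the two directions separately; each reduces to an elementary rank computation together with a Fubini/Tonelli argument. Write $S = \{(\Wbf^\ssm, \Bbf^\ssm) \in \real^{r \times n} \times \real^{r \times p} : \range(\Wbf^\ssm) \subseteq \range(\Bbf^\ssm)\}$ for the ``good'' set. Note first that $S$ is semialgebraic: $\range(\Wbf^\ssm) \subseteq \range(\Bbf^\ssm)$ is equivalent to $\rank [\,\Bbf^\ssm \ \Wbf^\ssm\,] = \rank \Bbf^\ssm$, a condition carved out by polynomial equalities and inequalities among the minors of these matrices; hence $S$ is Borel, and Tonelli's theorem will apply to its indicator.

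For the ``if'' direction, suppose $p \ge r$. The set of $\Bbf^\ssm \in \real^{r \times p}$ with $\rank \Bbf^\ssm < r$ is the common zero set of all $r \times r$ minors of $\Bbf^\ssm$; since $p \ge r$ at least one such minor exists, so this is the zero set of a nontrivial polynomial and thus has Lebesgue measure zero in $\real^{r \times p}$. For every $\Bbf^\ssm$ outside this set, $\range(\Bbf^\ssm) = \real^r$, so $\range(\Wbf^\ssm) \subseteq \range(\Bbf^\ssm)$ holds for \emph{every} $\Wbf^\ssm$. Hence the complement of $S$ is contained in $\real^{r \times n} \times \{\Bbf^\ssm : \rank \Bbf^\ssm < r\}$, a set of measure zero, so $S$ has full measure.

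For the ``only if'' direction I prove the contrapositive: if $p < r$, the inclusion fails on a set of full measure. Fix any $\Bbf^\ssm$. Since $\rank \Bbf^\ssm \le p < r$, the subspace $V \triangleq \range(\Bbf^\ssm)$ is a proper subspace of $\real^r$ with $\dim V \le p$. The slice $S_{\Bbf^\ssm} = \{\Wbf^\ssm : \range(\Wbf^\ssm) \subseteq V\}$ consists exactly of the matrices whose $n$ columns all lie in $V$, i.e.\ $S_{\Bbf^\ssm} \cong V^{\,n}$, a linear subspace of $\real^{rn}$ of dimension $n \dim V \le n p < n r$; hence $S_{\Bbf^\ssm}$ has $rn$-dimensional Lebesgue measure zero. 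By Tonelli's theorem applied to the indicator of $S$, integrating first over $\Wbf^\ssm$ and then over $\Bbf^\ssm$, $S$ has measure zero in $\real^{rn} \times \real^{rp}$, so its complement has full measure, proving that the inclusion fails for almost all pairs.

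The only real subtlety is the measure-theoretic bookkeeping: one must confirm $S$ is measurable (immediate from its semialgebraic description) so that Tonelli applies, and one must be careful that in the $p < r$ case the conclusion obtained is the strong statement ``inclusion fails generically,'' not merely ``inclusion does not always hold.'' Everything else is routine.
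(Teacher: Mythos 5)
Your proposal is correct and follows essentially the same route as the paper: both directions rest on a Fubini/Tonelli argument, with the $p<r$ case handled by observing that for each fixed $\Bbf^\ssm$ the slice of admissible $\Wbf^\ssm$ lies in a proper linear subspace (the paper phrases this as $\Qbf(\Bbf^\ssm)^T\Wbf^\ssm=\zeros$ for a basis $\Qbf$ of $\range(\Bbf^\ssm)^\perp$, which is the same set as your $V^n$), and the $p\ge r$ case handled by noting that $\Bbf^\ssm$ is generically of full row rank so that $\range(\Bbf^\ssm)=\real^r$ (the paper uses nonsingularity of the leading $r\times r$ block where you use all $r\times r$ minors). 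Your added remarks on measurability of the semialgebraic set and on the strength of the conclusion in the $p<r$ case are sound but do not change the argument.
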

\begin{proof} 
  $\Rightarrow$) Assume, by contradiction, that $p < r$, so
  $\range(\Bbf^\ssm) \subsetneq \real^r$ for any $\Bbf^\ssm$. Let
  $\Qbf = \Qbf(\Bbf^\ssm)$ be a matrix whose columns form a basis for
  $\range(\Bbf^\ssm)^\perp$. Then, $\range(\Wbf^\ssm) \subseteq
  \range(\Bbf^\ssm)$ if and only if $\Qbf(\Bbf^\ssm)^T \Wbf^\ssm =
  \zeros$. By Fubini's theorem~\cite[Ch. 20]{HLR-PF:10},
  \begin{align*}
    &\int_{\real^{r \times n} \times \real^{r \times p}}
    \mathbb{I}_{\{\Qbf(\Bbf^\ssm)^T \Wbf^\ssm = \zeros\}}(\Wbf^\ssm, \Bbf^\ssm) d
    (\Wbf^\ssm, \Bbf^\ssm)
    \\
    &= \int_{\real^{r \times p}} d \Bbf^\ssm \int_{\real^{r \times n}}
    \mathbb{I}_{\{\Qbf(\Bbf^\ssm)^T \Wbf^\ssm = \zeros\}}(\Wbf^\ssm, \Bbf^\ssm) d
    \Wbf^\ssm
    \\
    &= \int_{\real^{r \times p}} 0 \ d \Bbf^\ssm = 0,
  \end{align*}
  where $\mathbb{I}$ denotes the indicator function. This
  contradiction proves $p \ge r$.
  \\
  $\Leftarrow$) Let $\Bbf^\ssm = [\Bbf_1^\ssm \ \Bbf_2^\ssm]$ where
  $\Bbf_1^\ssm \in \real^{r \times r}$. It is straightforward to show
  that
  \begin{align*}
    &\setdef{(\Wbf^\ssm, \Bbf^\ssm)}{\range(\Wbf^\ssm) \nsubseteq
      \range(\Bbf^\ssm)} \subseteq \real^{r \times n} \times A,
  \end{align*}
  where $A = \setdef{\Bbf^\ssm}{\det(\Bbf_1^\ssm) = 0}$. Since $A$ has
  measure zero, the result follows from a similar argument as above
  invoking Fubini's theorem.
\end{proof}

Based on intuitions from linear systems theory, it may be tempting to
seek a relaxation of~\eqref{eq:range} for the case where $p < r$. This
is due to the fact that for a \emph{linear} system $\tau \dot \xbf =
\Wbf \xbf + \Bbf \ubf$, it is known~\cite[eq (4.5) and Thm
3.5]{CTC:98} that
the set of all reachable states from the origin is given by
\begin{align*}
  \range\left(
    \begin{bmatrix}
      \Bbf^\ssm & \Wbf^{\ssm\ssm} \Bbf^\ssm & \cdots & (\Wbf^{n - 1})^{\ssm\ssm} \Bbf^\ssm
      \\
      \zeros & \Wbf^{\ssp\ssm} \Bbf^\ssm & \cdots & (\Wbf^{n - 1})^{\ssp\ssm} \Bbf^\ssm
    \end{bmatrix}
  \right),
\end{align*}
which is usually much larger than $\range(\Bbf)$. Therefore, it is
reasonable to expect that~\eqref{eq:range} could be relaxed to
{\interdisplaylinepenalty=10000
\begin{align}\label{eq:range-rel}
  \notag \range([&\Wbf^{\ssm\ssm} \ \Wbf^{\ssm\ssp}]) 
  \\
  &\quad \subseteq \range([\Bbf^\ssm \ \Wbf^{\ssm\ssm}
  \Bbf^\ssm \ \cdots \ (\Wbf^{n - 1})^{\ssm\ssm} \Bbf^\ssm]).
\end{align}
}
However, it turns out that this relaxation is not possible, the reason
being the (apparently simple, yet intricate) nonlinearity
in~\eqref{eq:dyn}. We show this by means of an example.

\begin{example}\longthmtitle{Tightness of~\eqref{eq:range}}
  {\rm Consider the feedback
    dynamics~\eqref{eq:dyn},~\eqref{eq:d-decomp},~\eqref{eq:ukx},
    where $n=3$, $p=1$, $r = 2$, and
  \begin{align*}
    \Wbf = \left[\begin{array}{cc|c} 2\alpha & 0 & 0 \\ 0 & 3\alpha &
        0 \\ \hline 0 & 0 & \alpha \end{array}\right], \quad \Bbf =
    \left[\begin{array}{c} 1 \\ 1 \\ \hline 0 \end{array}\right],
    \quad \alpha \in (0.5, 1).
  \end{align*}
  Clearly, \eqref{eq:range} does not hold (so
  Theorem~\ref{thm:dec-sub}(iv) does not apply), but
  $\range({[\Wbf^{\ssm\ssm} \ \Wbf^{\ssm\ssp}]}) \subseteq \range([\Bbf^\ssm \
  \Wbf^{\ssm\ssm} \Bbf^\ssm])$. One can show that for all $\Kbf \in \real^{1
    \times 3}$,
  \begin{align*}
    \rho(|\Wbf + \Bbf \Kbf|) \ge 2 \alpha > 1,
  \end{align*}
  while $\rho(\Wbf^{\ssp\ssp}) = \alpha < 1$, verifying
  that~\eqref{eq:range} is necessary and cannot be relaxed
  to~\eqref{eq:range-rel}.  \oprocend }
\end{example}

Theorems~\ref{thm:dec-sub2} and~\ref{thm:dec-sub} use completely
different mechanisms for inhibition of $\Nc_2^\ssm$, yet they are
strikingly similar in one conclusion: that the dynamical
properties achievable under selective inhibition are precisely those
satisfied by 
$\Nc_2^\ssp$. This has important
implications for the size and structure of the part $\Nc_2^\ssp$ that can
be active at any instance of time without resulting in
instability. The next remark elaborates on this implication.

\begin{remark}\longthmtitle{Implications for the size and connection
    strength of $\Nc_2^\ssp$} {\rm Existing experimental evidence
    suggest that the synaptic weights $\Wbf$ in cortical networks are
    sparse, approximately follow a log-normal distribution, and have a
    pairwise connection probability that is independent of physical
    distance between neurons within short
    distances~\cite{SS-PJS-MR-SN-DBC:05}.  Based on simulations of
    matrices with such statistics, Figure~\ref{fig:rho}(a, b) show how
    quickly the network (representing $\Nc_2$ here) moves towards
    instability when its size grows. On the other hand, it is
    well-known that increasing $n$ (and thus the number of synaptic
    weights) increases network expressivity (i.e., capacity to
    reproduce complex trajectories). While determining the optimal
    size of a network that leads to the best tradeoff between
    stability and expressivity is beyond the scope of this work, our
    results suggest a critical role for selective inhibition in
    keeping only a limited number of nodes in $\Nc_2$ active at any
    given time while inhibiting others. In other words, while the
    overall size of subnetworks in a brain network (corresponding to,
    e.g., the number of neuronal populations with distinct preferred
    stimuli in a brain region) is inevitably large, selective
    inhibition offers a plausible explanation for the mechanism by
    which the brain keeps the number of \emph{active populations}
    bounded ($O(1)$) at any given time.
    
    Similarly, Figure~\ref{fig:rho}(c, d) show the transition of
    networks towards instability as their synaptic connections become
    stronger. While excitatory synapses, as expected, have a larger
    impact on stability, the same trend is also observed while varying
    inhibitory synaptic strengths. Interestingly, several works in the
    neuroscience literature have shown that neuronal networks maintain
    stability by re-scaling their synaptic weights that change during
    learning, a process commonly referred to as \emph{homeostatic
      synaptic plasticity}~\cite{GT:12}. Our results thus open the way
    to provide rigorous and quantifiable measures of the optimal size
    and weight distribution of subnetworks that may be active at any
    given time and the homeostatic mechanisms that maintain any
    desired level of stability and expressivity. \oprocend}
\end{remark}

\begin{figure*}
\centering \subfloat[]{\small
  \includegraphics[width=0.225\textwidth]{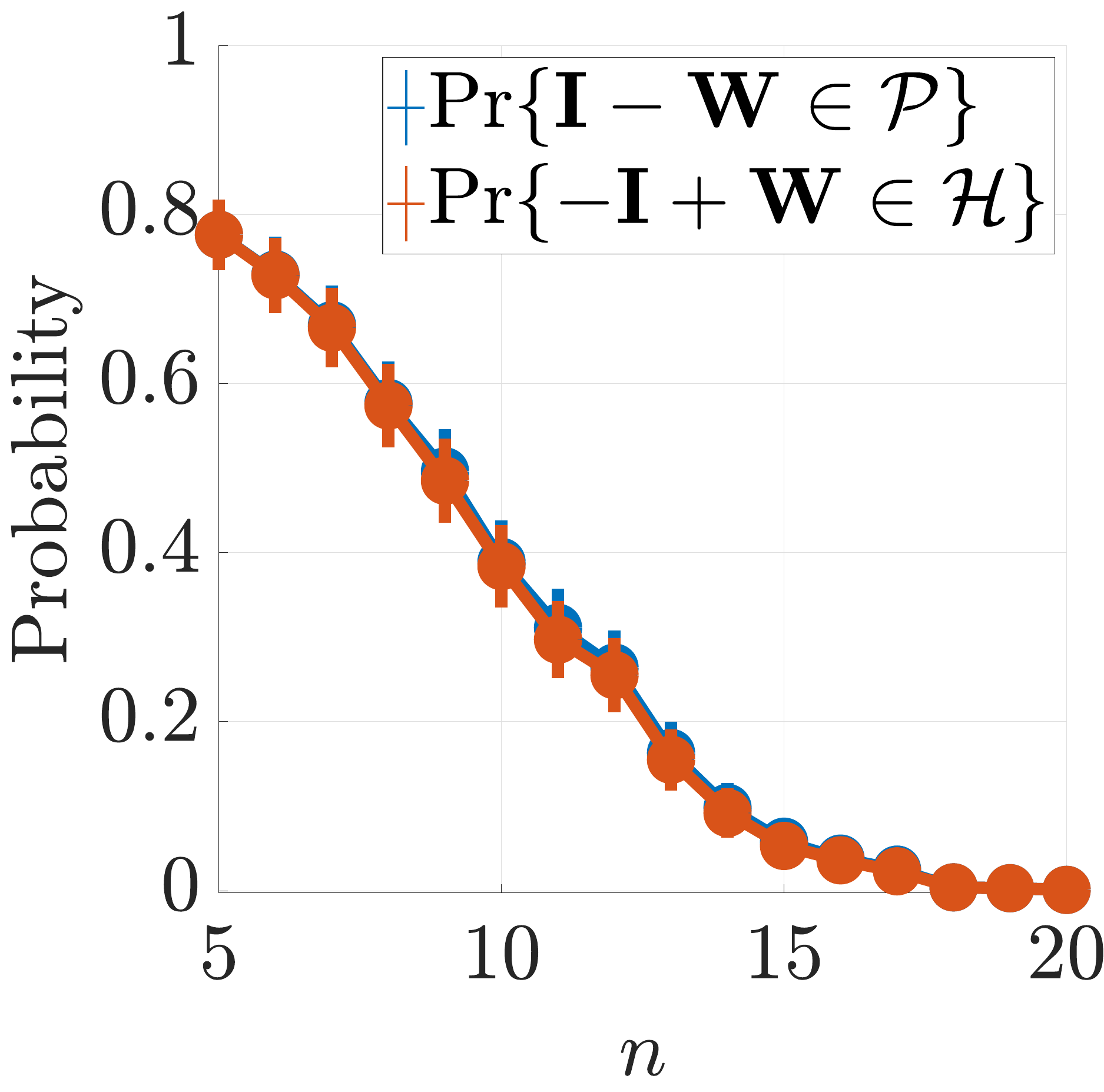}
  }
  \hspace{0.01\textwidth}
  \subfloat[]{\small
 \includegraphics[width=0.24\textwidth]{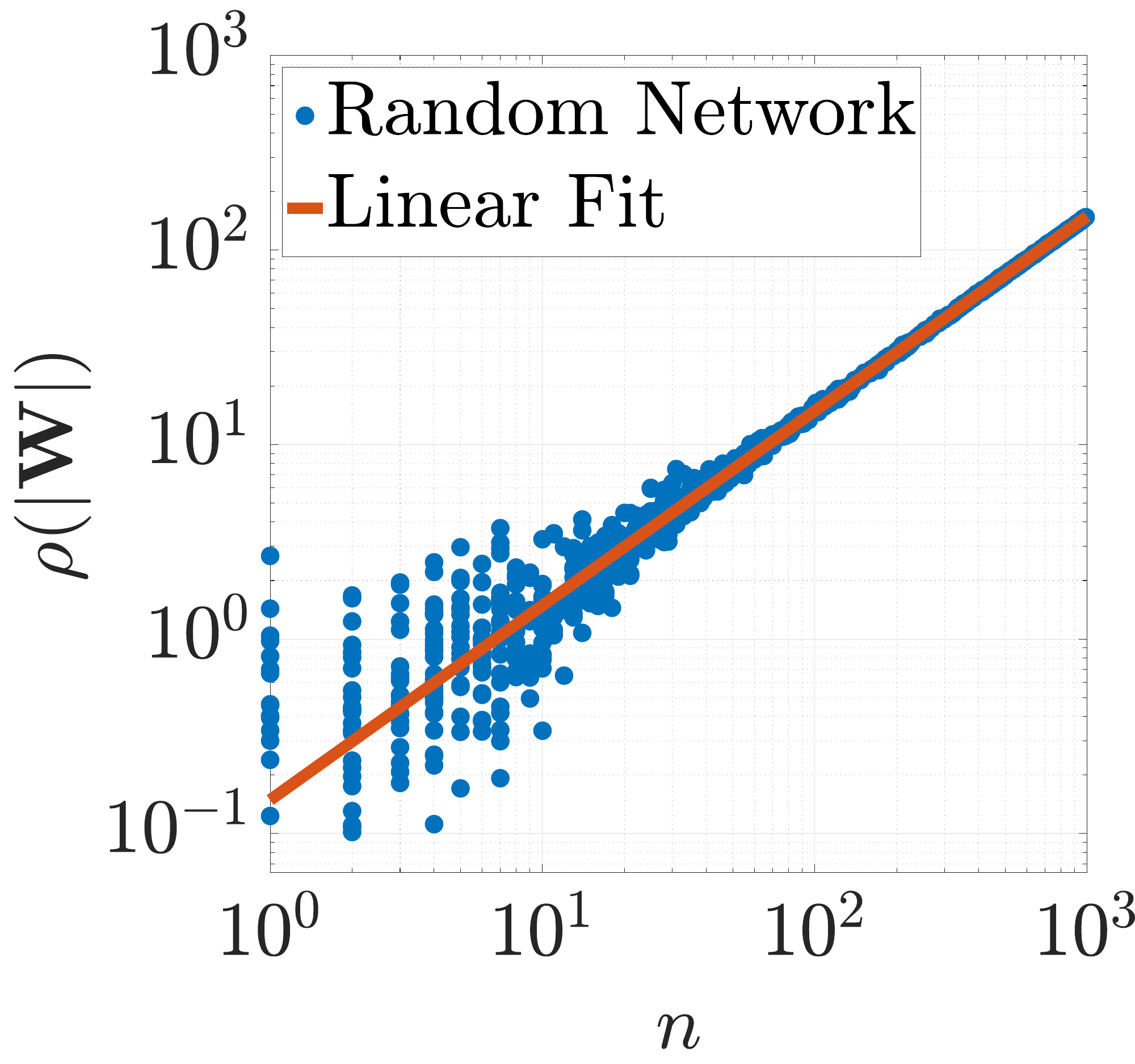}
  }
    \hspace{0.01\textwidth}
  \subfloat[]{\small
 \includegraphics[width=0.225\textwidth]{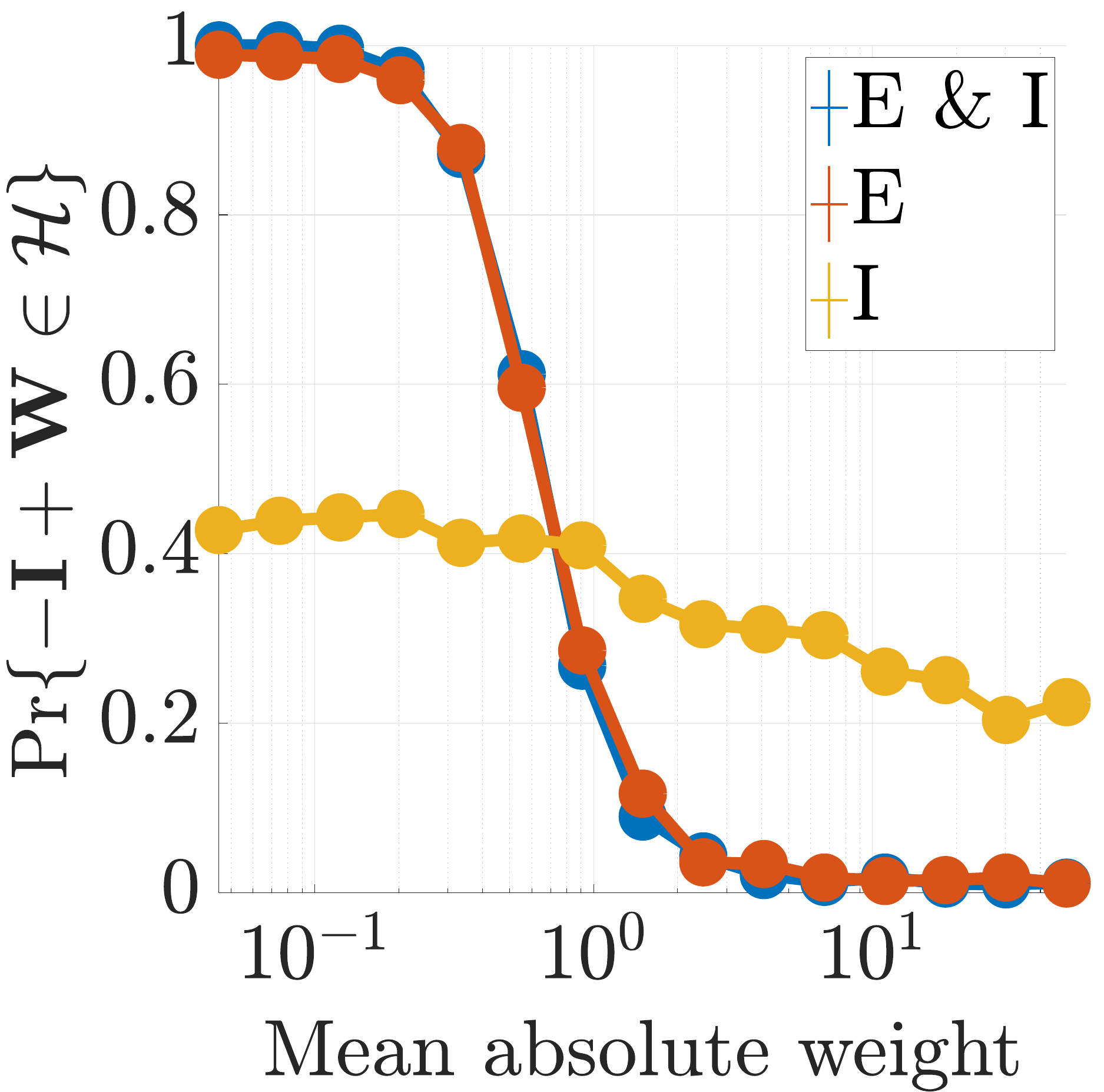}
  }
    \hspace{0.01\textwidth}
  \subfloat[]{\small
 \includegraphics[width=0.225\textwidth]{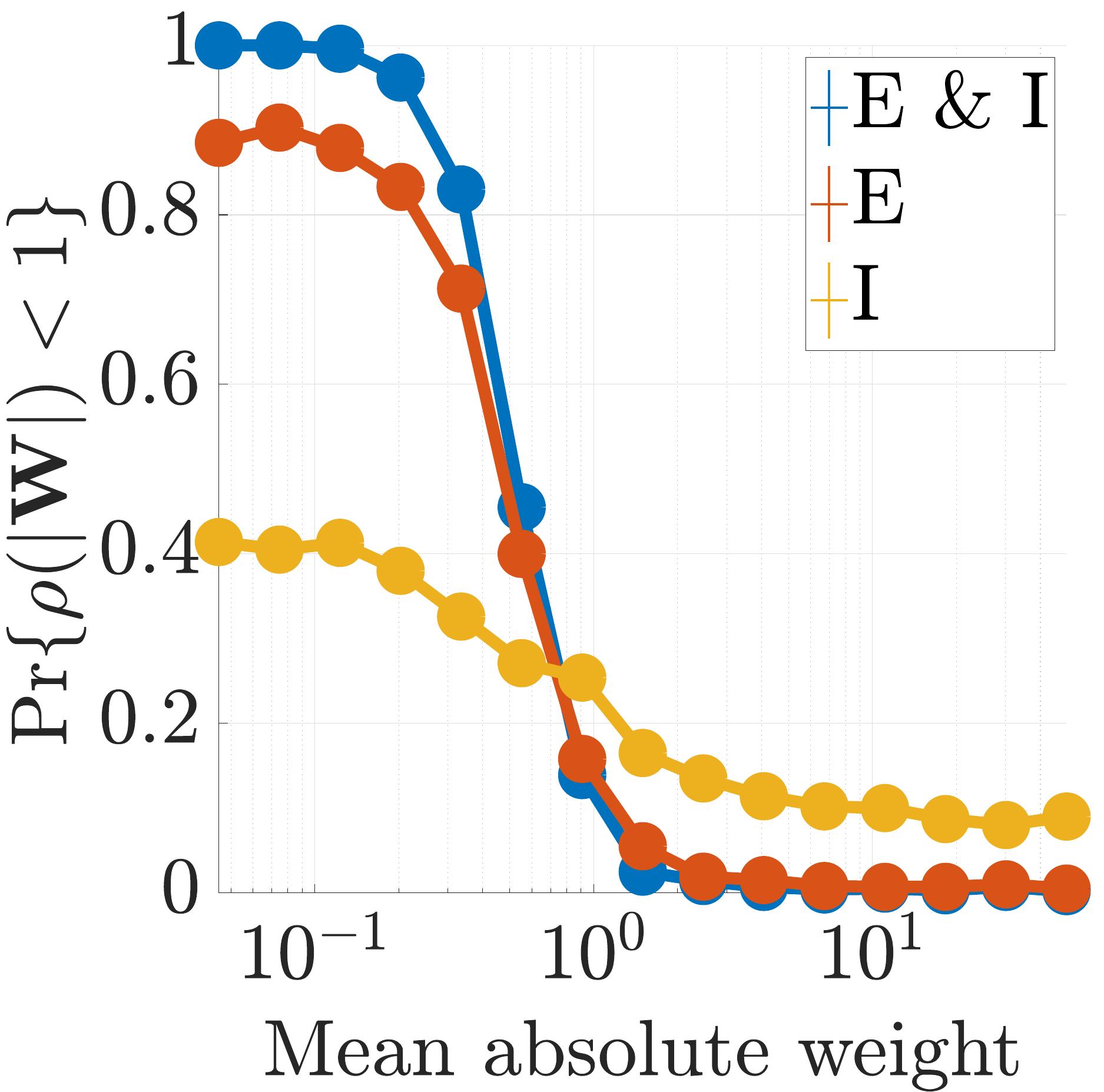}
  }
  \caption{The effects of network size and weight distribution on its
    stability. In all panels, unless otherwise stated, $\Wbf$ matrices
    are generated randomly with log-normally distributed entries with
    parameters $\mu = -0.7$ and $\sigma = 0.9$ as given
    in~\cite{SS-PJS-MR-SN-DBC:05}, $20\%$ sparsity, and $80\%$
    excitatory nodes. Probabilities are estimated empirically with
    $10^3$ samples and error bars represent standard error of the mean
    (s.e.m).  \textbf{(a)} Probability of $\Ibf - \Wbf \in \Pc$ (as
    related to EUE, cf. Theorem~\ref{thm:eue}) and $-\Ibf + \Wbf \in
    \Hc$ (as related to asymptotic stability,
    cf. Theorem~\ref{thm:ges}) showing a rapid decay with $n$.
    \textbf{(b)} For large $n$, while checking $\Ibf - \Wbf \in \Pc$
    and $-\Ibf + \Wbf \in \Hc$ can become prohibitive, their
    sufficient condition $\rho(|\Wbf|) < 1$ can be checked
    efficiently. The line illustrates a fit of the form $\log
    \rho(|\Wbf|) = \alpha \log n + \beta$ with $\alpha = 1$ and $\beta
    = -1.2$, showing a linear growth of $\rho(|\Wbf|)$ with $n$.
    \textbf{(c, d)} The probabilities of $-\Ibf + \Wbf \in \Hc$ and
    $\rho(|\Wbf|) < 1$ while varying the $\mu$ parameter of the
    log-normal weight distribution over $[-3.5, 3.5]$ for both
    excitatory and inhibitory synapses (blue), only for excitatory
    ones (red), or only for inhibitory ones (yellow). The plots are
    very similar for $\Ibf - \Wbf \in \Pc$ and thus not shown. $n =
    10$ is fixed in these panels.  }
  \label{fig:rho}
\vspace*{-10pt}
\end{figure*}

\section{Conclusions}\label{sec:conclusions}

We adopt a control-theoretic framework, termed hierarchical
  selective recruitment (HSR), as a mechanism to explain goal-driven
  selective attention. Motivated by the organization of the brain, HSR
  employs a hierarchical model which consists of an arbitrary number
  of neuronal subnetworks that operate at different layers of a
  hierarchy. While HSR is not confined to any family of models, we
here use the well-studied linear-threshold rate models to describe the
dynamics at each layer of the hierarchy.  We provide a thorough
analysis of the internal dynamics of each layer. Leveraging the
switched-affine nature of linear-threshold dynamics, we derive several
necessary and sufficient conditions for the existence and uniqueness
of equilibria (corresponding to P-matrices), local and global
asymptotic stability (corresponding to totally-Hurwitz matrices), and
boundedness of trajectories (corresponding to stability of
excitatory-only dynamics). These results set the basis for analyzing
the problem of selective inhibition.  We show that using either
feedforward or feedback inhibition, the dynamical properties of each
layer after inhibition are precisely determined by 
the task-relevant part that remains active. We have also provided
constructive control designs that guarantee selective inhibition under
both schemes.  Among the 
directions for future research, we highlight the study of
output-feedback 
selective inhibition and the analysis of the conditions on
(single-layer) linear-threshold networks that lead to the emergence of
limit cycles and their stability. 

\section*{Acknowledgments}
We would like to thank Dr. Erik J. Peterson for piquing our interest
with his questions on dimensionality control in brain networks and for
introducing us to linear-threshold modeling in neuroscience. We are
further indebted to the anonymous reviewer for suggesting the proof of
the necessity of $\Ibf - \Wbf \in \Pc$ for the EUE of unbounded
networks. This work was supported by NSF Award CMMI-1826065 (EN and
JC) and ARO Award W911NF-18-1-0213 (JC).

\setcounter{section}{0}
\renewcommand{\thesection}{Appendix \Alph{section}}
\section{Additional Lemmas and Proofs}\label{app:proofs}
\renewcommand{\thesection}{\Alph{section}}

\begin{proof}[Proof of Theorem~\ref{thm:eq-candid-P}]
  The necessity is trivial since $\Mbf_\zeros^{-1} \Mbf_\ellb = -(\Ibf
  - \Wbf)$. To prove sufficiency, note that for any $\sigmab \in \{0,
  \ell\}^n$,
  \begin{align}\label{eq:M-sigma-inv}
    \Mbf_\sigmab^{-1} = (\Ibf - \Wbf \Sigmab^\ell) (2 \Sigmab^\ell -
    \Ibf) = (2 \Sigmab^\ell - \Ibf) - \Wbf \Sigmab^\ell.
  \end{align}
  Since nodes can be relabeled arbitrarily, we can assume
  without loss of generality that $\sigmab_1 = [\ellb_{n_1}^T \ \
  \ellb_{n_2}^T \ \ \zeros_{n_3}^T \ \ \zeros_{n_4}^T]^T$ and
  $\sigmab_2 = [\ellb_{n_1}^T \ \ \zeros_{n_2}^T \ \ \ellb_{n_3}^T \ \
  \zeros_{n_4}^T]^T$ where $n_1, \ldots, n_4 \ge 0, \sum_{i = 1}^4 n_i
  = n$. Then, it follows from~\eqref{eq:M-sigma-inv} that
  \begin{align*}
    \Mbf_{\sigmab_1}^{-1} &= \left[\begin{array}{cc:cc} \Ibf_{n_1} - \Wbf_{11} &
        -\Wbf_{12}
        & \zeros & \zeros \\ -\Wbf_{21} & \Ibf_{n_2} - \Wbf_{22} & \zeros & \zeros \\ \hdashline
        -\Wbf_{31} & -\Wbf_{32} & -\Ibf_{n_3} & \zeros \\ -\Wbf_{41} & -\Wbf_{42} & \zeros
        & -\Ibf_{n_4} \end{array}\right],
    \\
    \Mbf_{\sigmab_2}^{-1} &= \left[\begin{array}{cccc} \Ibf_{n_1} - \Wbf_{11} & \zeros
        & -\Wbf_{13} & \zeros \\ -\Wbf_{21} & -\Ibf_{n_2} & -\Wbf_{23} & \zeros \\
        -\Wbf_{31} & \zeros & \Ibf_{n_3} - \Wbf_{33} & \zeros \\ -\Wbf_{41} & \zeros &
        -\Wbf_{43} & -\Ibf_{n_4} \end{array}\right],
  \end{align*}
  where $\Wbf_{ij}$'s are submatrices of $\Wbf$ with appropriate
  dimensions.  Taking the inverse of $\Mbf_{\sigmab_1}^{-1}$ as a
  $2$-by-$2$ block-triangular matrix~\cite[Prop 2.8.7]{DSB:09} (with the indicated blocks), we get
  {
  \addtolength\arraycolsep{-2pt}
  \begin{align*}
    \Mbf_{\sigmab_1} \!\!=\!\! \begin{bmatrix} \begin{bmatrix} \Ibf_{n_1} - \Wbf_{11} &
        -\Wbf_{12}
        \\ -\Wbf_{21} & \Ibf_{n_2} - \Wbf_{22} \end{bmatrix}^{-1} & \zeros \\
      -\!\!\begin{bmatrix} \Wbf_{31} & \Wbf_{32} \\ \Wbf_{41} &
        \Wbf_{42} \end{bmatrix} \!\!\! \begin{bmatrix} \Ibf_{n_1} \!\!-\!\! \Wbf_{11} & -\Wbf_{12} \\
        -\Wbf_{21} & \Ibf_{n_2} \!\!-\!\! \Wbf_{22} \end{bmatrix}^{-1} & -\Ibf_{n_3 + n_4} \end{bmatrix},
  \end{align*}
  }
  so direct multiplication gives
    $\Mbf_{\sigmab_1} \Mbf_{\sigmab_2}^{-1} =\begin{bmatrix} \Bbf_1 & \Bbf_2
        \\ \Bbf_3 & \Bbf_4 \end{bmatrix}$,
  with
  \begin{align*}
    \Bbf_1 &= \begin{bmatrix} \Ibf_{n_1} - \Wbf_{11} & -\Wbf_{12} \\ -\Wbf_{21} & \Ibf_{n_2}
        - \Wbf_{22} \end{bmatrix}^{-1} \begin{bmatrix} \Ibf_{n_1} -
        \Wbf_{11} & \zeros \\ -\Wbf_{21} & -\Ibf_{n_2} \end{bmatrix},
    \\
    \Bbf_2 &= -\begin{bmatrix} \Ibf_{n_1} - \Wbf_{11} & -\Wbf_{12} \\ -\Wbf_{21} &
        \Ibf_{n_2} - \Wbf_{22} \end{bmatrix}^{-1} \begin{bmatrix}
        \Wbf_{13} & \zeros \\ \Wbf_{23} & \zeros \end{bmatrix},
    \\
    \Bbf_3 &= -\begin{bmatrix} \Wbf_{31} & \Wbf_{32} \\ \Wbf_{41} &
        \Wbf_{42} \end{bmatrix} \Bbf_1 + \begin{bmatrix} \Wbf_{31}
        & \zeros \\ \Wbf_{41} & \zeros \end{bmatrix},
    \\
    \Bbf_4 &= -\begin{bmatrix} \Wbf_{31} & \Wbf_{32} \\ \Wbf_{41} &
        \Wbf_{42} \end{bmatrix} \Bbf_2 - \begin{bmatrix} \Ibf_{n_3} -
        \Wbf_{33} & \zeros \\ -\Wbf_{43} & -\Ibf_{n_4} \end{bmatrix}.
  \end{align*}
With this, after some computations one can show that 
  \begin{align}\label{eq:gamma-stars}
    \Mbf_{\sigmab_1} \Mbf_{\sigmab_2}^{-1} = \left[\begin{array}{ccc} \Ibf_{n_1}
        & \begin{array}{cc} \star & \star \end{array} & \zeros
        \\ \begin{array}{c} \zeros \\ \zeros \end{array} &
        \left[\begin{array}{cc}\star & \star \\ \star &
            \star \end{array}\right] & \begin{array}{c} \zeros \\
          \zeros \end{array} \\ \zeros & \begin{array}{cc} \star &
          \star \end{array} & \Ibf_{n_4} \end{array}\right].
  \end{align}
  Let $\Gammab \in
  \real^{(n_2 + n_3) \times (n_2 + n_3)}$ be the bracketed block in %
  $\Mbf_{\sigmab_1} \Mbf_{\sigmab_2}^{-1}$ and
  define
  \begin{align*}
    \Qbf &= \left[\begin{array}{cc} \Qbf_{11} & \Qbf_{12} \\ \Qbf_{21} &
        \Qbf_{22} \end{array}\right] \triangleq \left[\begin{array}{cc} \Ibf_{n_1}
        - \Wbf_{11} & -\Wbf_{12} \\ -\Wbf_{21} & \Ibf_{n_2} -
        \Wbf_{22} \end{array}\right]^{-1},
    \\
    \Rbf &\triangleq \Ibf_{n_3} - \Wbf_{33} - \left[\begin{array}{cc} \Wbf_{31} &
        \Wbf_{32} \end{array}\right] \Qbf \left[\begin{array}{c} \Wbf_{13} \\
        \Wbf_{23} \end{array}\right].
  \end{align*}
  It can be shown that
  \begin{align*}
    \Gammab = -\begin{bmatrix} \Qbf_{22} & \begin{bmatrix} \Qbf_{21} &
        \Qbf_{22} \end{bmatrix} \begin{bmatrix} \Wbf_{13} \\
        \Wbf_{23} \end{bmatrix} \\ -\begin{bmatrix} \Wbf_{31} &
        \Wbf_{32} \end{bmatrix} \begin{bmatrix} \Qbf_{12} \\
        \Qbf_{22} \end{bmatrix} & \Rbf \end{bmatrix}.
  \end{align*}
  Inverting the left-hand-side matrix (below) as a 2-by-2 block
  matrix~\cite[Prop 2.8.7]{DSB:09} (the first block is $\Qbf^{-1}$)
  and applying the matrix inversion lemma~\cite[Cor 2.8.8]{DSB:09} to
  the first block of the result, we obtain
  \begin{align*}
    \left[\!\!\begin{array}{cc:c}
      \Ibf_{n_1} - \Wbf_{11} & -\Wbf_{12} & -\Wbf_{13}
      \\ -\Wbf_{21} & \Ibf_{n_2} - \Wbf_{22} & -\Wbf_{23} \\ \hdashline
      -\Wbf_{31} & -\Wbf_{32} & \Ibf_{n_3} -
      \Wbf_{33} 
    \end{array}\!\!\right]^{-1} \!\!\!\!= 
    \begin{bmatrix}
      \star & \star & \star \\ \star & \hat \Bbf_1 & \hat \Bbf_2 \\
      \star & \hat \Bbf_3 & \hat \Bbf_4
    \end{bmatrix}
  \end{align*}
  where
  \begin{align*}
    \hat \Bbf_1 &= \Qbf_{22} + \begin{bmatrix} \Qbf_{21} &
      \Qbf_{22} \end{bmatrix} \begin{bmatrix} \Wbf_{13} \\
      \Wbf_{23} \end{bmatrix} \Rbf^{-1} \begin{bmatrix} \Wbf_{31} &
      \Wbf_{32} \end{bmatrix} \begin{bmatrix} \Qbf_{12} \\
      \Qbf_{22} \end{bmatrix},
    \\
    \hat \Bbf_2 &=\begin{bmatrix} \Qbf_{21} &
      \!\!\!\!\Qbf_{22} \end{bmatrix} \!\! \begin{bmatrix} \Wbf_{13} \\
      \Wbf_{23} \end{bmatrix} \! \Rbf^{-1},
    \ \hat \Bbf_3 = \Rbf^{-1} \! \begin{bmatrix} \Wbf_{31} &
      \!\!\!\!\Wbf_{32} \end{bmatrix} \!\! \begin{bmatrix} \Qbf_{12} \\
      \Qbf_{22} \end{bmatrix},
  \end{align*}
  and $\hat \Bbf_4 = \Rbf^{-1}$. Therefore, $-\Gammab$ is the principal pivot transform of
  $\begin{bmatrix} \hat \Bbf_1 & \hat \Bbf_2 \\ \hat \Bbf_3 & \hat
    \Bbf_4 \end{bmatrix}$ so if $\Ibf - \Wbf \in \Pc$,
  Lemma~\ref{lem:p-mat}(v) and the block structure of~\eqref{eq:gamma-stars} guarantee that $-\Mbf_{\sigmab_1} \Mbf_{\sigmab_2}^{-1} \in \Pc$.
\end{proof}

The following result is used in the proof of Theorem~\ref{thm:dec-sub2}.

\begin{lemma}\longthmtitle{GES of cascaded
    interconnections}\label{lem:cas}
  Consider the cascaded dynamics
  \begin{align}\label{eq:w_bk-sim}
    \notag \tau \dot \xbf^\ssm &= -\xbf^\ssm, 
    \\
    \tau \dot \xbf^\ssp &= -\xbf^\ssp + [\Wbf^{\ssp\ssm} \xbf^\ssm +
    \Wbf^{\ssp\ssp} \xbf^\ssp + \tilde \dbf^\ssp]_\zeros^{\mbf\ssp},
  \end{align}
  where $\xbf^\ssm \in \real^r$ and $\xbf^\ssp \in \real^{n - r}$. If
  $\Wbf^{\ssp\ssp}$ is such that
  \begin{align}\label{eq:cas2}
    \tau \dot \xbf^\ssp = -\xbf^\ssp + [\Wbf^{\ssp\ssp} \xbf^\ssp + \tilde
    \dbf^\ssp]_\zeros^{\mbf\ssp},
  \end{align}
  is GES for any constant $\tilde \dbf^\ssp \in \real^{n - r}$, then
  the whole dynamics~\eqref{eq:w_bk-sim} is also GES for any constant
  $\tilde \dbf^\ssp$.
\end{lemma}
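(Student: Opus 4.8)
The plan is the classical cascade argument: the $\xbf^\ssm$-subsystem is linear and GES, the $\xbf^\ssp$-subsystem equals the reduced dynamics~\eqref{eq:cas2} driven by an exponentially decaying perturbation, and the two are combined through a converse-Lyapunov / vanishing-perturbation estimate.

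First I would note that $\xbf^\ssm(t) = e^{-t/\tau}\xbf^\ssm(0)$, so $\|\xbf^\ssm(t)\| \le e^{-t/\tau}\|\xbf^\ssm(0)\|$ for all $t \ge 0$. Fix $\tilde\dbf^\ssp \in \real^{n-r}$, let $\xbf^\ssp_*$ be the unique equilibrium of~\eqref{eq:cas2}, and write $\ybf := \xbf^\ssp - \xbf^\ssp_*$ and $g(\xbf^\ssp) := -\xbf^\ssp + [\Wbf^{\ssp\ssp}\xbf^\ssp + \tilde\dbf^\ssp]_\zeros^{\mbf\ssp}$, so that~\eqref{eq:cas2} reads $\tau\dot\xbf^\ssp = g(\xbf^\ssp)$. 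Since $g$ is globally Lipschitz (a nonexpansive projection composed with an affine map, as noted below~\eqref{eq:dyn}) and~\eqref{eq:cas2} is GES relative to $\xbf^\ssp_*$ by hypothesis, the converse Lyapunov theorem for exponential stability~\cite[Thm 4.14]{HKK:02} — whose construction $V(\xbf^\ssp) = \int_0^{\delta}\|\phi(s;\xbf^\ssp) - \xbf^\ssp_*\|^2\,ds$, for $\delta$ large enough, needs only Lipschitz continuity of the vector field — furnishes a locally Lipschitz $V:\real^{n-r}\to\realnonneg$ and constants $c_1,\dots,c_4>0$ with $c_1\|\ybf\|^2 \le V(\xbf^\ssp) \le c_2\|\ybf\|^2$, $\nabla V(\xbf^\ssp)\cdot g(\xbf^\ssp) \le -c_3\|\ybf\|^2$, and $\|\nabla V(\xbf^\ssp)\| \le c_4\|\ybf\|$ (a.e., since $V$ is locally Lipschitz).

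Next I would write the $\xbf^\ssp$-equation of~\eqref{eq:w_bk-sim} as $\tau\dot\xbf^\ssp = g(\xbf^\ssp) + \pbf(t)$, where $\pbf(t) := [\Wbf^{\ssp\ssm}\xbf^\ssm(t) + \Wbf^{\ssp\ssp}\xbf^\ssp + \tilde\dbf^\ssp]_\zeros^{\mbf\ssp} - [\Wbf^{\ssp\ssp}\xbf^\ssp + \tilde\dbf^\ssp]_\zeros^{\mbf\ssp}$; nonexpansiveness of the projection onto $[\zeros,\mbf\ssp]$ gives $\|\pbf(t)\| \le \|\Wbf^{\ssp\ssm}\|\,\|\xbf^\ssm(t)\| \le \|\Wbf^{\ssp\ssm}\|\,e^{-t/\tau}\|\xbf^\ssm(0)\|$. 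Along~\eqref{eq:w_bk-sim}, $\tau\dot V = \nabla V\cdot(g + \pbf) \le -c_3\|\ybf\|^2 + c_4\|\ybf\|\,\|\pbf(t)\| \le -(c_3/c_2)V + (c_4/\sqrt{c_1})\sqrt{V}\,\|\pbf(t)\|$, so with $W := \sqrt{V}$ we obtain the linear differential inequality $\tau\dot W \le -(c_3/2c_2)W + (c_4/2\sqrt{c_1})\|\pbf(t)\|$. The comparison lemma then yields $W(t) \le e^{-at}W(0) + b\int_0^t e^{-a(t-s)}e^{-s/\tau}\,ds$ for suitable $a>0$ and $b$ proportional to $\|\xbf^\ssm(0)\|$; since the convolution of two decaying exponentials decays exponentially (with an extra factor $1+t$ in the resonant case $a = 1/\tau$) and $W(0) \le \sqrt{c_2}\,\|\ybf(0)\|$, this gives $\|\ybf(t)\| \le \kappa(1+t)e^{-\lambda t}\big(\|\xbf^\ssm(0)\| + \|\ybf(0)\|\big)$ for some $\kappa,\lambda>0$ independent of the initial state. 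Combined with the exponential bound on $\xbf^\ssm(t)$, this is a GES estimate for~\eqref{eq:w_bk-sim} relative to $(\zeros,\xbf^\ssp_*)$, which is the unique equilibrium (the first equation of~\eqref{eq:w_bk-sim} forces $\xbf^\ssm = \zeros$ and the second then reduces to~\eqref{eq:cas2}).

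The main obstacle is the non-smoothness of the linear-threshold vector field, which rules out a direct appeal to the $C^1$ converse Lyapunov theorem and to the classical chain rule. I would deal with this by exploiting that global Lipschitz continuity of $g$ is all that the converse-Lyapunov construction needs (solutions are unique and Lipschitz-continuous in the initial data), so that $V$ is locally Lipschitz and the bounds above hold in the upper-Dini sense — exactly what the comparison lemma needs. An alternative route, sidestepping converse theorems altogether, is to observe that a system which is exponentially convergent for all constant inputs is ISS in the sense of~\cite{AP-NvdW-HN:05}, and then to invoke the standard fact that the cascade of a GES system with an exponentially-ISS one is GES; I keep the vanishing-perturbation argument because it is self-contained.
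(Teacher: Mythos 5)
Your proposal is correct and shares its skeleton with the paper's proof: both construct a converse Lyapunov function $V^\ssp$ for the reduced dynamics~\eqref{eq:cas2} (the paper invokes \cite[Thm A.1]{EN-JC:18-tacII}, a version of the converse theorem requiring only Lipschitz continuity, which is precisely the non-smoothness issue you flag and resolve in the same spirit), and both bound the coupling term through the $1$-Lipschitz property of $[\,\cdot\,]_\zeros^{\mbf\ssp}$, giving the perturbation estimate $\|\pbf(t)\| \le \|\Wbf^{\ssp\ssm}\|\,\|\xbf^\ssm(t)\|$. Where you diverge is the closing step. The paper splits the cross term $c_3\|\xbf^\ssp\|\,\|\Wbf^{\ssp\ssm}\xbf^\ssm\|$ with Young's inequality and then adds the multiple $(c_3^2\|\Wbf^{\ssp\ssm}\|^2/2c_4)\|\xbf^\ssm\|^2$ to $V^\ssp$ to obtain a single quadratically bounded Lyapunov function for the full cascade~\eqref{eq:w_bk-sim}, concluding directly from \cite[Thm 4.10]{HKK:02}; nothing is integrated and no resonance arises. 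You instead pass to $W=\sqrt{V}$, derive a scalar linear differential inequality forced by $e^{-t/\tau}$, and integrate it via the comparison lemma, which obliges you to treat the resonant case $a=1/\tau$ and absorb the resulting $(1+t)e^{-\lambda t}$ factor into $Ce^{-\lambda t/2}$ before the estimate qualifies as GES (you do note this, and it is harmless). Your route buys an explicit trajectory bound with computable rate; the paper's buys a shorter argument and an actual Lyapunov function for the interconnection. Your alternative ISS-based remark also mirrors the paper's own pointer to \cite{AP-NvdW-HN:05} in the proof of Theorem~\ref{thm:ges}(i), so both of your suggested routes are consistent with what the authors do.
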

\begin{proof}
  We only prove the result for $\tilde \dbf^\ssp = \zeros$. This is
  without loss of generality, since for $\tilde \dbf^\ssp \neq
  \zeros$, we can apply the change of variables $\xib = \xbf -
  \xbf^*$, where $\xbf^*$ is the equilibrium corresponding to input
  $[\zeros^T \ (\tilde \dbf^\ssp)^T]^T$, and shift the equilibrium to
  the origin.  Since~\eqref{eq:cas2} is GES, \cite[Thm
  A.1]{EN-JC:21-tacII} guarantees that there exists $\xbf^\ssp \mapsto
  V^\ssp(\xbf^\ssp)$ such that
  \begin{subequations}
    \begin{align}
      &c_1 \|\xbf^\ssp\|^2 \le V^\ssp(\xbf^\ssp) \le c_2 \|\xbf^\ssp\|^2,
      \\
      &\Big\|\frac{\partial V^\ssp(\xbf^\ssp)}{\partial \xbf^\ssp}\Big\| \le
      c_3 \|\xbf^\ssp\|, \label{eq:conv-Lyap2}
    \end{align}
    for some $c_1, c_2, c_3> 0$, and, if $\xbf^\ssp(t)$ is the solution
    of~\eqref{eq:cas2},
    \begin{align}
      \tau \frac{d}{d t} V^\ssp(\xbf^\ssp(t)) \le -c_4
      \|\xbf^\ssp\|^2, \label{eq:conv-Lyap3}
    \end{align}
  \end{subequations}
  for some $c_4 > 0$. Since $[\cdot]_\zeros^{\mbf\ssp}$ is Lipschitz
  continuous, it follows from~\eqref{eq:conv-Lyap2}
  and~\eqref{eq:conv-Lyap3} that if $\xbf^\ssp(t)$ is the solution
  of~\eqref{eq:w_bk-sim},
  \begin{align*}
    \tau \frac{d}{d t} V^\ssp(\xbf^\ssp(t)) &\le -c_4 \|\xbf^\ssp\|^2 + c_3
    \|\xbf^\ssp\| \|\Wbf^{\ssp\ssm} \xbf^\ssm\|
    \\
    &\le -\frac{c_4}{2} \|\xbf^\ssp\|^2 + \frac{c_3^2 \|\Wbf^{\ssp\ssm}\|^2}{2
      c_4} \|\xbf^\ssm\|^2,
  \end{align*}
  where the second inequality follows from Young's
  inequality~\cite[p. 466]{HKK:02}.  Now, let $V(\xbf) = (c_3^2
  \|\Wbf^{\ssp\ssm}\|^2/2 c_4) \|\xbf^\ssm\|^2 + V^\ssp(\xbf^\ssp)$.
  It is straightforward to verify that $V$ satisfies all the
  assumptions of~\cite[Thm 4.10]{HKK:02} with $a = 2$.
\end{proof}

\vspace*{-2ex}

\begin{IEEEbiography}[{\includegraphics[width=1in, height=1.25in, clip, keepaspectratio]{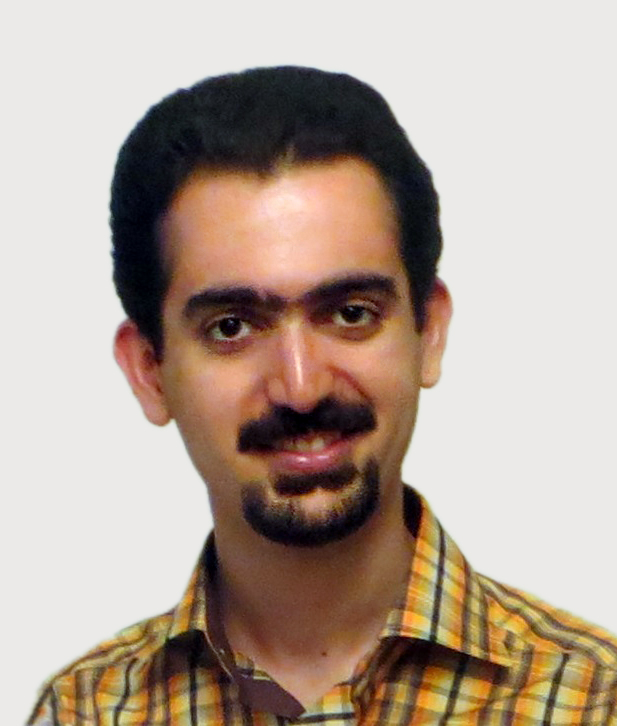}}]{Erfan Nozari}
  received his B.Sc. degree in Electrical Engineering-Control in 2013
  from Isfahan University of Technology, Iran and Ph.D. in Mechanical
  Engineering and Cognitive Science in 2019 from University of
  California San Diego. He is currently a postdoctoral researcher at
  the University of Pennsylvania Department of Electrical and Systems
  Engineering. He has been the (co)recipient of the 2019 IEEE
  Transactions on Control of Network Systems Outstanding Paper Award,
  the Best Student Paper Award from the 57th IEEE Conference on
  Decision and Control, the Best Student Paper Award from the 2018
  American Control Conference, and the Mechanical and Aerospace
  Engineering Distinguished Fellowship Award from the University of
  California San Diego. His research interests include dynamical
  systems and control theory and its applications in computational and
  theoretical neuroscience and complex network systems.
\end{IEEEbiography}

\begin{IEEEbiography}
  [{\includegraphics[width=1in,height=1.25in,clip,keepaspectratio]{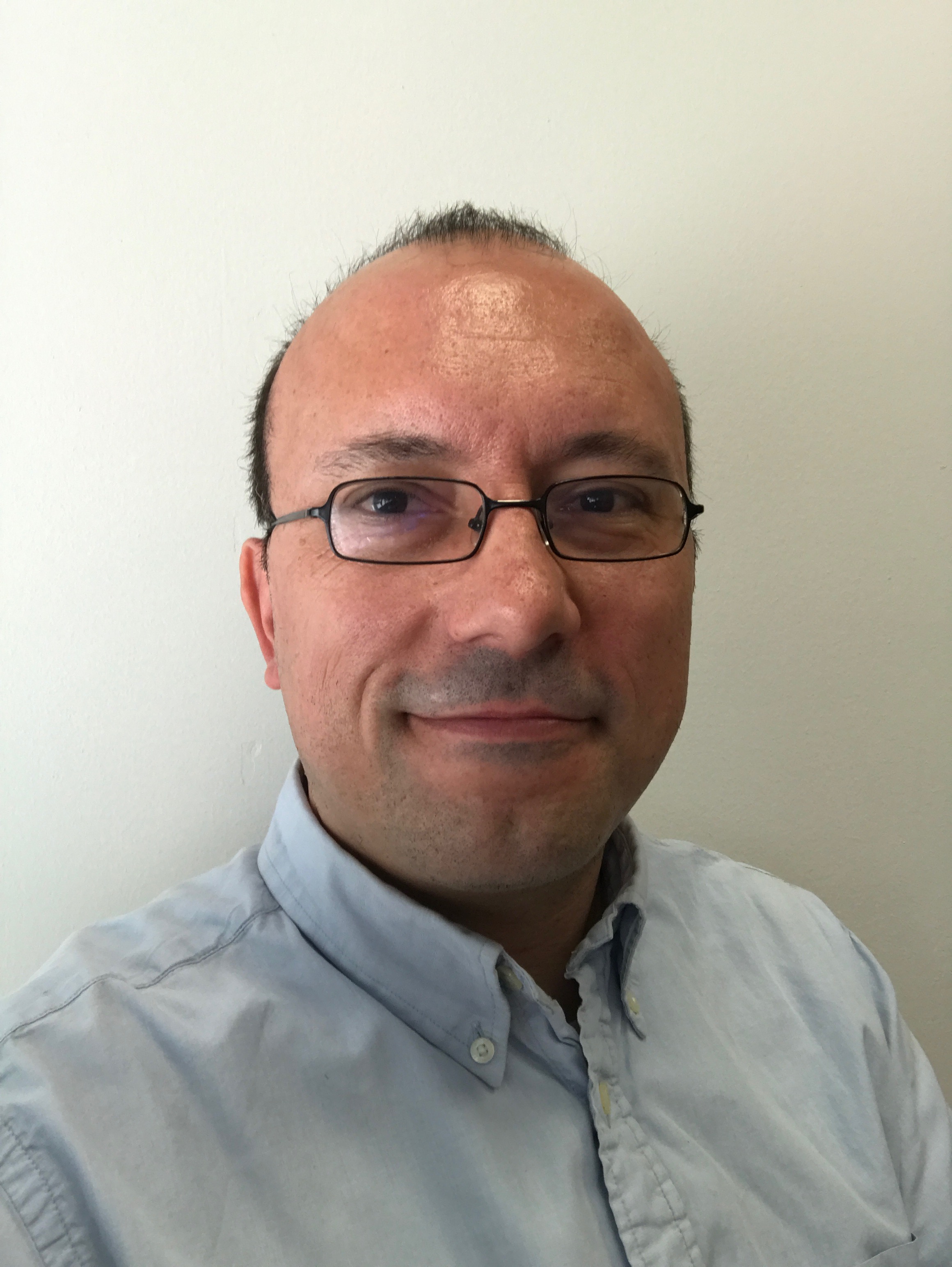}}]
  {Jorge Cort\'es} (M'02-SM'06-F'14) received the Licenciatura degree
  in mathematics from Universidad de Zaragoza, Zaragoza, Spain, in
  1997, and the Ph.D. degree in engineering mathematics from
  Universidad Carlos III de Madrid, Madrid, Spain, in 2001. He held
  postdoctoral positions with the University of Twente, Twente, The
  Netherlands, and the University of Illinois at Urbana-Champaign,
  Urbana, IL, USA. He was an Assistant Professor with the Department
  of Applied Mathematics and Statistics, University of California,
  Santa Cruz, CA, USA, from 2004 to 2007. He is currently a Professor
  in the Department of Mechanical and Aerospace Engineering,
  University of California, San Diego, CA, USA. He is the author of
  Geometric, Control and Numerical Aspects of Nonholonomic Systems
  (Springer-Verlag, 2002) and co-author (together with F. Bullo and
  S. Mart{\'\i}nez) of Distributed Control of Robotic Networks
  (Princeton University Press, 2009).  He is a Fellow of IEEE and
  SIAM. His current research interests include distributed control and
  optimization, network science, opportunistic state-triggered
  control, reasoning and decision making under uncertainty, and
  distributed coordination in power networks, robotics, and
  transportation.
\end{IEEEbiography}

\end{document}